\documentclass[english,11pt]{article}
\usepackage[T1]{fontenc}
\usepackage[utf8]{inputenc}
\usepackage{comment}
\usepackage[english]{babel}
\usepackage{a4wide}

\usepackage{amsmath,amssymb,amsopn,dsfont,marvosym,stmaryrd}
\usepackage{threeparttable,multirow,adjustbox}
\usepackage[]{placeins}
\usepackage{flafter}
\usepackage{booktabs}
\usepackage{pdflscape}
\usepackage{longtable}
\usepackage{booktabs}
\usepackage{fancyhdr}
\usepackage{float} 
\usepackage{rotating}
\usepackage{tikz}
\usepackage{pgfplots}
\usepgfplotslibrary{groupplots}
\usepgfplotslibrary{fillbetween}
\pgfplotsset{compat=1.18}
\usepackage{verbatim}
\usepackage{eurosym}
\usepackage{array}
\usepackage{xr}
\usepackage{graphicx}
\usepackage{subcaption}
\usepackage{natbib}
\usepackage{makecell}

\usepackage[normalem]{ulem}
\usepackage{xcolor}

\usepackage{chngcntr}

\definecolor{addblue}{RGB}{0,70,180}
\definecolor{delred}{RGB}{180,30,30}

\newtheorem{thm}{Theorem}[section]

\newtheorem{lem}{Lemma}[section]
\newtheorem{prop}[thm]{Proposition}
\newtheorem{hyp}{Assumption}

\definecolor{darkblue}{RGB}{0,55,160}
\definecolor{darkgreen}{RGB}{0,110,60}
\definecolor{burgundy}{RGB}{140,20,40}
\definecolor{lightgray}{RGB}{245,245,245}
\definecolor{medgray}{RGB}{200,200,200}
\definecolor{boxblue}{RGB}{230,240,255}
\definecolor{boxgreen}{RGB}{230,248,235}
\definecolor{boxamber}{RGB}{255,248,225}
\definecolor{boxcoral}{RGB}{255,235,230}

\usepackage{layout}
\usepackage[colorlinks=true,	
linkcolor=red,
urlcolor=blue,
citecolor=blue]{hyperref}

\usepackage[margin=1in]{geometry}

\usepackage{setspace}
\title{\vspace{-1.5cm} 
Targeting Support Using Job Seekers’ Biases: \\ A Randomized Experiment\thanks{\footnotesize{Cr\'epon: CREST, crepon@ensae.fr; Frot: France-Travail, CREST, aurelien.frot@ensae.fr; Gaillac: University of Geneva, GSEM: christophe.gaillac@unige.ch. This paper is the result of a partnership with France Travail, the Public Employment Service in France, and does not represent their views. We thank Anita Bonnet, Cyril Nouveau, and Chantal Vessereau, for their support and feedback throughout this project. This research was supported by the Chaire Sécurisation des Parcours Professionels. Authors retained full intellectual freedom throughout this process, all errors are our own.  We thank  Roland Rathelot, Arne Uhlendorff, as well as seminar participants at INSEE, the chaire travail seminar of PSE for useful comments and suggestions.}}}

\vspace{0.5cm}

\author{Bruno Crépon, Aurélien Frot, Christophe Gaillac}

\begin{document}

\maketitle

\begin{abstract}
\noindent Most digital job-search assistance encourages unemployed workers to broaden their search toward related occupations, targeting one important source of search inefficiency: insufficient occupational diversification. Our analysis suggests that the relevant margin of adjustment depends on the underlying search problem. Building on a detailed analysis of job seekers' beliefs and search behavior, we identify a large group of pessimistic workers for whom the main constraints are low search effort and low aspirations, rather than insufficient occupational diversification. This diagnosis points to an unexpected intervention: rather than encouraging these workers to search in new occupations, we encourage them to search more intensively and apply for better-paying jobs within the occupations they already consider. We evaluate this diagnosis-based intervention, alongside a standard occupational recommendation, in a large-scale randomized experiment conducted with the French Public Employment Service. The motivational intervention increases search effort, raises reservation wages, and improves reemployment outcomes along the predicted margins. Occupational recommendations, by contrast, primarily benefit workers whose search problem lies in the allocation of attention across occupations and operate by activating existing perceptions rather than correcting beliefs. More broadly, our findings show how digital platforms can combine subjective expectations, behavioral data, targeted interventions, and randomized experimentation to diagnose job seekers' needs and iteratively improve intervention design.

\end{abstract}

\noindent \textbf{JEL classification:} D83, D84, J64, C4, C93\\ 
\noindent \textbf{Keywords:} Job Search, Subjective Expectations, Machine Learning, Decision Making, Information Design.
\newpage

\section{Introduction}

The rapid development of online job search platforms and the increasing availability of detailed data on job seekers have profoundly transformed the way job search assistance can be designed and delivered. Public Employment Services (PES) and private platforms can now provide personalized interventions at scale, drawing on increasingly rich digital traces and labor market data to guide job seekers through increasingly complex labor markets. A prominent and highly successful class of such interventions consists in recommending alternative occupations or opportunities related to individuals' initial targets, with the aim of improving the allocation of search effort. A growing body of experimental evidence shows that these recommendations can substantially affect job search behavior and, in some settings, improve employment outcomes \citep{belot2019providing, belot2025advising, bandiera2025search, altmann2022direct}. These interventions have established occupational diversification as a powerful way of guiding job seekers through digital platforms. Beyond their immediate policy value, digital job search platforms also create new opportunities to learn about job search behavior and improve the design of job search assistance. As emphasized by \citet{kircher2022schumpeter}, they allow researchers and practitioners not only to evaluate interventions through randomized experimentation, but also to iteratively refine personalized job search assistance by learning how individuals respond to different forms of guidance.

\medskip

To date, most digital job search interventions have personalized the information delivered to job seekers while relying on a common intervention strategy. Occupational recommendations and other information-based interventions may provide different forms of guidance, but they are typically delivered uniformly across broad groups of job seekers. Despite their differences, these interventions largely seek to address the same search problem: helping workers identify opportunities they would not otherwise consider. At the same time, a growing body of evidence has revealed substantial heterogeneity in both labor market prospects and subjective beliefs. Differences in job-finding probabilities are large, persistent, and increasingly predictable \citep{mueller2025nature}, while job seekers hold systematically biased and heterogeneous beliefs about key features of the search environment \citep{manski2004measuring, spinnewijn2015unemployed, mueller2021job}. Building on this literature, our companion paper \citep{crepon2025_1} shows that these biases are not isolated misperceptions but form coherent belief profiles: biases about wage distributions, job offer arrival rates, and reemployment prospects are strongly correlated and jointly predict distinct search behaviors. These findings suggest that designing effective job search assistance requires a prior step: understanding what actually prevents different job seekers from returning to work. If different workers face different search problems, then a common intervention strategy is unlikely to be optimal. The relevant question is no longer only what information to provide, but which search friction the intervention is intended to address.

\medskip

Building on this diagnostic, our companion paper \citet{crepon2025_1} shows that different belief profiles are associated with different search problems. In particular, we identify a large group of job seekers, whom we label pessimists, who underestimate their reemployment prospects, perceive low returns to search effort, and set low reservation wages. This diagnosis leads to a striking implication. For these individuals, the binding constraint is not where to search but how. Rather than encouraging them to broaden their search toward alternative occupations, the diagnosis suggests helping them search more intensively, raise their aspirations, and target higher-paying jobs within the occupations they already consider. This points to a need-based approach to job search assistance, in which interventions are tailored not only in terms of the information they provide but also to individuals' underlying beliefs and behavioral constraints.%

\medskip

This paper implements and tests interventions derived from the behavioral diagnosis developed in our companion paper in the context of a large-scale randomized experiment ($n=52{,}465$) conducted in partnership with France Travail. We leverage prior survey data to construct predictive models of individuals' belief profiles, estimated outside the experimental sample, and use these predictions, together with administrative information, to classify job seekers into groups characterized by different search problems identified in our companion paper. We then compare two intervention strategies that address different search frictions. The first is a standard \textit{occupational} recommendation, suggesting alternative occupations close to the individual's preferred one. The second is a novel \textit{motivational} intervention, specifically designed for pessimistic job seekers: rather than encouraging search diversification, it directly targets search intensity and selectivity by encouraging higher effort, raising aspirations, and providing tailored reference points on attainable wages. We interpret both interventions in a simple behavioral search framework that distinguishes two margins of adjustment: an attention margin, which governs the set of occupations actively considered, and an aspiration margin, which shapes effort and reservation wages within that set.

\medskip

The randomized experiment strongly supports the behavioral diagnosis developed in our companion paper. The motivational intervention, designed to address the search problems identified for pessimistic job seekers, generates behavioral responses that closely match the predicted margins of adjustment. It increases search effort by 0.9 hours per week (concentrated among low-effort workers) and raises reservation wages by 1.9\% (concentrated among workers below the disclosed peer benchmark). Consistent with this diagnosis, these behavioral changes translate into improved labor-market outcomes along the same dimensions. For individuals with initially low search effort, the intervention raises reemployment probabilities by 2.9 percentage points; for those with initially low reservation wages, it improves the quality of jobs obtained, raising the probability of accepting a wage above the baseline reservation wage by 5.3 percentage points. Rather than encouraging these workers to lower their ambitions or broaden their occupational search, the intervention proves effective by increasing search effort while encouraging them to target higher-paying jobs within the occupations they already consider.

\medskip

The benchmark occupational recommendation reveals a different pattern. Among job seekers with favorable or accurate beliefs, occupational recommendations increase the number of applications submitted, particularly toward the suggested occupations, raising applications to posted vacancies by 11\% and to suggested occupations by 21\%. However, these changes in search behavior do not translate into measurable improvements in reemployment outcomes.

\medskip

A central contribution of the paper is to clarify the mechanisms underlying these effects. A longstanding question in the literature  is whether information interventions operate by updating workers' beliefs or by changing how existing perceptions are translated into search behavior. We find only limited evidence of belief updating: although behavioral responses are substantial, measured beliefs change only marginally. The effects of occupational recommendations are concentrated among individuals who already perceive opportunities outside their target occupation, suggesting that these interventions activate existing perceptions rather than generate new beliefs. For the motivational arm, reservation wages rise only for workers below the disclosed peer benchmark, and search effort rises only for those below the disclosed effort norm. This is consistent with a change in aspiration references, rather than a correction of beliefs about the wage distribution or the returns to search. Taken together, these findings suggest that the main friction does not lie in a lack of information per se, but in the difficulty of translating existing perceptions into effective search strategies. Beyond explaining the effects of the interventions, these findings illustrate how randomized experimentation can be used to distinguish among competing behavioral mechanisms and, in doing so, inform the design of future job search assistance.

\medskip

Interestingly, although pessimistic job seekers largely ignore the suggested occupational recommendations, this intervention nevertheless improves their reemployment prospects. Their response takes the form of an increase in spontaneous applications, revealing that occupational recommendations can induce behavioral adjustments beyond the margin they were designed to target. These gains in employment are also accompanied by a shift in the type of jobs obtained, with a higher incidence of fixed-term contracts. This suggests that interventions may improve employment outcomes even when they do not operate through the behavioral margin they were designed to affect, highlighting the importance of understanding not only whether interventions work, but also how they shape search behavior and labor-market outcomes.

\medskip

More broadly, these findings illustrate how digital platforms can combine administrative records, targeted online surveys eliciting subjective expectations, and randomized experimentation to diagnose job seekers' needs, design interventions around these diagnoses, and refine them by revealing the behavioral pathways through which they affect job search and reemployment.

\subsection*{Related Literature}

\textbf{Job search recommendations.} We build on a literature that provides job seekers with information to redirect their search. These interventions address different informational frictions: expanding the set of occupations considered \citep{belot2019providing, belot2025advising,belot2026long,altmann2022direct}, reorganizing attention across vacancies \citep{le2023can}, or making visible firms outside the initial choice set \citep{behaghel2024potential}. This literature shares an implicit assumption: once information is accessible or salient, workers adjust relatively smoothly. We show that this assumption fails for a substantial share of job seekers. Occupational recommendations work when the binding margin is allocation across occupations, but not when constraints lie on effort or selectivity. A recommendation must match the margin along which workers can actually adjust.

\medskip

\textbf{Biases in subjective beliefs and information provision.} Our paper builds on the literature that documents biases in job seekers' subjective beliefs about reemployment outcomes \cite{mueller2021job,spinnewijn2015unemployed,van2024predicting} or perceptions of the wage distribution of job offers or postings \citep{conlon2018labor, caliendo2023accuracy,altmann2025}. The closest comparison is \citet{altmann2025}, who provide German job seekers with reference-wage information and find higher reservation wages and slower reemployment. We differ in two ways: our disclosure uses machine learning to target below-benchmark pessimists workers, avoiding demotivation of those already above, and it is embedded in a broader motivational intervention that also provides effort norms.

\medskip

\textbf{Heterogeneity in job search.} A growing literature documents that labor market trajectories are highly heterogeneous and largely predictable \citep{mueller2025nature}, and that this heterogeneity is reflected in biased subjective expectations \citep{manski2004measuring,spinnewijn2015unemployed,mueller2021job}. Our companion paper \citep{crepon2025_1} interprets this heterogeneity structurally: expectation biases are multivariate differences in perceived arrival rates, wage distributions, and returns to effort that translate into differentiated behavior. The present paper adds that beliefs do not mechanically become behavior: workers can hold favorable perceptions without acting on them, and interventions can change behavior without revising beliefs. This resonates with evidence that policy signals are not behaviorally neutral: \citet{bandiera2025search} find that employer contact can discourage workers when silence is read as a negative signal, and \citet{banerjee2023learning} show that labor-market exposure can lead the initially optimistic to revise ambitions downward.\footnote{This is consistent with \citet{he2023updating}, who show job seekers' beliefs respond strongly to aggregate conditions.} We document the symmetric case: a targeted intervention reactivates effort and selectivity among the initially pessimistic.

\medskip

\textbf{Nudges and aspiration gap.} Both treatments are nudges, \emph{i.e.}, low-cost, choice-preserving information, and instantiate two well-studied mechanisms: social-norm provision \citep{allcott2011social} and aspiration-setting \citep{ray2006aspirations, genicot2020aspirations}. Our results speak to the ``voltage drop'' at scale \citep{dellavigna2022rcts}: \citet{bo2025scaling} find nudge efficacy falls among low-motivation individuals. We provide a structural reason: low motivation reflects pessimistic beliefs, and such workers receive the wrong nudge rather than being inherently unresponsive; a motivational message recovers positive effects. The pessimist profile mirrors mechanisms from behavioral development economics: aspirations traps \citep{ray2006aspirations}, information-driven aspiration shifts \citep{jensen2010perceived, beaman2012female}, cognitive bandwidth constraints \citep{mullainathan2013scarcity, mani2013poverty}, implementation frictions \citep{duflo2011nudging}, and locus of control \citep{caliendo2015locus, cobbclark2015locus}. The attention channel parallels \citet{hanna2014learning}'s ``learning through noticing'': agents change behavior on an already-valued dimension when attention costs fall, without new information or belief revision. Our identification complements \citet{dellavigna2017reference} and \citet{marinescu2021}, who exploit benefit exhaustion as a downward shift in reference income and show that job seekers react to perceived wage prospects. In contrast, we identify the effects of an upward shift in reference income through randomized disclosure.

\medskip

\textbf{Information design.} The paper connects to information design \citep{kamenica2011bayesian}: the PES acts as a sender who observes job seekers' belief profiles and tailors disclosure accordingly. The key departure from standard Bayesian persuasion is that behavioral responses do not require belief updating: occupational recommendations operate through attention activation, motivational recommendations through aspiration shifts. Optimal disclosure in labor markets must account not only for heterogeneous beliefs but for heterogeneous behavioral frictions.

\medskip 

The paper is organized as follow. Section~\ref{sec:beliefs} documents belief heterogeneity and the typology of job seekers. Section~\ref{sec:model} introduces the conceptual framework. Section~\ref{sec:treatment} describes the treatments. Section~\ref{sec:experiment} presents the experimental design and data. Section~\ref{sec:occ} reports results for the occupational arm. Section~\ref{sec:motivational_impact} reports results for the motivational arm. Section~\ref{sec:employment} examines reemployment outcomes. Section~\ref{subsec:interpretation} maps model predictions to the evidence. We conclude with policy implications.

\medskip

The experiment was pre-registered at the AEA RCT Registry (AEARCTR-0012218).\footnote{See \url{https://www.socialscienceregistry.org/trials/12218}.} The study protocol received ethical approval from the Institutional Review Board of the Paris School of Economics.

\section{Beliefs and Typology of Search Behavior}
\label{sec:beliefs}

This section documents job seekers’ beliefs about the labor market, the extent to which these beliefs are distorted relative to realized outcomes, and how such distortions relate to job search behavior. We leverage a series of repeated surveys conducted by France Travail, linked to administrative data, which are described in detail in \citet{crepon2025_1}. That study establishes the magnitude, heterogeneity, and persistence of belief distortions regarding reemployment prospects and key labor market parameters.

\medskip

Building on this framework, we use earlier survey waves combined with realized reemployment outcomes to construct a typology of job seekers based on systematic forecast errors. The prediction algorithm underlying this typology is trained on pre-2023 survey waves and subsequently applied to the October 2023 survey.\footnote{Our study also relies on rich administrative data from the PES. Appendix Section \ref{app:admin_data} provides additional details.}

\medskip

Throughout this section, we focus on the October 2023 wave to characterize belief heterogeneity and forecast errors in the current labor market, abstracting from the experimental interventions introduced later. The analysis is descriptive and diagnostic: it motivates the targeted informational interventions presented in Section~\ref{sec:treatment} and provides the empirical foundation for the experimental design described in Section~\ref{sec:experiment}.

\subsection{Survey on job seekers' beliefs and expectations}
\label{subsec:surveys}

France Travail conducts repeated surveys to measure job seekers’ beliefs, expectations, and search behavior. The survey design and question wording are stable across waves, ensuring consistent measurement over time and enabling linkage to subsequent labor market outcomes observed in administrative data. We describe here the main variables used in our analysis.

\paragraph{Reemployment expectations}

A central feature of the survey is the elicitation of subjective reemployment probabilities at multiple horizons. Job seekers report their perceived probability of finding a job within different time frames, conditional on standard employment criteria, through the following question, for $t \in \{1, 3, 6, 12, 24\}$:

\medskip

\textit{``What are your chances of finding a job within $t$ month(s), on a permanent or fixed-term contract of at least one month?''}.

\medskip

These responses provide a direct measure of perceived reemployment prospects across horizons. Combined with administrative data on realized outcomes, they allow us to quantify forecast errors at the individual level.

\paragraph{Beliefs about labor market parameters}

The surveys also elicit beliefs about key labor market parameters. First, the survey measures beliefs about application success across occupations. Job seekers report the perceived probability of receiving an offer when applying to vacancies in their main target occupation and in alternative occupations. The precise framing of these questions is the following:

\begin{quote}
    ``\textit{If you apply for a position as a "job position", what are the chances that this will lead to a job offer?}``
\end{quote}

In addition to the target occupation the same question is asked for two alternative occupations that are close to the target. These are defined based on occupations of interest declared by job seekers who share the same target occupation.

\medskip

Second, respondents report their perceived probability of receiving at least one job offer within the next three months under alternative levels of weekly search effort (10, 20, or 30 hours). These responses allow us to recover perceived job offer arrival rates and perceived returns to search effort. We define the perceived return to search as the average increase in this subjective probability associated with an additional hour of weekly search effort.

\medskip

Finally, respondents report beliefs about the wage distribution of job offers. For each individual, we compute the median wage among vacancies matching their target occupation and region, and ask respondents to assess the probability that a job offer would exceed this benchmark.

\paragraph{Declared search parameters}

The surveys further collect information on key job search parameters. Search intensity is measured as self-reported weekly time spent searching for a job, using a slider ranging from 0 to 35 hours.\footnote{With an option to indicate in writing if more than 35 hours. This measure is positively correlated with the number of applications submitted on the France Travail platform.} The exact framing of this questions is the following: 

\begin{quote}
    ``\textit{On average, how many hours a week do you spend looking for a job?}``
\end{quote}

Respondents also report their reservation wage, defined as the minimum gross wage they would be willing to accept for their next job: 

\begin{quote}
    ``\textit{What is the minimum gross monthly salary you would accept for the job you are willing to
take?}``.
\end{quote}

\subsection{Reemployment expectations, forecast errors, and prediction}
\label{subsec:forecast_errors}

\paragraph{Measuring reemployment forecast errors}

Subjective reemployment expectations are elicited at multiple horizons. We focus on the three-month horizon, which is both salient for job seekers and closely aligned with the administrative definition of reemployment used by France Travail. Realized reemployment is measured using a three-month indicator that combines two administrative sources (the \textit{Indicateur de retour à l’emploi} and the \textit{Déclarations préalables à l’embauche}, which record firms’ hiring declarations); it closely tracks the institutional definition of reemployment used by the French Public Employment Service, and its construction is detailed in Appendix~\ref{app:admin_data}. Because individuals are surveyed in October 2023, reemployment spells starting in October are included in the three-month window.

\medskip

We define a naive estimate of the individual reemployment forecast error as the difference between the subjective probability of reemployment within three months and the corresponding realized outcome. At the individual level, this object is a noisy estimate of the true individual reemployment forecast error, as realizations are binary while expectations are probabilistic.
\medskip

Our object of interest is instead the gap between subjective expectations and the underlying (objective) reemployment probability. While this objective probability is not directly observed, it can be consistently estimated at the subgroup level: under standard assumptions, the average realized reemployment rate within a group provides an unbiased estimator of the average objective probability for that group.

\medskip

It follows that averaging individual forecast errors within sufficiently homogeneous subgroups recovers average belief distortions. In this sense, positive (negative) average forecast errors correspond to over-optimism (pessimism).

\paragraph{A prediction approach to forecast errors}

We recast the measurement of belief distortions as an ex ante prediction problem. Specifically, we seek to estimate conditional (``local'') forecast errors, defined as the difference between subjective expectations and objective reemployment probabilities, using only information available prior to realization.

\medskip

To do so, we exploit earlier survey waves in which both expectations and realized outcomes are observed. Since realized outcomes provide an unbiased signal of objective probabilities conditional on observables, we can use these data to train a prediction algorithm that maps individual characteristics and stated beliefs into expected forecast errors. The resulting predictions can be interpreted as estimates of conditional belief distortions.

\medskip

The set of predictors includes the subjective three-month reemployment probability and a rich set of administrative variables drawn from PES records. These cover: (i) demographics (age, gender, education, qualifications, marital and parental status); (ii) characteristics of the current unemployment spell (duration, reason for unemployment, and type of contract sought); (iii) past labor market history (number and cumulative duration of previous unemployment spells, as well as past applications through the PES and their outcomes); and (iv) geographic and occupational characteristics (region of residence and target occupation).

\medskip

We then apply the trained algorithm to the October 2023 survey wave, for which realized reemployment outcomes are not yet observed at the time beliefs are elicited. This procedure allows us to characterize belief distortions using only ex ante information, without relying on contemporaneous realizations.

\subsection{Classifying job seekers}
\label{subsec:typology}

We classify job seekers into bias groups based on their predicted individual three-month reemployment bias, using only pre-treatment information. The prediction algorithm is trained on earlier survey waves (2021--2023) and applied to baseline survey responses collected before treatment assignment; it therefore cannot be affected by the intervention. Using the auxiliary dataset described above, we determine a threshold that maximizes the separation in average reemployment bias across groups. Specifically, we choose the cutoff that maximizes the difference in average forecast errors between the resulting subgroups. 

\medskip

When a job seeker completes the baseline survey in the experiment, we predict their individual three-month reemployment bias using the trained algorithm. Individuals are then classified as \textit{pessimists} if their predicted bias falls below the threshold, and as \textit{optimists} otherwise.\footnote{Because the algorithm predicts the gap between subjective reemployment probabilities and realized outcomes, higher predicted values correspond to more optimistic beliefs about reemployment prospects.}$^,$\footnote{This procedure cannot guarantee that every individual classified as a pessimist actually underestimates their true reemployment probability. The true reemployment probability of a given individual is unobservable, and prediction error at the individual level is unavoidable. However, as discussed above, average forecast errors within groups provide consistent estimates of average belief distortions. Our procedure therefore partitions individuals into groups that differ systematically in their expected bias, rather than perfectly identifying individual-level misperceptions.} Appendix Table~\ref{tab:bias_grp_diff} reports average demographics characteristics by bias group. Pessimists are, on average, two years older and more likely to be women (67\% vs.\ 61\%), while educational attainment is similar across groups.

\medskip 

Table~\ref{tab:bias_grp_diff1} reports subjective expectations, search behaviors, and realized reemployment outcomes for the two bias groups, focusing on individuals assigned to the control group only.\footnote{Restricting the sample to untreated individuals ensures that realized outcomes are not mechanically affected by the intervention.} The two groups exhibit substantial differences in forecast errors. Individuals classified as pessimists report an average three-month reemployment probability of 15.8\%, compared with a realized reemployment rate of 18.8\%, implying an average underestimation of 3 percentage points. By contrast, individuals classified as optimists report an average subjective probability of 59.0\%, while their realized reemployment rate is only 31.0\%, corresponding to an average overestimation of 28 percentage points.

\paragraph{Heterogeneous subjective beliefs and behaviors across bias groups.}

Reemployment biases are strongly associated with systematic differences in beliefs about labor market conditions and returns to search. Table~\ref{tab:bias_grp_diff1} shows that pessimistic job seekers hold substantially more negative expectations along nearly all dimensions. They perceive a lower probability that job offers exceed the median wage in their occupation (19\% versus 31\% for optimists) and a substantially lower job offer arrival rate, as reflected in their perceived probability of receiving at least one offer within three months (23.1\% versus 54.2\%). Similar gaps emerge for the perceived probability of receiving multiple offers.

\medskip

Pessimists also report lower perceived hiring probabilities conditional on applying, both in their target occupation (20.5\% versus 30.9\%) and in alternative occupations. More generally, they perceive lower returns to search effort. Using hypothetical scenarios that vary weekly search intensity, we estimate the perceived increase in the probability of receiving at least one offer associated with an additional hour of weekly search effort. This perceived return equals 0.57 percentage points for pessimists, compared with 0.88 percentage points for optimists, suggesting that pessimists perceive additional effort as substantially less productive.

\medskip

These differences in beliefs are mirrored in observed search behavior. Relative to optimists, pessimists exert lower weekly search effort, with a gap of approximately two hours per week, and set slightly lower reservation wages (2{,}055\EUR{} versus 2{,}091\EUR{}). At the same time, they submit more applications, both overall and outside their main target occupation. Taken together, these patterns suggest that pessimistic job seekers are not simply less active in their search. Rather, they combine lower expectations and lower perceived returns to search with broader, but potentially less targeted, application strategies.

\medskip

\begin{table}[htbp]
\centering
\caption{Differences in Beliefs, Behaviors and Outcomes by Bias Group (Control Group)}
\label{tab:bias_grp_diff1}
\scalebox{0.85}{
\begin{threeparttable}
\begin{tabular}{lccc}
\toprule
& \textbf{Pessimists} & \textbf{Optimists} & \textbf{Difference} \\
\midrule

\textit{Panel A. Reemployment expectations and outcomes} \\
Perceived reemployment probability (3 months) & 0.158 & 0.590 & -0.419*** \\
Realized reemployment rate (3 months) & 0.188 & 0.310 & -0.107*** \\
Reemployment forecast error (3 months) & -0.030 & 0.280 & -0.312*** \\
  \\
\textit{Panel B. Beliefs about wages} \\
$\mathbb{P}$(job offer exceeds median wage) & 0.190 & 0.330 & -0.122*** \\
  \\
\textit{Panel C. Beliefs about job offers} \\
$\mathbb{P}$($\geq$ 1 job offer) & 0.231 & 0.542 & -0.277*** \\
$\mathbb{P}$($\geq$ 3 job offer) & 0.173 & 0.438 & -0.235*** \\
  \\
\textit{Panel D. Beliefs about application success} \\
Hiring probability (target occupation) & 0.205 & 0.309 & -0.094*** \\
Hiring probability (alternative occupation 1) & 0.164 & 0.257 & -0.083*** \\
Hiring probability (alternative occupation 2) & 0.149 & 0.231 & -0.072*** \\
  \\
\textit{Panel E. Beliefs about returns to search effort}\\
$\mathbb{P}$($\geq$ 1 offer, 10h/week) & 0.190 & 0.352 & -0.174*** \\
$\mathbb{P}$($\geq$ 1 offer, 20h/week) & 0.245 & 0.444 & -0.192*** \\
$\mathbb{P}$($\geq$ 1 offer, 30h/week) & 0.303 & 0.528 & -0.215*** \\
Return to search effort (pp) & 0.567 & 0.881 & -0.157*** \\
  \\
\textit{Panel F. Search behaviors} \\
Weekly search effort (hours) & 11.39 & 13.81 & -2.159*** \\
Reservation wage (\euro, difference in log) & 2054.65 & 2091.33 & -0.019*** \\
Nb. of applications & 1.39 & 1.22 & 0.215** \\
Nb. of applications outside target occ. & 1.05 & 0.93 & 0.161** \\
\midrule
Number of observations & 3,443 & 7,048 & \\
\bottomrule
\end{tabular}

\begin{tablenotes}[flushleft]
\footnotesize
\item \textit{Notes:} This table reports mean beliefs and outcomes by bias group. The sample is restricted to individuals in the control group. Beliefs are measured prior to the intervention using the baseline survey. All variables correspond to survey responses, except for realized three-month reemployment rates, forecast errors, and the return to search effort. The latter is defined as the average increase in the perceived probability of receiving at least one job offer associated with a one-hour increase in weekly search effort. $\mathbb{P}$ denotes probability.
\end{tablenotes}
\end{threeparttable}
}
\end{table}

\medskip

\paragraph{Heterogeneous psychological characteristics across bias groups.} To further characterize this group, Figure \ref{fig:psychological_profiles} examines a broader set of psychological characteristics using earlier waves of the panel for which such measures are available. Although these variables are not observed in the current experimental sample, pessimistic job seekers are identified using the same prediction algorithm as in the main experiment. Figure \ref{fig:psychological_profiles} reveals substantial differences in psychological traits between pessimistic and non-pessimistic job seekers. Pessimistic job seekers exhibit significantly lower levels of internal locus of control and substantially higher levels of external locus of control. They also score lower on several dimensions of the Big Five personality traits, including emotional stability, conscientiousness, extraversion, and openness to experience. Consistent with these patterns, they also display substantially higher levels of neuroticism.

\medskip

Overall, these patterns suggest that pessimistic job seekers differ not only in their labor market beliefs, but also in broader psychological characteristics associated with self-efficacy, confidence, and the ability to translate intentions into action. While these correlations remain purely descriptive, they are consistent with the idea that pessimistic job seekers may face behavioral constraints that go beyond inaccurate beliefs about labor market prospects. In particular, lower search effort or weak responses to occupational recommendations may partly reflect difficulties in activation, motivation, or perceived control over labor market outcomes.

\medskip

In summary, pessimistic job seekers both anticipate and experience substantially lower reemployment rates than other job seekers. They search less intensively, submit more applications, and target lower-wage job opportunities.

\section{Conceptual Framework}\label{sec:model}

This section sets out a behavioral search model that organizes the empirical findings and supports the discussion of welfare implications for personalized versus uniform recommendations.

\medskip

The model has a two-stage structure. In the first stage, the worker decides which occupations to actively search, namely her consideration set. In the second stage, conditional on this set, she chooses search effort and a reservation wage. The first stage is governed by attention: occupations require cognitive and implementation costs to evaluate, so only a subset is actively searched. The second stage is governed by aspirations: socially determined reference points for wages and effort that create additional incentives for workers whose current choices fall short of what comparable peers target. 

\medskip

The occupational recommendation acts on the first stage by expanding the consideration set. The motivational recommendation acts on the second by raising aspirations. Thus, the two channels are complementary rather than competing.

\subsection{Stage 1: Attention and the consideration set}

The labor market contains occupations $k \in \{1, \ldots, K\}$. The worker holds subjective beliefs $p_k$ about the probability of receiving an acceptable offer when searching in occupation $k$. Searching an occupation requires attention, identifying relevant vacancies, adapting applications, preparing for interviews, so the worker allocates attention weights $a_k \in [0,1]$ subject to a capacity constraint $\overline{a}< K$ satisfying \citep{caplin2015attention}: $\sum_{k=1}^{K} a_k \leq \overline{a}$.
An occupation enters the worker's active consideration set $\mathcal{A}$ only if it receives enough attention to cover the fixed implementation cost:
\begin{equation*}
	\mathcal{A} = \bigl\{k : a_k \geq \underline{a}\bigr\},
%	\label{eq:consideration_set}
\end{equation*}
where $\underline{a} > 0$ is a threshold reflecting these fixed costs. The threshold $\underline{a}$ captures the combined cognitive, implementation, and salience costs of actively searching an  occupation: evaluating fit with one's skills, identifying relevant vacancies, adapting applications, and overcoming the default focus on the target occupation.  The worker searches only in occupations in $\mathcal{A}$, while the rest are effectively invisible regardless of their objective return.

\medskip

The optimal allocation of attention depends on priors: the worker directs attention toward occupations with higher expected returns $p_k$. Two features drive the model's predictions. First, attention is prior-dependent: an occupation with $p_k = 0$ receives no attention, so a recommendation can only activate occupations the worker already weakly values. Second, the occupational recommendation operates by lowering $\underline{a}$ for a small number of nearby occupations rather than by transmitting new information about $p_k$. When $\underline{a}$ drops for occupation $k$, occupations with $a_k$ just below the old threshold cross it and enter $\mathcal{A}$, expanding the consideration set without changing beliefs. This is the ``learning through noticing'' channel of \citet{hanna2014learning}.

\medskip

Our attention-allocation model differs from the framework in  \citet{belot2019providing}, where the consideration set is  determined directly by subjective job-finding probabilities  $p_k$: workers do not search in occupations where they  perceive low returns, and the recommendation expands the  consideration set by providing information that raises  $p_k$ for alternative occupations. In their model, a  change in the consideration set requires a change in  beliefs, i.e., the mechanism is Bayesian updating applied to the occupational margin.

\medskip

Our model separates the two objects. Workers may hold  positive beliefs about alternative occupations ($p_k > 0$) yet fail to include them in $\mathcal{A}$ because the fixed attention cost $\underline{a}$ prevents it. 

\subsection{Stage 2: Search with aspirations}

Conditional on the consideration set $\mathcal{A}$, the worker faces an effective offer arrival rate $\lambda_{\mathcal{A}}(e)$ and an effective wage-offer distribution $F_{\mathcal{A}}(w)$, both of which depend on the set of occupations actively searched.\footnote{The precise aggregation is not essential for the comparative statics that follow. What matters is that expanding $\mathcal{A}$ changes the effective distribution the worker faces.} The worker holds subjective beliefs $(\tilde\lambda_{\mathcal{A}}, \tilde F_{\mathcal{A}})$ that may differ from the true objects. To ease notation, we drop the $\mathcal{A}$ subscript and write $(\tilde\lambda, \tilde F)$ throughout, keeping in mind that these are conditional on the active consideration set.

\medskip

The worker chooses search effort $e \geq 0$ and a reservation wage $w^* \geq 0$. She accepts an offer iff $w \geq w^*$. Writing $\overline{U}$ for the value of continued search, the problem is
\begin{equation}
	\max_{e,\, w^*}\; \tilde\lambda(e)\int_{w^*}^{\bar w}\bigl[U(w;\, r_w) - \overline{U}\bigr]\,d\tilde F(w) - C(e;\, r_e) + \overline{U},
	\label{eq:searchproblem}
\end{equation}
where $U(w; r_w)$ captures aspiration-dependent utility from wages and $C(e; r_e)$ captures aspiration-dependent cost of effort. Both are defined below.

\paragraph{Aspiration-dependent wage utility.} Following the aspirations literature \citep{ray2006aspirations,genicot2020aspirations}, the worker evaluates wages relative to a socially determined aspiration $r_w$. Falling below the aspiration inflicts a utility penalty proportional to the shortfall:
\begin{equation}
	U(w;\, r_w) = v(w) + \kappa_w \bigl(v(w) - v(r_w)\bigr)\,\mathbf{1}\{w < r_w\}, \qquad \kappa_w > 0,
	\label{eq:utility}
\end{equation}
with $v(\cdot)$ smooth, increasing, and concave. In the aspiration-gap region ($w < r_w$), the marginal utility of the wage is $(1 + \kappa_w)\,v'(w)$, steeper than the baseline $v'(w)$: each euro of shortfall below the aspiration hurts $\kappa_w$ times more than a euro gained above it. Above the aspiration, utility reduces to $v(w)$.

\paragraph{Aspiration-dependent effort cost.} Symmetrically, the worker evaluates her effort relative to a socially determined effort aspiration $r_e$. Falling below the effort norm creates a psychological cost including shame or social-comparison disutility which is proportional to the squared gap:
\begin{equation}
	C(e;\, r_e) = c(e) + \frac{\psi}{2}\bigl(\max(r_e - e,\, 0)\bigr)^2, \qquad \psi > 0,
	\label{eq:cost}
\end{equation}
where $c(e)$ is the material cost with $c' > 0$, $c'' > 0$, $c(0) = 0$. The marginal cost of effort is
\begin{equation}
	C'(e;\, r_e) = \begin{cases}
		c'(e) - \psi(r_e - e) & \text{if } e < r_e, \\[2pt]
		c'(e) & \text{if } e \geq r_e.
	\end{cases}
	\label{eq:mc_effort}
\end{equation}
When effort falls below the aspiration, the norm-gap penalty acts as a marginal subsidy: each additional unit of effort reduces the psychological cost of falling short, lowering the effective marginal cost and pushing effort toward $r_e$. The quadratic form ensures that the marginal effect of the aspiration gap increases with the distance to the norm.

\medskip

Both aspiration functions share the same structure: a status quo payoff ($v(w)$ or $c(e)$) augmented by a penalty when status falls short of the aspiration. The key economic insight from \citet{ray2006aspirations} is that aspirations that are attainable but not yet reached create the strongest incentive to invest. This is precisely the regime the motivational intervention targets: it reveals that comparable peers achieve higher wages and exert more effort, placing pessimistic workers in the aspiration-gap region on both margins.

\paragraph{Aspiration formation.} Both aspirations are socially determined based on the worker's perception (denoted by $\mathbb{E}_i$) of what comparable peers target:
\begin{align}
	r_{w,i} &= \mathbb{E}_i\bigl[w^* \mid \text{occupation}_i,\, \text{region}_i,\, \text{comparable workers}\bigr], \label{eq:rw} \\[2pt]
	r_{e,i} &= \mathbb{E}_i\bigl[e^* \mid \text{occupation}_i,\, \text{region}_i,\, \text{comparable workers}\bigr]. \label{eq:re}
\end{align}
These are perceived peer aspirations or norms, what comparable workers aim for, not forecasts of realized labor-market outcomes. This distinction is what allows the motivational message to shift behavior without requiring changes in beliefs about market fundamentals $(\tilde\lambda, \tilde F)$: the message discloses an attainable target anchored on the median peer reservation wage and effort level, which shifts $r_w$ and $r_e$ independently of beliefs about the offer distribution or the returns to search.

\paragraph{Optimality conditions.} The reservation wage solves the 
indifference condition
\begin{equation}
    U(w^*;\, r_w) = \overline{U}.
    \label{eq:reservation}
\end{equation}
Since $\overline{U}$ depends on the entire problem, $w^*$ and $\overline{U}$ are determined jointly. Because the quadratic penalty ensures that $C(e; r_e)$ is continuously differentiable, the first-order condition for effort is
\begin{equation}
    \tilde\lambda'(e^*)\int_{w^*}^{\bar w}\bigl[U(w;\, r_w) 
    - \overline{U}\bigr]\,d\tilde F(w) = C'(e^*;\, r_e),
    \label{eq:foc_effort}
\end{equation}
with $C'$ given by \eqref{eq:mc_effort}.

\medskip

The model delivers two  comparative statics, one for each margin targeted by the motivational  intervention.  Both are 
stated as properties of the optimality conditions at a given 
continuation value~$\overline{U}$. In equilibrium, 
$\overline{U}$ adjusts, but the direct effects derived below determine the sign of the equilibrium response. The first (Lemma~\ref{lem:rw}) governs the wage margin: for a worker in the aspiration-gap region, where the reservation wage lies below the perceived peer wage aspiration ($w^*<r_w$), a rise in that aspiration raises the reservation wage, $\partial w^*/\partial r_w>0$. The second (Lemma~\ref{lem:effort}) governs the effort margin: a rise in the perceived peer effort norm $r_e$ raises search effort for any worker whose effort falls below the norm, while leaving workers already above it unaffected.

\subsection{Two worker types}

Two friction profiles, drawn from the typology of Section~\ref{subsec:typology}, motivate the two treatment arms.

\medskip

A pessimist underestimates returns on two margins: a downward-biased arrival rate, $\tilde\lambda(e) < \lambda(e)$, and a wage-offer distribution $\tilde F^P$ first-order stochastically dominated by $F$. Both distortions depress effort through the FOC \eqref{eq:foc_effort}. Where pessimism co-moves with low perceived peer norms, the aspirations $r_w$ and $r_e$ are compressed, further depressing $w^*$ (see Lemma~\ref{lem:rw}) and $e^*$ (see Lemma~\ref{lem:effort}).

\medskip

An  optimist holds biased beliefs in the opposite direction: $\tilde\lambda^O(e) > \lambda(e)$ and $\tilde F^O$ stochastically dominates $F$. Optimists are also attention-constrained: they hold $\tilde p_k > 0$ for alternative occupations but do not include them in $\mathcal{A}$, because the attention threshold $\underline{a}$ 
prevents these occupations from entering the consideration set.

\subsection{Candidate mechanisms}\label{subsec:mechanisms}

Three candidate mechanisms could drive the behavioral responses. Each leaves distinct empirical signatures.

\paragraph{Mechanism BU: Bayesian Belief Updating.} Under this mechanism \citep[in the spirit of][]{kamenica2011bayesian}, the recommendation is a signal about market fundamentals: occupational recommendations transmit information about $p_k$; motivational recommendations transmit information about $\tilde\lambda$ or $\tilde F$. This predicts: (i)~post-treatment beliefs shift; (ii)~the behavioral response scales with the size of the update; (iii)~effects are largest for workers with the most biased or uncertain priors.

\paragraph{Mechanism AA: Attention Activation.} Following \citet{caplin2015attention,hanna2014learning}, the recommendation lowers $\underline{a}$ for recommended occupations, expanding $\mathcal{A}$ without changing beliefs. This predicts: (i) no detectable shift in beliefs; (ii)~effects concentrated among workers with positive but low-attention priors (spread beliefs, low $h$); (iii)~no effect for workers with narrow beliefs ($h$ high), since attention activation cannot operate where $p_k$ is close to zero.

\paragraph{Mechanism AS: Aspiration Shift.} Following the aspirations and reference-point literature \citep{ray2006aspirations,genicot2020aspirations,koszegi2006model,dellavigna2017reference}, the motivational recommendation raises the perceived peer aspirations $r_w$ and $r_e$. Through Lemma~\ref{lem:rw}, this raises $w^*$ for workers in the aspiration-gap region; through Lemma~\ref{lem:effort}, it lowers the effective marginal cost of effort for workers below the norm. This predicts: (i)~no shift in beliefs about $\tilde\lambda$ or $\tilde F$; (ii)~reservation wages rise only for workers below the disclosed benchmark (threshold heterogeneity); (iii)~effort effects concentrated among low-effort workers, with the largest gains for those furthest below the norm.

\medskip

These candidate mechanisms generate testable predictions for the two treatments that are respectively described in the next section.

\section{Recommendation and motivational treatments and their allocation}\label{sec:treatment}

We consider an ``occupational'' recommendation intervention inspired by the seminal work of \citet{belot2019providing}. However, the previous section has shown that pessimistic job seekers have specific needs. We thus designed a tailored ``motivational'' intervention to match these specific needs.

\subsection{Occupational recommendation}

The first recommendation is inspired by the work of \cite{belot2019providing} and aims to encourage search diversification. The underlying premise is that job seekers may have imperfect information about their reemployment prospects in occupations other than their main target.

\medskip

We construct two distinct proximity rankings of alternative occupations. The first ranking is based on declared occupational interests. When registering with the French PES, job seekers may report several occupations of interest, which provides information on alternative occupations considered relevant by individuals sharing the same target occupation.

\medskip

The second ranking is based on observed application behavior. We rank alternative occupations according to the number of applications they receive from job seekers who mainly search in the given target occupation. This ranking captures which alternative occupations job seekers actively consider worth applying to. We then select the top two occupations from each ranking, ensuring that the four recommended alternatives do not overlap.\footnote{The occupational recommendation procedure differs from that of \cite{belot2019providing}. In their setting, recommended occupations are constructed by combining two sources of information: (i) longitudinal data on occupational transitions, identifying occupations to which workers in a given occupation most frequently move, and (ii) measures of skill similarity based on O*NET. These recommendations are presented alongside descriptive indicators of local labor-market tightness. Empirically, application-based rankings are closely aligned with hiring-based alternatives: among the 150 most common target occupations, more than two thirds have the top hiring-based alternative occupation included among the two top alternatives identified using application-based rankings, based on administrative hiring outcomes observed in an auxiliary dataset.}

\medskip

The recommendation is delivered as a short message displayed at the end of the baseline survey. It suggests a set of alternative occupations tailored to the job seeker’s profile and current target occupation. For example, job seekers are told:

\begin{quote}
``\textit{You could broaden your job search to other sectors of activity. Our studies show that you have hiring opportunities in several other sectors. If you are searching for a position as ``lib\_metier'', the following positions may also be of interest to you: ``lib\_prop\_metier1'', ``lib\_prop\_metier2''. You could also apply to positions such as ``lib\_prop\_metier3'' and ``lib\_prop\_metier4''. You could increase your chances of finding a job by diversifying your job search and your applications.}''
\end{quote}

The message emphasizes that diversifying applications can increase job-finding prospects and provides concrete alternatives derived from administrative data.

\medskip

In the context of the model of Section~\ref{sec:model}, the 
occupational recommendation targets the consideration margin: it 
lowers the attention threshold~$\underline{a}$ for nearby 
occupations, expanding~$\mathcal{A}$ without changing 
beliefs~$\tilde p_k$.  The behavioral response here 
requires no belief revision. This distinction generates sharp 
predictions.

\begin{enumerate}
    \item[P.1] \textbf{No belief updating.} The intervention does not shift perceived success probabilities $p_1$, $p_2$, $p_3$, because it expands $\mathcal{A}$ by reducing attention 
    costs rather than by revising posteriors. Inconsistent with Mechanism~BU.
    
    \item[P.2] \textbf{Heterogeneity by initial belief 
    narrowness.} Under belief updating, effects should be largest for workers with the most biased priors (low~$p_k$, high~$h$), who have the most room to update. Under attention activation, effects should concentrate among workers who already hold favorable priors 
    (high~$p_k$, low~$h$) but do not act on them, because attention activation can only operate where priors are positive. The two mechanisms make opposite predictions about which workers respond. 
    
    \item[P.3] \textbf{No wording effect on beliefs.} Even wordings explicitly designed to correct beliefs, such as the raise-awareness framing, which invites job seekers to  reconsider their priors, should generate no stronger belief revision than direct advice. Consistent with~AA, inconsistent with~BU.
\end{enumerate}

\subsection{Motivational recommendation}\label{sec:motivational}

The motivational recommendation is designed to address the specific needs of pessimistic job seekers. This treatment directly targets the search parameters of individuals classified in the pessimistic group. In addition to holding depressed beliefs about key labor-market parameters, particularly wage distributions and the returns to search effort in terms of job-offer arrival rates, these individuals tend to exert lower search effort and to set lower reservation wages.

\medskip

A primary objective of the motivational recommendation is therefore to restore job seekers’ confidence in the effectiveness of search effort. The first message explicitly encourages job seekers not to underestimate the productivity of search activity. In particular, we recommend increasing search effort and provide information on the median level of weekly search effort exerted by other job seekers. This reference level is estimated using the auxiliary dataset collected between October 2021 and July 2023. In the treatment, this is conveyed through statements such as:

\begin{quote}
``\textit{Searching for job offers is an effective way to find a job. Our studies show that searching for job offers greatly improves the chances of finding a job. One out of two job seekers spends at least 10 hours per week searching for job offers.}'' 
\end{quote}

We expect that providing information on typical search intensity may induce individuals who currently search less to increase their own effort.

\medskip

The recommendation also emphasizes that higher search effort may enable job seekers to target better job opportunities. Although pessimistic job seekers exert less search effort, they apply to more vacancies on average, suggesting lower selectivity in their application behavior. This pattern is consistent with their lower reservation wages. Accordingly, the motivational message encourages job seekers to reallocate search effort toward higher-quality vacancies. This is expressed in the treatment through statements such as:

\begin{quote}
``\textit{Increasing your search effort allows you to focus on job offers with higher wages.}'' 
\end{quote}

and

\begin{quote}
``\textit{Given your profile and the job offers available on the market, you could also target and apply for better-paid jobs than those you are currently applying to. For a position as ``lib\_metier'', you could apply to jobs offering a monthly gross wage of ``wage\_target'' euros.}'' 
\end{quote}

This reference wage is computed as the median reservation wage set at PES registration among individuals sharing the same target occupation and region. This information is provided only to job seekers whose own reservation wage falls below this median level, so as to avoid potential demotivation effects that could arise from delivering a negative signal about labor market prospects.

\medskip

We hypothesize that framing increased effort as a way to access better opportunities may further enhance motivation and improve search efficiency.

\medskip
Overall, the motivational recommendation aims to stimulate higher  search intensity by restoring perceived returns to search effort and by raising aspirations regarding the quality of attainable job offers. We do not consider this type of recommendation to be well suited to individuals classified as optimistic. Providing motivational content to job seekers who already hold overly optimistic beliefs would create a risk of iatrogenic treatment, as it may amplify existing biases and potentially worsen reemployment outcomes.

\medskip

In the framework of Section~\ref{sec:model}, the motivational recommendation operates by raising two aspirations simultaneously:  the peer effort norm~$r_e$ (through the disclosed median search 
effort) and the peer wage benchmark~$r_w$ (through the disclosed median reservation wage). Both shifts place pessimistic workers in the aspiration-gap region, activating the mechanisms of 
Lemmas~\ref{lem:rw} and~\ref{lem:effort}. Crucially, neither shift requires the worker to revise her beliefs about market fundamentals. The message changes what she perceives as 
attainable, not what she believes about the wage distribution or the returns to search. This yields four predictions that collectively discriminate the aspiration channel from belief updating.

\begin{enumerate}
    \item[P.4] \textbf{Threshold heterogeneity in reservation wages.} By Lemma~\ref{lem:rw}, the aspiration penalty applies only in the gap region. The treatment should therefore raise~$w^*$ only for workers whose pre-intervention reservation wage lies below the disclosed peer benchmark, and have no effect on workers already above it. Under belief updating, the effect should not exhibit this sharp threshold. Consistent with Mechanism~AS.
    
    \item[P.5] \textbf{No shift in beliefs about $\tilde F$.} 
    The treatment raises~$w^*$ without shifting subjective beliefs about the wage-offer distribution. This isolates the aspiration channel, which operates through~$r_w$, not 
    through~$\tilde F$, from a belief-correction channel, which would require $\tilde F$ to move.  Inconsistent with BU.
    
    \item[P.6] \textbf{Effort increase concentrated among low-effort workers.} By Lemma~\ref{lem:effort}, the marginal cost reduction is $\psi(r_e - e^*)$, which is largest for workers furthest below the norm. The effort effect should therefore be largest for low-effort pessimists and absent for workers already at or above the disclosed benchmark of 10 hours per week.
    
    \item[P.7] \textbf{Upward reallocation of applications.} Higher~$w^*$ makes the worker more selective; higher effort expands the set of offers she evaluates. The net effect on total applications is ambiguous, but the composition should shift toward higher-wage vacancies.
\end{enumerate}

Finally, the last prediction concerns employment outcomes, where the effect of such a motivational treatment is ambiguous.

\begin{enumerate}
	\item[P.8] \textbf{Ambiguous employment effects.} Correcting the effort distortion speeds reemployment; raising the reservation wage delays it by inducing rejection of offers that would previously have been accepted. The net effect depends on which margin dominates for each worker, which in turn depends on how much of her underperformance reflects depressed aspirations versus structural market frictions.
\end{enumerate}

\subsection{Treatment wordings variations}\label{sec:wording_treatment}

Each recommendation was implemented using three alternative framings: a direct-advice framing, a peer-behavior framing, and an awareness framing emphasizing potential behavioral biases during job search. These alternative wordings were designed to explore whether recommendations operate primarily through direct advice, social comparison, or increased awareness of potential search biases. Results reported in the main text pool the three wording variants. Detailed wording examples and heterogeneous effects by framing are reported in Appendix \ref{sec:app-wordings}.

\section{Experimental Design and Data}
\label{sec:experiment}

This section describes the experimental procedure implemented in partnership with France Travail. We first detail the enrollment process and the baseline survey used to register job seekers into the experiment. We then describe the treatment arms and the randomization scheme. Finally, we discuss take-up, compliance, and the follow-up survey.

\medskip

Table~\ref{tab:design} summarizes the different steps of the experimental design, from the initial survey invitation to enrollment and treatment assignment. Figure~\ref{fig:rct_schema} provides a graphical overview of the experimental timeline, including enrollment, classification into belief types, randomization, treatment delivery, and follow-up.

\subsection{Enrollment and baseline survey}

The experiment was conducted in October 2023 in collaboration with France Travail. A random sample of 375{,}000 registered job seekers received an email invitation to participate in an online survey on their job search behavior and perceptions of the labor market. The survey corresponds to the baseline survey described above and builds on the survey infrastructure implemented in previous waves.

\medskip

Participation in the survey was voluntary. Respondents were asked whether they were actively searching for a job at the time of the survey and to report their subjective reemployment expectations. Individuals who reported not actively searching for a job or did not provide reemployment expectations are excluded from the experimental sample. A total of 52{,}465 job seekers satisfy these conditions and constitute the experimental sample.

\medskip

Upon completion of the baseline survey, job seekers were informed that they might receive personalized information or recommendations related to their job search. The survey responses collected at this stage are used to construct the bias groups described in Section~\ref{subsec:typology}. Among the 52{,}465 enrolled job seekers, 17{,}073 are classified as pessimistic and 35{,}392 as optimistic. 

\subsection{Treatment allocation}

The allocation of job seekers across treatment arms proceeds in two steps. The first step occurs at a very early stage. The initial sample is assigned to a treatment group and a control group, with respective proportions of 80\% and 20\%. This assignment remains hidden from job seekers.

\medskip

Job seekers are then invited to complete the survey and, conditional on satisfying the eligibility criteria described earlier, are enrolled in the experiment. Based on their survey responses, individuals are classified into two bias groups, pessimistic and optimistic job seekers. This classification is observed for individuals in both the treatment and control groups.

\medskip

The second step of treatment assignment is conditional on the predicted bias group. Job seekers in the treatment group who are classified as pessimists are randomly assigned to one of six experimental arms. These include three wording variations of the occupational recommendation and three wording variations of the motivational recommendation. Within the group of pessimistic individuals, one fifth are randomly assigned to the control group, while the remaining four fifths are evenly distributed across the six treatment arms.

\medskip

We follow a similar procedure for optimistic job seekers. However, because the motivational recommendation is not intended for individuals classified as optimistic, job seekers in this group who are initially assigned to the treatment group in the first step are randomly allocated to one of the three wording variations of the occupational recommendation. Overall, among optimistic job seekers, one fifth are allocated to the control group, and the remaining four fifths are evenly split across the three occupational treatment arms.

\medskip

Table~\ref{tab:balance} presents balance tests, which indicate that, conditional on bias group, individual characteristics are comparable across treatment arms and the control group (each arm is compared to the corresponding control group; test details are given in the table notes).

\medskip

Overall, we find very few statistically significant imbalances. The main exception concerns the perceived reemployment probability, which slightly differs in two treatment arms (the occupational recommendations with the raise awareness wording). For this reason, we systematically control for baseline subjective beliefs in our main estimations of treatment effects.

\subsection{Take-up of treatments}

Upon completing the survey, job seekers assigned to the treatment group are offered the opportunity to receive a personalized recommendation based on their responses. For those who consent, the informational content displayed corresponds to their assigned treatment arm. Table~\ref{tab:design} reports the number and share of job seekers who consented, separately for pessimistic and optimistic individuals. All reported estimates in this paper are intent-to-treat (ITT).\footnote{Take-up  rates are approximately 63\% for pessimists and 66\%  for optimists (Table~\ref{tab:design}). Thus, under standard  exclusion restrictions, implied treatment-on-the-treated effects are roughly 50\% larger than the reported ITT.} 
\medskip

The message sent to individuals in the control group contains only a link to the PES platform and a brief note indicating that they may contact their caseworker if needed. The same message is displayed to individuals in the treatment group who did not consent to receive a personalized recommendation.

\subsection{Outcome data}

 \subsubsection{Administrative outcome data}

The two main outcomes used to assess the impact of our recommendations, namely job seekers’ applications and reemployment outcomes, are drawn from administrative databases provided by the French PES.

\medskip

The first administrative outcome of interest is the set of applications submitted by job seekers on the PES platform. We focus on applications made in the two months following the intervention.\footnote{Since job seekers completed the baseline survey until October 16th, we consider applications made between October 2nd and December 15th.} Although job seekers may also rely on other job boards, the France Travail platform plays a central role in the matching process and accounts for a substantial share of vacancies in France. As documented by \cite{le2019gender}, it represents approximately 60\% of all job vacancies nationwide, and a large fraction of job seekers use it as a primary search and application channel.

\medskip 

These data are particularly rich and allow us to observe detailed characteristics of each application. We distinguish applications by their source. Standard applications correspond to applications submitted by job seekers to vacancies posted on the PES platform. Suggested applications are also directed toward posted vacancies but originate from recommendations made by caseworkers. Finally, spontaneous applications are not linked to posted vacancies but are directed toward specific firms; they are initiated by job seekers through a dedicated platform that enables direct contact with employers.\footnote{This corresponds to applications made through a dedicated tool available at France Travail called \textit{La Bonne Boîte} (see \citet{behaghel2024potential}).}

\medskip 

For applications to posted vacancies, we observe several key characteristics of the job ads, including the offered wage, the type of contract (permanent or fixed-term), and whether the position is full-time or part-time. Importantly, we also observe the occupation associated with each vacancy. This allows us to classify applications according to whether they target (i) the job seeker’s preferred occupation, (ii) one of the four occupations suggested in the occupational recommendation, or (iii) any other occupation.

\medskip

To evaluate the impact of the intervention on reemployment outcomes, we rely on a second administrative data source that records all hirings starting in October 2023.\footnote{These data correspond to the \textit{Déclarations sociales nominatives} (DSN), which are mandatory monthly wage reports submitted by all employers. These data are distinct from those used to construct the 3-month reemployment bias measure; see Appendix \ref{app:admin_data} for further details.} Unlike the monthly employment indicators typically used by the PES, this dataset captures the universe of hirings, including temporary and short-term contracts, and provides detailed information on post-hire wages. It is therefore particularly well suited to assess the effects of the intervention on reemployment outcomes.

\medskip

We focus on the first job held by each job seeker after the intervention date.

\subsubsection{Follow-up survey and measurement of beliefs and search parameters}

To measure post-intervention search behaviors and subjective beliefs, job seekers were invited by email one month after the baseline survey to complete a follow-up questionnaire. The survey replicates key baseline questions and allows us to construct outcomes related to search effort, reservation wages, and perceived labor market prospects. Invitations were conditioned on engagement in the baseline survey. In the treatment group, they were sent to individuals who accepted the recommendation, while in the control group they were sent to individuals who completed the final question of the survey. Overall, 64\% of the 52,465 individuals in the experimental sample were invited to participate (see Table \ref{tab:design}). Among those invited, 44\% participated in the follow-up, and 38\% ultimately provided usable observations.

\medskip 

Importantly, the share of individuals with usable follow-up data is very similar across groups, ranging from 23\% to 24\% of the experimental sample, with a maximum difference of 1.7 percentage points across columns. However, despite these similar aggregate rates, the underlying selection processes differ across treatment arms, reflecting differences in both invitation and participation mechanisms. This leaves open the possibility of differential selection into the estimation sample, which we address using the reweighting and machine learning methods described below.

\medskip 

Because follow-up participation reflects engagement with the survey, the resulting sample differs somewhat from the full experimental population (see Table \ref{tab:balance_follow} for balance tests). While observable baseline characteristics remain broadly similar across treatment arms among respondents, we implement several estimation procedures to address potential selection concerns. These procedures are described in detail in Appendix \ref{sec:selectedsurvey}. First, we implement inverse probability weighting (IPW) based on predicted response probabilities estimated separately for the treatment and control groups \citep{wooldridge2007inverse}. Second, we estimate treatment effects using double/debiased machine learning (DML), which flexibly adjusts for observable determinants of survey response \citep{chernozhukov2018double}.

\medskip 

We consider two approaches to the propensity score (see Appendix \ref{sec:nuisance} for details). First, we estimate the probability of being treated among respondents. Second, we use an alternative specification that exploits the randomized assignment of treatment and reconstructs the propensity score from treatment-specific participation probabilities. This latter approach explicitly leverages the experimental design and isolates the role of differential follow-up.

\medskip 

 To check the validity of these estimation methods, we use administrative outcomes that are observed for the full experimental sample: we verify that applying our estimation procedures on the subsample that replicates the same selection yields results very close to those obtained using the full data. This provides further reassurance that our findings are not driven by selective follow-up.

%%%%%%%%%%%%%%%%%%%%%%%%%%%%%%%%%%%%%%%%%%%%%%%%%%%%%%%%%%%%%%%%%%%%%%%%%%%%%%%%%%%%%%%%%%%%%%%%%%%%%%%%%
\section{Results: Occupational Arm}\label{sec:occ}

Job seekers differ substantially in their subjective labor market expectations, and these differences may shape their responses to occupational recommendations. This section compares the effects of providing recommendations for alternative occupations across the two bias groups. We first present the empirical specification and then examine how the intervention affects applications across search channels and occupations.

\medskip

Our main finding is that occupational recommendations substantially increase applications to posted vacancies among optimistic job seekers, while generating much weaker responses \emph{along the targeted margins} among pessimistic job seekers, who instead respond by increasing spontaneous, direct-to-firm applications. The contrast is especially pronounced along the primary search margin, applications to posted vacancies, and for applications directed toward the occupations explicitly recommended by the intervention. As we show in Section~\ref{sec:employment}, this weaker response along the recommended margin does not, however, imply that the occupational arm is inconsequential for pessimists: it nonetheless produces a positive effect on their reemployment.

\subsection{Econometric specification}

We estimate the effect of the intervention using the following specification:

\begin{equation}
\label{eq:mod1}
    Y_i = \alpha P_i +\beta_P P_i T_i + \beta_O (1-P_i) T_i +  X_i'\gamma + \varepsilon_i,
\end{equation}

where $Y_i$ denotes the outcome of interest, such as the number of applications submitted by job seeker $i$ after the intervention. $P_i$ is an indicator equal to one if individual $i$ is classified as pessimistic, and $T_i$ is an indicator equal to one if the job seeker received an occupational recommendation. 

\medskip 
The coefficients $\beta_P$ and $\beta_O$ capture the treatment effects for the two groups.\footnote{In the main specification, we pool all occupational-treatment wordings. Robustness checks that estimate effects separately by wording are reported in the appendix.} Although randomization is sufficient for identification, we include controls for baseline perceived probability of reemployment within three months and for the number of applications submitted on the PES platform in the month preceding the intervention in order to improve precision. These variables capture baseline beliefs and pre-intervention search intensity.

\medskip 

Our main analysis focuses on \textit{platform users}, defined as job seekers who submitted at least one application on the PES platform in the month prior to the intervention. To limit the influence of extreme values, we winsorize the number of applications at the 98th percentile of strictly positive observations.

\subsection{Main findings}

Table \ref{tab:ate_app} reports the estimated treatment effects. The left panel reports results by application channel and the right panel decompose applications on posted vacancies in three groups: the job seeker's main target occupation, the occupations suggested by the intervention, and all other occupations. Figure \ref{fig:ATE_main} offers a visual of these findings. 

\medskip

Overall, occupational recommendations increase application activity for optimistic job seekers, but have much weaker effects for pessimistic job seekers. The difference is most pronounced for applications to posted vacancies, which account for about 90\% of all applications on the platform and therefore constitute the main application margin.\footnote{The same qualitative pattern also appears in the full sample, when we do not restrict the sample to platform users, although the estimated effects are attenuated (Table \ref{tab:ate_full_sample}). This pattern is also robust when the treatment is split by wording (Table \ref{tab:ate_by_wording}). Across all wordings, applications to posted vacancies increase for optimistic individuals but not for pessimists.}

\medskip

For optimistic job seekers, the intervention raises applications to posted vacancies by 0.457, corresponding to a 11\% increase relative to the control group. It also increases spontaneous applications by 0.104, or 43.7\%. In addition, the increase in applications to posted vacancies is disproportionately concentrated on suggested occupations. The number of applications to suggested occupations increases by 0.174, corresponding to a 20.8\% increase relative to the control group.

\medskip 

By contrast, for pessimistic job seekers, the intervention does not increase applications to posted vacancies. In addition, the estimated effects are close to zero across all occupation categories. There is no impact as well on caseworker-suggested vacancies. The only significant response is for spontaneous applications, which rise by 0.242, a substantial increase of over 90\%.

\begin{table}[!htbp]
    \caption{Occupational treatment effect on online applications by bias group - platform users only}
    \label{tab:ate_app}
\begin{adjustbox}{width = \textwidth}
\begin{threeparttable}
\begin{tabular}{lllllll}
            \hline
            \hline
            & \multicolumn{6}{c}{\textit{Number of applications in the two months following intervention}} \\
            & \multicolumn{3}{c}{Across streams} & \multicolumn{3}{c}{Across occupations} \\  
            \cmidrule(lr){2-4}\cmidrule(lr){5-7} \\
            & On a posted vacancy & Spontaneous & Caseworker suggestion  & Preferred & Suggested & Others \\
            \hline
            & (1) & (2) & (3) & (4) & (5) & (6) \\
            \textbf{Pessimists} & & & & & & \\
            Occupational treatment ($\widehat{\beta}_P$) & 0.154 & 0.242$^{***}$ & -0.063 & 0.027 & -0.012 & 0.138\\   
                                      & (0.300) & (0.082) & (0.077) & (0.115) & (0.099)                & (0.222)\\
            \cline{2-7}
            Control mean  ($\widehat{\alpha}_P$) & 4.95  & 0.262 & 0.538   & 1.19     & 1.01         & 2.76\\   
            & & & & & & \\
            \hline
            \textbf{Optimists} & & & & & & \\
            Occupational treatment ($\widehat{\beta}_O$) & 0.457$^{**}$  & 0.104$^{**}$  & 0.083$^{**}$       & 0.098             & 0.174$^{***}$ & 0.185\\   
                                      & (0.201) & (0.048) & (0.037) & (0.076) & (0.062)                & (0.139)\\ 
            \cline{2-7} 
            Control mean ($\widehat{\alpha}_O$) & 4.31 & 0.238  & 0.455& 1.01    & 0.829  & 2.47\\
            & & & & & & \\
            \hline
            $\beta_P = \beta_O$ pvalue  & 0.40 & 0.16 & 0.09 & 0.59 & 0.11 & 0.86 \\
            $\alpha_P = \alpha_O$ pvalue & 0.03 & 0.71 & 0.28 & 0.11 & 0.07 & 0.20\\
            Number Obs. & 8,212 & 8,212 & 8,212 & 8,212 & 8,212 & 8,212 \\  
            \hline
        \end{tabular}
\begin{tablenotes}[flushleft]
    \small
    \item  \textit{Notes:} This table reports estimated average treatment effects of the occupational recommendation on the number of applications submitted in the two months following the intervention, separately by bias group. Wording variations are pooled together. Several application outcomes are considered: (1) the number of applications to posted vacancies, (2) the number of spontaneous applications (i.e., applications not linked to a posted vacancy), and (3) the number of applications submitted in response to vacancies suggested by caseworkers. Applications to posted vacancies are further disaggregated by the occupation of the vacancy: (4) preferred (target) occupation, (5) suggested occupations, and (6) other occupations. The sample is restricted to job seekers who submitted at least one application on the PES platform in the month prior to the intervention. All regressions control for perceived three-month reemployment probability and the number of applications made in the month before the intervention. Application counts are winsorized at the 98th percentile of strictly positive values. P-values of Fisher tests for equal control mean and equal effect of the treatment are also reported. Significance levels: $^{*}$p$<$0.1; $^{**}$p$<$0.05; $^{***}$p$<$0.01.
\end{tablenotes}
\end{threeparttable}
\end{adjustbox}
\end{table}

\subsection{Interpretation}

Occupational recommendations affect optimistic and pessimistic job seekers through different adjustment margins. This contrast does not stem from large differences in baseline diversification: in the control group, the two groups display broadly similar distributions of applications across occupation categories, although pessimists submit slightly more applications overall.

\medskip

Pessimistic job seekers already operate relatively extensive search strategies, leaving limited scope for further increases in applications to posted vacancies. The occupational recommendation does trigger a behavioral response among pessimists, but this response takes the form of a strong increase in spontaneous applications directed to firms rather than a reallocation toward the recommended occupations. Because the occupational content of spontaneous applications is unobserved, we cannot determine whether these additional applications target the suggested occupations.

\medskip

Optimistic job seekers have greater scope for expanding applications to posted vacancies, and the additional applications are disproportionately directed toward the suggested occupations: applications to suggested occupations increase by 21\%, compared with 10\% for preferred and 8\% for other occupations. The recommendation redirects part of their search effort toward the highlighted occupations.

\subsection{Beliefs and mechanisms}

A natural interpretation of occupational recommendations is that they operate through belief updating. Job seekers may underestimate their chances of success outside their target occupation, and recommendations may increase applications by correcting these misperceptions.

\medskip 

Our baseline survey data reveal substantial heterogeneity in perceived opportunities across occupations. Some job seekers perceive relatively favorable opportunities outside their target occupation, whereas others perceive sharply lower chances of success outside their preferred occupation. However, despite this heterogeneity, our results provide little support for a standard belief-updating mechanism. First, the intervention does not significantly modify subjective beliefs regarding success probabilities across occupations (see prediction P.1). Second, treatment effects are concentrated among individuals who already perceive relatively favorable opportunities outside their target occupation before the intervention (see prediction P.2). The recommendation therefore appears to activate search behavior among job seekers who are already aware of outside opportunities, rather than to generate new beliefs about these opportunities.

\medskip 

To summarize how individuals allocate perceived chances across occupations, we construct the  following individual measure of belief narrowness:
\[
h = \frac{p_t}{p_t + p_1 + p_2},
\]
where $p_t$ (respectively $p_1$ and $p_2$) denotes the perceived probability of success in the target occupation (respectively in suggested occupations 1 and 2). Higher values of $h$ indicate that perceived opportunities are more strongly concentrated on the target occupation, while lower values correspond to individuals with relatively high perceived outside options.

\medskip

Figure \ref{fig:beliefs_relocc} reports the empirical distribution of this index.\footnote{Appendix Figure \ref{fig:beliefs_raw} presents the distributions of $p_t$, $p_1$, and $p_2$.} Lower values correspond to individuals with relatively balanced perceived opportunities across occupations, whereas higher values indicate stronger concentration of perceived opportunities on the target occupation.\footnote{These perceptions are strongly related to pre-intervention search behavior: individuals who perceive more favorable outside opportunities already diversify their applications more across occupations (see Appendix Table \ref{tab:desc_model})}

\medskip

We classify individuals according to whether they perceive relatively high chances of success outside their target occupation. Specifically, we define
\[
S = \mathbf{1}\{h < \text{median}(h)\},
\]
so that $S=1$ identifies individuals with \textit{high perceived outside options}, that is, individuals whose perceived probabilities are relatively more balanced across occupations and less concentrated on the target occupation.

\paragraph{Heterogeneity by initial perceptions.}

We first examine whether treatment effects depend on initial perceptions of opportunities in alternative occupations. We estimate the following specification:
\begin{equation*}
    Y_i = \alpha + \gamma S_i + \beta_T T_i + \beta_S S_iT_i + X_i'\lambda + \varepsilon_i,
\end{equation*}
where $T_i$ is the occupational-treatment indicator, $S_i$ the high perceived outside option dummy and $X_i$ includes the same controls as in the main specification.

\medskip 

Table \ref{tab:ate_by_initial_beliefs} reports the results. For optimistic job seekers, treatment effects are concentrated among individuals with high perceived outside options. In this group, applications to suggested occupations increase by about 0.33, which corresponds to roughly 45\% of the control mean, and applications to other occupations also rise significantly. By contrast, there is no significant effect for individuals whose perceived chances are more concentrated on the target occupation.

\medskip 

For pessimistic job seekers, we find no significant treatment effects regardless of initial perceptions. The absence of response is therefore not explained by differences in perceived opportunities across occupations.

\begin{table}[htbp]
\centering
\caption{Occupational Treatment Effects on Applications: Heterogeneity by Initial Perceived Outside Options}
\label{tab:ate_by_initial_beliefs}

\scalebox{0.8}{
\begin{tabular}{lcccc}
\toprule
& \multicolumn{4}{c}{Applications in the Two Months Following the Intervention} \\
\cmidrule(lr){2-5}
& To a Vacancy & Target Occ. & Suggested Occ. & Other Occ. \\
\midrule

\multicolumn{5}{l}{\textbf{Optimists}} \\

High perceived outside options
& \begin{tabular}[c]{@{}c@{}}-0.357\\(0.505)\end{tabular}
& \begin{tabular}[c]{@{}c@{}}-0.171\\(0.185)\end{tabular}
& \begin{tabular}[c]{@{}c@{}}0.188\\(0.138)\end{tabular}
& \begin{tabular}[c]{@{}c@{}}-0.374\\(0.333)\end{tabular}
\\[1.5ex]

Occ. treatment
& \begin{tabular}[c]{@{}c@{}}-0.249\\(0.413)\end{tabular}
& \begin{tabular}[c]{@{}c@{}}-0.060\\(0.147)\end{tabular}
& \begin{tabular}[c]{@{}c@{}}0.036\\(0.106)\end{tabular}
& \begin{tabular}[c]{@{}c@{}}-0.224\\(0.285)\end{tabular}
\\[1.5ex]

High perceived outside options $\times$ Occ. treatment
& \begin{tabular}[c]{@{}c@{}}1.240**\\(0.567)\end{tabular}
& \begin{tabular}[c]{@{}c@{}}0.136\\(0.206)\end{tabular}
& \begin{tabular}[c]{@{}c@{}}0.331**\\(0.159)\end{tabular}
& \begin{tabular}[c]{@{}c@{}}0.768**\\(0.371)\end{tabular}
\\

\midrule

Control mean
& 4.630 & 1.190 & 0.749 & 2.690 \\

Observations
& 4,522 & 4,522 & 4,522 & 4,522 \\

\midrule

\multicolumn{5}{l}{\textbf{Pessimists}} \\

High perceived outside options
& \begin{tabular}[c]{@{}c@{}}1.020\\(0.704)\end{tabular}
& \begin{tabular}[c]{@{}c@{}}-0.044\\(0.213)\end{tabular}
& \begin{tabular}[c]{@{}c@{}}0.463**\\(0.215)\end{tabular}
& \begin{tabular}[c]{@{}c@{}}0.603\\(0.493)\end{tabular}
\\[1.5ex]

Occ. treatment
& \begin{tabular}[c]{@{}c@{}}0.178\\(0.549)\end{tabular}
& \begin{tabular}[c]{@{}c@{}}0.133\\(0.203)\end{tabular}
& \begin{tabular}[c]{@{}c@{}}0.121\\(0.158)\end{tabular}
& \begin{tabular}[c]{@{}c@{}}-0.076\\(0.387)\end{tabular}
\\[1.5ex]

High perceived outside options $\times$ Occ. treatment
& \begin{tabular}[c]{@{}c@{}}-0.276\\(0.863)\end{tabular}
& \begin{tabular}[c]{@{}c@{}}-0.037\\(0.286)\end{tabular}
& \begin{tabular}[c]{@{}c@{}}-0.271\\(0.259)\end{tabular}
& \begin{tabular}[c]{@{}c@{}}0.032\\(0.596)\end{tabular}
\\

\midrule

Control mean
& 4.390 & 1.130 & 0.780 & 2.470 \\

Observations
& 1,303 & 1,303 & 1,303 & 1,303 \\

\bottomrule
\end{tabular}
}

\vspace{0.75em}

\parbox{\textwidth}{\footnotesize
\textit{Notes:} ``Occ.'' denotes occupational. This table reports treatment effects of the occupational recommendation on applications submitted during the two months following the intervention, allowing for heterogeneity by initial perceived outside options. The first column reports effects on the total number of applications submitted to posted vacancies. The remaining columns decompose applications by occupation type: target occupation, suggested occupation, and all other occupations. All specifications control for perceived 3-month reemployment probability and the number of applications submitted in the month prior to the intervention. Treatment wording variants are pooled. The sample is restricted to individuals who submitted at least one application on the PES platform before the intervention. Robust standard errors are reported in parentheses. $^{*}p<0.10$, $^{**}p<0.05$, and $^{***}p<0.01$.
}

\end{table}

\paragraph{Does the intervention affect beliefs?}

We then examine whether occupational recommendations directly shift subjective beliefs. Using follow-up survey data, we reconstruct the post-intervention belief narrowness index $h_{post}$ and estimate the effect of the intervention on this measure, controlling for baseline beliefs and allowing for heterogeneity by initial perceived outside options.

\medskip 

Table \ref{tab:ate_by_candidature_avant_full} shows little evidence that the intervention modifies subjective beliefs regarding the relative concentration of perceived opportunities across occupations as measured by the post-intervention belief narrowness index, neither for pessimists or optimists.\footnote{Table \ref{tab:ate_by_candidature_avant_full} details impacts on the perceived probabilities of success in the target occupation $p_{t,post}$ and the two suggested occupations $p_{1,post}$ and $p_{2,post}$. We do not detect systematic changes in perceived success probabilities in either the target or suggested occupations, although a few isolated coefficients are statistically significant, their magnitude remains small relative to baseline belief levels and no coherent pattern of belief updating emerges across groups or occupations.}

This result is important because it speaks directly to the standard theory of change underlying such recommendations. Under a belief-updating mechanism, one would expect individuals to revise their perceptions of alternative occupations after receiving the recommendation, especially those for whom the recommendation reveals attractive opportunities outside their target occupation. We do not observe such a pattern.

Overall, the occupational recommendation does not seem to shift beliefs, and its behavioral effects concentrate among workers who already perceive favorable outside options. This pattern points to attention activation rather than belief updating, a reading we formalize and test against additional evidence in Section~\ref{subsec:interpretation}.

\section{Results: Motivational Arm}\label{sec:motivational_impact}

This section examines the motivational recommendation, which targets the specific constraints of pessimistic job seekers: low search effort and low reservation wages (Table~\ref{tab:bias_grp_diff1}). The intervention encourages higher search intensity and provides reference wages for better-paying jobs (Section~\ref{sec:motivational}). We focus on effort, reservation wages, and underlying beliefs measured three weeks after the intervention.

\subsection{Econometric specification}

As in the analysis of occupational recommendations, identification relies on the random assignment of treatments. We estimate specifications that either pool all wording treatments or allow for heterogeneous effects across wording variants.

\medskip 

To estimate average treatment effects pooling across all wording treatments, we use the following specification:
\begin{equation}
\label{eq:mod2}
Y_i = \alpha + T^o_i \beta^o + T^m_i \beta^m + X_i'\lambda + \varepsilon_i,
\end{equation}
where $Y_i$ denotes either search behavior or subjective beliefs for individual $i$, measured in the follow-up survey. $T^o_i$ and $T^m_i$ are indicators for assignment to the occupational and motivational treatments, respectively, such that $\beta^o$ and $\beta^m$ capture the corresponding average treatment effects. The vector $X_i$ includes baseline measures of search behavior, related subjective beliefs, and target-occupation fixed effects, all of which are strong predictors of the outcomes.

\medskip 

We focus on search parameters reported in the follow-up survey to assess how each recommendation affects job search behavior. Because these outcomes are measured post-intervention, the estimation sample is restricted to individuals who completed the follow-up survey, representing approximately 24\% of enrolled participants (Table~\ref{tab:design}).\footnote{This attrition raises the concern that differential selection into the follow-up could bias the estimates. We address this in two ways. First, response rates conditional on invitation are virtually identical across treatment arms (Table~\ref{tab:perm_test_results}): permutation tests fail to reject equal response rates both unconditionally and after covariate residualization, supporting the response-invariance assumption formalized in Appendix~\ref{sec:selectedsurvey}. Second, we replicate the main results using inverse probability weighting (IPW) and double-debiased machine learning (DML) estimators that flexibly control for observable predictors of follow-up participation (Tables~\ref{tab:dml_eff_selected} and~\ref{tab:dml_wage_selected}); estimates are very close to the naive specification throughout, indicating that selection on observables is of limited magnitude.}

\subsection{Search effort}

To analyze the impact on search effort, we rely exclusively on responses from the follow-up survey (see Section~\ref{subsec:surveys}).

\paragraph{Average treatment effects.}

We estimate average treatment effects using equation \eqref{eq:mod2}. Results are reported in Table \ref{tab:ate_search_behavior}. The motivational recommendation increases weekly search effort by 0.89 hours  on average. By contrast, the occupational recommendation has no statistically significant effect on search effort: the estimated coefficient is small (0.14 hours/week) and not significant. This pattern reinforces the idea that encouraging diversification is not an effective margin of adjustment for pessimistic job seekers, who instead respond along the effort margin.\footnote{Table \ref{tab:ate_search_behavior_wording} reports average treatment effects of the two different combinations of our two suggestions and the three possible wordings. In addition to the ``naive'' specification it also provides the estimated ATE when using the DML correction and the IPW weighting. We find all specifications to yield similar estimates, suggesting that the bias introduced by relying on the follow-up subsample is of limited magnitude. }

\paragraph{Heterogeneity by baseline search effort.}

The effect of the motivational recommendation should depend on initial search intensity: the intervention explicitly encourages individuals to reach a benchmark of approximately 10 hours of weekly search effort (three half-days), so treatment effects should concentrate among individuals below this threshold.

\medskip 

We test this hypothesis by splitting the sample according to pre-intervention job-search effort (below versus above 10 hours per week). Results are reported in columns (2) and (3) of Table \ref{tab:ate_search_behavior} for individuals with low and high baseline effort, respectively. The effects of the motivational recommendation are entirely driven by individuals with low baseline effort. Among those searching fewer than 10 hours per week, the treatment increases search effort by approximately 1.6 hours per week (20\% relative to baseline), and the estimates are highly statistically significant. In contrast, column (3) shows no statistically significant effect among individuals already searching more than 10 hours per week.

\medskip 

This pattern is consistent with the hypothesis that the motivational recommendation relaxes a binding effort constraint for low-effort job seekers, while having limited scope to affect individuals who are already searching intensively.

\begin{table}[htbp]
\centering
\caption{Intervention Effects on Search Parameters One Month After the Intervention}
\label{tab:ate_search_behavior}

\scalebox{0.78}{
\begin{tabular}{lcccccc}
\toprule
& \multicolumn{6}{c}{Declared Search Parameters} \\
\cmidrule(lr){2-4} \cmidrule(lr){5-7}
& \multicolumn{3}{c}{Search Effort (hours/week)} & \multicolumn{3}{c}{Log Reservation Wage} \\
\cmidrule(lr){2-4} \cmidrule(lr){5-7}
& (1) & (2) & (3) & (4) & (5) & (6) \\
& Full Sample & Low Effort Before & High Effort Before & Full Sample & Low Wage & High Wage \\
\midrule

Occ. treatment
& \begin{tabular}[c]{@{}c@{}}0.138\\(0.306)\end{tabular}
& \begin{tabular}[c]{@{}c@{}}0.524\\(0.368)\end{tabular}
& \begin{tabular}[c]{@{}c@{}}-0.320\\(0.513)\end{tabular}
& \begin{tabular}[c]{@{}c@{}}0.006\\(0.009)\end{tabular}
& \begin{tabular}[c]{@{}c@{}}0.000\\(0.011)\end{tabular}
& \begin{tabular}[c]{@{}c@{}}0.009\\(0.013)\end{tabular}
\\[1.5ex]

Motivational treatment
& \begin{tabular}[c]{@{}c@{}}0.896***\\(0.310)\end{tabular}
& \begin{tabular}[c]{@{}c@{}}1.590***\\(0.382)\end{tabular}
& \begin{tabular}[c]{@{}c@{}}0.156\\(0.509)\end{tabular}
& \begin{tabular}[c]{@{}c@{}}0.019**\\(0.009)\end{tabular}
& \begin{tabular}[c]{@{}c@{}}0.024**\\(0.011)\end{tabular}
& \begin{tabular}[c]{@{}c@{}}0.016\\(0.014)\end{tabular}
\\

\midrule

Control Mean
& 12.100 & 8.200 & 17.200
& 2,034\,\euro & 1,853\,\euro & 2,172\,\euro \\

Equality p-value ($\beta^o = \beta^m$)
& 0.006 & 0.002 & 0.271
& 0.059 & 0.008 & 0.480 \\

Subgroup Equality p-value
& \multicolumn{3}{c}{0.043}
& \multicolumn{3}{c}{0.651} \\

Observations
& 3,485 & 1,882 & 1,603
& 2,366 & 937 & 1,429 \\

$R^2$
& 0.342 & 0.111 & 0.175
& 0.712 & 0.679 & 0.702 \\

\bottomrule
\end{tabular}
}

\vspace{0.75em}

\parbox{\textwidth}{\footnotesize
\textit{Notes:} ``Occ.'' denotes occupation. This table reports estimated average treatment effects of the occupational and motivational recommendations on search parameters reported in the follow-up survey, pooling all wording variants. Outcomes include declared search effort and the logarithm of the reservation wage. Heterogeneity analyses split the sample according to pre-intervention search intensity (below versus above 10 hours per week) and according to reservation wage declared at PES registration (below versus above the median reservation wage among individuals with the same target occupation). The row ``Equality p-value'' reports tests of the null hypothesis that the occupational and motivational treatment effects are equal. The row ``Subgroup Equality p-value'' reports tests of equality of treatment effects across the low and high subgroups for each outcome. The sample is restricted to job seekers who completed the follow-up survey. Robust standard errors are reported in parentheses. $^{*}p<0.10$, $^{**}p<0.05$, and $^{***}p<0.01$.
}

\end{table}

\subsection{Reservation wage.}

Job seekers in the pessimist group set lower reservation wages than their optimistic counterparts, even after controlling for demographic characteristics and target occupations. The motivational recommendation therefore aims not only to increase search effort, but also to raise job seekers’ aspirations by encouraging them to target higher wages and higher-quality jobs.

\paragraph{Average treatment effects.}

We estimate the average treatment effects of the recommendations on post-intervention reservation wages using equation \eqref{eq:mod2}. Similar to the analysis of search effort, the naïve specification controls for pre-intervention reservation wages and subjective beliefs about the wage-offer distribution.\footnote{Specifically, we include job seekers’ perceived probability that a job offer would propose a wage above the median wage.} We also include fixed effects for target occupations which is strongly predictive of the reservation wage level.\footnote{Table \ref{tab:dml_wage_selected} reports the corresponding wording-specific estimates together with the DML and IPW corrections; as for search effort, all specifications yield very similar estimates.}

\medskip

The motivational recommendation increases reservation wages by 1.9\% on average. In contrast, the occupational recommendation, which does not target this search parameter, has no statistically significant effect. Table \ref{tab:ate_search_behavior} reports coefficient estimates and the p-value for the test of equality between the two treatment effects. Despite relatively small sample sizes, the p-value is below 0.10, leading us to reject the null hypothesis of equal effects at the 10\%  level. 

\medskip

\paragraph{Heterogeneity.}

Once again, the content of the motivational recommendation suggests the presence of heterogeneous treatment effects. While all individuals in this group are encouraged to search for higher-paying job opportunities, some job seekers are additionally provided with information on the median reservation wage among individuals sharing the same target occupation and living in the same region.\footnote{The reference reservation wage is computed using reservation wages declared at PES registration.} To avoid discouraging job seekers, this reference wage is disclosed only to those whose pre-intervention reservation wage falls below this benchmark. As a result, the informational content of the treatment varies with individuals’ initial reservation wages.

\medskip

We estimate treatment effects separately according to whether job seekers received additional information about the reservation wage of comparable individuals. Columns (5) and (6) of Table \ref{tab:ate_search_behavior} report the corresponding average treatment effects for individuals whose pre-intervention reservation wage was below and above the reference level, respectively. Among the former, the motivational recommendation increases reservation wages by 2.4\%. By contrast, we find no statistically significant effect among individuals whose reservation wages were already above the reference level prior to the intervention (column (6)). This heterogeneity likely reflects both differences in treatment intensity (the additional wage information was provided only to the low-wage group) and differences in the scope for adjustment.

\medskip

Overall, the motivational recommendation effectively raises reservation wages among pessimistic job seekers. Its effects are concentrated among individuals who initially set low reservation wages, which is the group explicitly targeted by the intervention.

\subsection{Impact of motivational recommendations on applications}

We now examine whether the changes in search parameters induced by the motivational recommendation translate into changes in job seekers’ application behavior. Because the intervention explicitly encourages individuals to target better job opportunities, its effect on the total number of applications is theoretically ambiguous. Greater selectivity, combined with increased search intensity, may leave the total number of applications unchanged. We therefore turn to the characteristics of the vacancies to which job seekers apply to assess whether the intervention affects the quality of their applications.

\medskip

Table \ref{tab:ate_applications_pessimists_id_avg} reports estimates of equation \eqref{eq:mod2} using either the number of applications or the characteristics of the vacancies to which individuals apply as outcomes. We focus in particular on the normalized wage measure $\tilde w$, defined as the ratio of a vacancy's posted wage to the job seeker's baseline reservation wage.

\medskip 

The table is organized into three panels. The top panel A reports results for the full sample of pessimistic job seekers, while the bottom panels B and C  split the sample according to whether the baseline reservation wage is respectively below or above the median reservation wage within the same occupation. This distinction is central to the design of the intervention: information on the median reservation wage is only provided to individuals whose own reservation wage lies below this benchmark (see Section \ref{sec:motivational}).

\medskip 

We find no effect on the total number of applications in column (1), either in the full sample or within either subsample. In contrast, the motivational recommendation leads to a shift in the type of vacancies targeted. In the full sample, we observe an increase in the lower part of the wage distribution (25th percentile, column (4)), although average effects remain small.

\medskip 

This effect is entirely driven by job seekers with below-median reservation wages, i.e., those explicitly encouraged to target higher-paying jobs. For this group, we find a positive and statistically significant effect on the normalized wage $\tilde w$, as well as a 11.5 percentage point increase in the probability of applying to at least one vacancy offering a wage above their baseline reservation wage. The quantile treatment effects further indicate sizable and statistically significant shifts mostly located in the first half of the distribution.

\medskip 

By contrast, we find no meaningful effects for job seekers with above-median reservation wages, consistent with the fact that the intervention does not provide additional wage information to this group. Taken together, these results provide clear evidence that the motivational recommendation increases the selectivity of applications among individuals with initially low reservation wages. The increase in search effort documented in the previous section, combined with this higher selectivity, naturally results in a stable total number of applications.

\begin{table}[htbp]
\centering
\caption{Application Characteristics by Treatment Arm in the Two Months Following  the Intervention}
\label{tab:ate_applications_pessimists_id_avg}

\scalebox{0.9}{
\begin{tabular}{lcccccc}
\toprule
%& \multicolumn{6}{c}{Applications in the Two Months Following the Intervention} \\
%\cmidrule(lr){2-7}
& (1) & (2) & (3) & (4) & (5) & (6) \\
& & \multicolumn{5}{c}{$\tilde w = \text{Posted Wage} / \text{Reservation Wage (Baseline)}$} \\
\cmidrule(lr){3-7}
& Applications & \multicolumn{2}{c}{OLS} & \multicolumn{3}{c}{Quantile Treatment Effects} \\
\cmidrule(lr){3-4} \cmidrule(lr){5-7}
& Count & $\tilde w$ & $1(\tilde w \geq 1)$ & 25th Pct. & 50th Pct. & 75th Pct. \\
\midrule

\multicolumn{7}{l}{\textbf{Panel A. Full Sample}} \\

Motivational treatment
& \begin{tabular}[c]{@{}c@{}}0.007\\(0.077)\end{tabular}
& \begin{tabular}[c]{@{}c@{}}0.013\\(0.023)\end{tabular}
& \begin{tabular}[c]{@{}c@{}}0.047\\(0.036)\end{tabular}
& \begin{tabular}[c]{@{}c@{}}0.036***\\(0.017)\end{tabular}
& \begin{tabular}[c]{@{}c@{}}0.009\\(0.011)\end{tabular}
& \begin{tabular}[c]{@{}c@{}}0.033\\(0.030)\end{tabular}
\\[1.5ex]

Occ. treatment
& \begin{tabular}[c]{@{}c@{}}-0.021\\(0.078)\end{tabular}
& \begin{tabular}[c]{@{}c@{}}0.006\\(0.021)\end{tabular}
& \begin{tabular}[c]{@{}c@{}}0.041\\(0.035)\end{tabular}
& \begin{tabular}[c]{@{}c@{}}0.037***\\(0.016)\end{tabular}
& \begin{tabular}[c]{@{}c@{}}0.004\\(0.008)\end{tabular}
& \begin{tabular}[c]{@{}c@{}}0.002\\(0.021)\end{tabular}
\\

\midrule

\multicolumn{7}{l}{\textbf{Panel B. Low Wage Prior}} \\

Motivational treatment
& \begin{tabular}[c]{@{}c@{}}-0.035\\(0.119)\end{tabular}
& \begin{tabular}[c]{@{}c@{}}0.058*\\(0.031)\end{tabular}
& \begin{tabular}[c]{@{}c@{}}0.115**\\(0.055)\end{tabular}
& \begin{tabular}[c]{@{}c@{}}0.050***\\(0.025)\end{tabular}
& \begin{tabular}[c]{@{}c@{}}0.064***\\(0.021)\end{tabular}
& \begin{tabular}[c]{@{}c@{}}0.047\\(0.036)\end{tabular}
\\[1.5ex]

Occ. treatment
& \begin{tabular}[c]{@{}c@{}}-0.117\\(0.118)\end{tabular}
& \begin{tabular}[c]{@{}c@{}}0.021\\(0.025)\end{tabular}
& \begin{tabular}[c]{@{}c@{}}0.044\\(0.054)\end{tabular}
& \begin{tabular}[c]{@{}c@{}}0.021\\(0.024)\end{tabular}
& \begin{tabular}[c]{@{}c@{}}0.014\\(0.017)\end{tabular}
& \begin{tabular}[c]{@{}c@{}}0.010\\(0.039)\end{tabular}
\\

\midrule

\multicolumn{7}{l}{\textbf{Panel C. High Wage Prior}} \\

Motivational treatment
& \begin{tabular}[c]{@{}c@{}}0.043\\(0.101)\end{tabular}
& \begin{tabular}[c]{@{}c@{}}-0.028\\(0.034)\end{tabular}
& \begin{tabular}[c]{@{}c@{}}-0.006\\(0.041)\end{tabular}
& \begin{tabular}[c]{@{}c@{}}0.025\\(0.018)\end{tabular}
& \begin{tabular}[c]{@{}c@{}}-0.005\\(0.010)\end{tabular}
& \begin{tabular}[c]{@{}c@{}}-0.020\\(0.024)\end{tabular}
\\[1.5ex]

Occ. treatment
& \begin{tabular}[c]{@{}c@{}}0.062\\(0.103)\end{tabular}
& \begin{tabular}[c]{@{}c@{}}-0.005\\(0.033)\end{tabular}
& \begin{tabular}[c]{@{}c@{}}0.051\\(0.041)\end{tabular}
& \begin{tabular}[c]{@{}c@{}}0.034***\\(0.017)\end{tabular}
& \begin{tabular}[c]{@{}c@{}}0.017\\(0.010)\end{tabular}
& \begin{tabular}[c]{@{}c@{}}0.000\\(0.021)\end{tabular}
\\

\midrule

Observations
& 17,073 & 13,111 & 13,111 & 13,111 & 13,111 & 13,111 \\

Job Seekers
& 17,073 & 2,695 & 2,695 & 2,695 & 2,695 & 2,695 \\

\bottomrule
\end{tabular}
}

\vspace{0.75em}

\parbox{\textwidth}{\footnotesize
\textit{Notes:} ``Occ.'' denotes occupation. This table reports treatment effects on the number of applications submitted and on the characteristics of vacancies applied to during the two months following the intervention. Reservation wages are measured at baseline, and all outcomes are constructed from administrative data. Column 1 reports effects on the total number of applications. Columns 2--6 are conditional on having submitted at least one application and are based on the normalized wage measure $\tilde w$, defined as the ratio of the posted wage to the individual's baseline reservation wage. Column 2 reports effects on $\tilde w$, Column 3 on the indicator $1(\tilde w \geq 1)$, and Columns 4--6 report quantile treatment effects at the 25th, 50th, and 75th percentiles of $\tilde w$. The top panel reports results for all pessimistic job seekers. The lower panels split the sample according to whether the reservation wage declared at PES registration is below or above the median among comparable job seekers (defined within occupation and location). The sample is restricted to pessimistic individuals with a reported baseline reservation wage. Standard errors clustered at the job seeker level are reported in parentheses. $^{*}p<0.10$, $^{**}p<0.05$, and $^{***}p<0.01$.
}

\end{table}

\section{Reemployment Outcomes}\label{sec:employment}

We now examine the impact of the intervention on job seekers’ reemployment outcomes. We focus on the first employment spell beginning after the intervention date (October 2\textsuperscript{nd}, 2023). To do so, we rely on an exhaustive administrative database containing monthly wage declarations reported by employers for each worker. For each job seeker in our sample, we identify the first post-intervention job and consider three outcomes: whether an employment spell begins within twelve months of the intervention, whether this job is a permanent contract, and its associated hourly wage. Importantly, these administrative data are exhaustive, allowing us to include all individuals who participated in the baseline survey. As a result, our estimates are not restricted to job seekers who use the PES platform or who responded to the follow-up survey.

\subsection{Effect of occupational recommendation for the optimists}

We find that the occupational recommendation has a limited effect on the reemployment outcomes of job seekers classified as optimists. Table \ref{tab:ate_reemployment_optimists} reports the average treatment effects of the occupational recommendation on the three reemployment outcomes, restricting the sample to job seekers classified as optimists. In the full sample (Panel A), the recommendation does not significantly affect reemployment outcomes, despite having a statistically significant positive effect on the number of applications submitted. Treated individuals are no more likely than control individuals to find a job within twelve months of the intervention, and among those who do find a job, contract types and wages are similar across groups. These conclusions also hold when the sample is split according to initial beliefs about the narrowness of reemployment prospects across occupations. The only significant effect is a 1.5\% increase in the average normalized hiring wage (relative to the reservation wage) for those with a narrow initial belief about the wage dispersion (Panel B, column (4)), driven by those in the upper part of the wage distribution (column (8)).

\medskip

%%%%%%%%%%%%%%%%%%%%%%%%%%%%%%%%%%%%%%%%%%%%%%%%%
\subsection{Pessimists}

We now examine the impact of both occupational and motivational recommendations on the reemployment outcomes of job seekers classified as pessimists. We consider the same three outcomes as in the previous section and control for baseline beliefs about three-month reemployment probabilities, as well as for the number of applications submitted in the month preceding the intervention. Table \ref{tab:ate_reemployment_pessimists} reports the corresponding average treatment effects.

\medskip

The upper panel A of Table \ref{tab:ate_reemployment_pessimists} report the effects of the occupational recommendation, first for the full sample and then across subgroups defined by initial search effort and reservation wages. While these dimensions are primarily relevant for the motivational treatment, they provide a useful benchmark for assessing heterogeneity in the occupational treatment as well.

\medskip 

Overall, the occupational recommendation has modest effects on reemployment outcomes. We find a small increase in the probability of reemployment within 12 months, amounting to approximately 1.8 percentage points relative to a control mean of 40.7\%, although the estimate is only marginally statistically significant. This increase is accompanied by a slight decline in the probability of obtaining a permanent contract and a modest increase in wages. However, these effects are also limited in magnitude and statistical significance, and are primarily concentrated among job seekers with higher initial search effort. Taken together, these results suggest that the occupational recommendation may induce some degree of occupational mobility among pessimists, leading job seekers to reallocate across jobs with different characteristics. 

\medskip

The route to this reemployment gain differs from the one the recommendation was designed to activate. Pessimists did not take up the recommendation along its intended margin: their applications to the suggested occupations, and to posted vacancies more broadly, did not increase (Table~\ref{tab:ate_app}), and the occupational treatment left their search effort unchanged (Table~\ref{tab:ate_search_behavior}). Their only behavioral response was a rise in spontaneous, direct-to-firm applications (an increase of roughly 90\%). The occupational reemployment gain among pessimists therefore does not operate through diversification into the recommended occupations, nor through greater search intensity. Because reemployment is measured from exhaustive administrative hiring records, independently of the channel through which a job is found, this pattern is most consistent with a generic activation of search: the recommendation nudges pessimists toward an untargeted, spontaneous application channel whose occupational content we do not observe, and this shows up as a marginal reemployment gain. Consistent with a mobility rather than a selectivity response, the gain coincides with a decline in the probability of a permanent contract (Table~\ref{tab:ate_reemployment_pessimists}, column~(3)): pessimists appear to enter employment somewhat faster, but into less stable jobs. Given the marginal precision of these estimates, we read this channel as suggestive rather than established. The comparison with the motivational arm is instructive: both raise pessimists' reemployment by a comparable amount, but only the motivational arm does so along the margins it targets and while improving the quality of accepted jobs. Matching the recommendation to the underlying friction thus governs not merely whether reemployment rises, but the margin through which it rises and the quality of the jobs obtained.

\medskip

The lower panel B report the effects of the motivational recommendation. As discussed above, this intervention is designed to increase search effort, especially among individuals with initially low effort, and to encourage applications to higher-paying jobs for those with below-median reservation wages.

\medskip 

We find a modest but statistically significant increase in reemployment at 12 months, of 2.2 percentage points relative to a control mean of 40.7\%. This effect is primarily driven by job seekers with low initial search effort, consistent with the intended mechanism of the intervention.

\medskip 

In contrast, we do not observe a significant effect on reemployment among individuals with below-median reservation wages. However, for this group, the motivational recommendation leads to improvements in the quality of jobs obtained. In particular, we observe an increase in wages relative to baseline reservation wages: the average normalized wage increases by 0.040 (relative to a control mean of 1.17), and the probability of obtaining a job with a wage above the baseline reservation wage increases by 5.3 percentage points (from a control mean of 55.4\%). We also find significant quantile treatment effects, particularly at the upper end of the distribution.

\medskip

Taken together, the motivational recommendation affects both margins that the model identifies as constrained among pessimists. For workers with low baseline search effort, it increases the probability of reemployment by 2.9 percentage points. For workers with low baseline reservation wages, it improves the quality of accepted jobs, increasing the normalized wage by 0.040 and the probability that the accepted wage exceeds the reservation wage by 5.3 percentage points. The overall effect across these two subgroups is therefore stronger than the effect on either margin alone, and each margin responds precisely in the dimension targeted by the intervention. All results on job characteristics are conditional on reemployment and therefore pertain to a selected subpopulation. Accordingly, they should be interpreted as changes in the composition of accepted jobs rather than as causal effects for the full randomized sample.

%%%%%%%%%%%%%%%%%%%%%%%%%%%%%%%%%%%%%%%%%%%%%%%%

\begin{table}[!htbp]
\centering
\caption{Reemployment Outcomes of Pessimists}
\label{tab:ate_reemployment_pessimists}
\scalebox{0.8}{
\begin{threeparttable}
\begin{tabular}{lcccccccc}
\toprule
& (1) & (2) & (3) & (4) & (5) & (6) & (7) & (8) \\
\cmidrule(lr){2-9}
Dependent variable
& \multicolumn{2}{c}{Reemployment} 
& Contract 
& \multicolumn{5}{c}{Hiring wage relative to reservation wage} \\
\cmidrule(lr){2-3} \cmidrule(lr){4-4} \cmidrule(lr){5-9}
& 6 months & 12 months & Permanent & Mean & $\geq 1$ 
& QTE 25 & QTE 50 & QTE 75 \\
\midrule

\multicolumn{9}{l}{\textbf{Panel A. Occupational treatment}} \\

Full sample 
& 0.017$^{*}$ & 0.018$^{*}$ & -0.021$^{*}$ & -0.007 & 0.033 & 0.012 & 0.009 & 0.002 \\
& (0.009) & (0.010) & (0.013) & (0.011) & (0.022) & (0.011) & (0.007) & (0.016) \\

\addlinespace
\multicolumn{9}{l}{\textit{Heterogeneity by initial search effort}} \\

$\times$ Low effort prior 
& 0.010 & 0.013 & -0.014 & -0.017 & -0.004 & 0.010 & 0.000 & -0.012 \\
& (0.011) & (0.013) & (0.018) & (0.016) & (0.030) & (0.016) & (0.011) & (0.028) \\

$\times$ High effort prior 
& 0.023 & 0.022 & -0.033$^{*}$ & 0.004 & 0.075$^{**}$ & 0.011 & 0.018$^{**}$ & 0.026 \\
& (0.015) & (0.016) & (0.019) & (0.014) & (0.031) & (0.015) & (0.009) & (0.020) \\

\addlinespace
\multicolumn{9}{l}{\textit{Heterogeneity by initial reservation wage}} \\

$\times$ Low wage prior 
& 0.002 & 0.015 & -0.045$^{**}$ & 0.006 & 0.044 & 0.010 & 0.016 & 0.017 \\
& (0.014) & (0.015) & (0.018) & (0.015) & (0.031) & (0.013) & (0.012) & (0.022) \\

$\times$ High wage prior  
& 0.029$^{**}$ & 0.021 & 0.0008 & -0.017 & 0.029 & -0.004 & -0.000 & -0.008 \\
& (0.012) & (0.013) & (0.018) & (0.015) & (0.030) & (0.014) & (0.009) & (0.021) \\

\midrule
\multicolumn{9}{l}{\textbf{Panel B. Motivational treatment}} \\

Full sample 
& 0.009 & 0.022$^{**}$ & -0.011 & 0.009 & 0.030 & 0.010 & 0.010 & 0.030$^{*}$ \\
& (0.009) & (0.010) & (0.013) & (0.011) & (0.021) & (0.011) & (0.007) & (0.015) \\

\addlinespace
\multicolumn{9}{l}{\textit{Heterogeneity by initial search effort}} \\

$\times$ Low effort prior   
& 0.002 & 0.029$^{**}$ & -0.019 & -0.007 & 0.013 & -0.001 & 0.002 & 0.010 \\
& (0.011) & (0.013) & (0.017) & (0.017) & (0.030) & (0.017) & (0.011) & (0.021) \\

$\times$ High effort prior 
& 0.014 & 0.007 & -0.012 & 0.026$^{*}$ & 0.049 & 0.018 & 0.017$^{*}$ & 0.051$^{***}$ \\
& (0.015) & (0.016) & (0.020) & (0.015) & (0.031) & (0.013) & (0.009) & (0.020) \\

\addlinespace
\multicolumn{9}{l}{\textit{Heterogeneity by initial reservation wage}} \\

$\times$ Low wage prior 
& 0.0006 & 0.024 & -0.021 & 0.040$^{**}$ & 0.053$^{*}$ & 0.024$^{*}$ & 0.017 & 0.046$^{**}$ \\
& (0.013) & (0.015) & (0.019) & (0.016) & (0.031) & (0.012) & (0.012) & (0.019) \\

$\times$ High wage prior 
& 0.016 & 0.020 & -0.001 & -0.015 & 0.014 & -0.010 & -0.004 & 0.010 \\
& (0.012) & (0.013) & (0.018) & (0.016) & (0.030) & (0.015) & (0.009) & (0.022) \\

\midrule
Target occupation FE 
& \checkmark & \checkmark & \checkmark & & & & & \\
Initial reemployment beliefs 
& \checkmark & \checkmark & \checkmark & & & & & \\
Initial wage beliefs 
& & & & \checkmark & \checkmark & \checkmark & \checkmark & \checkmark \\
Applications prior 
& \checkmark & \checkmark & \checkmark & \checkmark & \checkmark & \checkmark & \checkmark & \checkmark \\

\midrule
Control mean 
& 0.271 & 0.407 & 0.199 & 1.06 & 0.554 & 0.93 & 1.02 & 1.17 \\
Observations 
& 17,066 & 17,066 & 7,107 & 4,062 & 4,062 & 4,062 & 4,062 & 4,062 \\

\bottomrule
\end{tabular}

\begin{tablenotes}[flushleft]
\footnotesize
\item \textit{Notes:} This table reports regression coefficients from estimates of reemployment outcomes on treatment status for job seekers classified as pessimists. Each column corresponds to a different dependent variable. Permanent contract status and hiring wages are observed conditional on reemployment within twelve months. Hiring wages are expressed relative to the reservation wage. Hourly wages are winsorized between \euro 11.52 and \euro 26. “QTE” denotes quantile treatment effects. “FE” denotes fixed effects. “Low” and “high” effort are defined relative to 10 hours of initial weekly search effort. “Low” and “high” wage are defined relative to the occupation-specific median reservation wage as declared at the PES registration. Standard errors, clustered at the job-seeker level, are reported in parentheses. Significance levels: $^{*}p<0.1$, $^{**}p<0.05$, $^{***}p<0.01$.
\end{tablenotes}

\end{threeparttable}
}
\end{table}

\section{Interpretation: Mapping Predictions to Evidence}\label{subsec:interpretation}

The model implies that a recommendation moves behavior only when it acts on a margin the worker can adjust, and that the three candidate channels introduced in Section~\ref{subsec:mechanisms} (Bayesian updating (BU), attention activation (AA), and aspiration shift (AS)) leave distinct empirical fingerprints. As set out there, BU would shift post-treatment beliefs, most for the most uncertain or biased priors; AA would leave beliefs about success probabilities unmoved, with effects only where priors are positive but inattentive; and AS would also leave beliefs unmoved, raising reservation wages only below the disclosed benchmark. We read the evidence against these signatures in turn. Throughout, we treat the absence of a measured belief response as evidence that updating is unlikely to be the operative channel, not as a decisive rejection: a small or imprecisely estimated update cannot be excluded with our sample, and we say so where it matters.

\medskip

Table~\ref{tab:pred_map} maps each of the eight predictions of Section~\ref{sec:treatment} to the test that discriminates it, the resulting finding, and the supporting evidence. The remainder of this section discusses the reading behind this map, arm by arm.

\paragraph{The occupational arm: the data point to attention rather than belief updating.} If the occupational recommendation worked by transmitting information, posterior beliefs about success in alternative occupations should move, most for workers with low priors. The data give little sign of this: neither the belief-narrowness index $h_{\text{post}}$ nor its interaction with high perceived outside options moves significantly, and the null extends to wage beliefs, perceived arrival rates, and perceived returns to search effort, in both average effects and heterogeneity (Predictions P.1 and P.3; Table~\ref{tab:pred_map}). We read this comprehensive absence of a belief response as evidence against an operative belief-updating channel.

\medskip 

What the data do show is the prior dependence that the attention model predicts. The effect of the occupational treatment on applications is concentrated among workers who already hold dispersed beliefs. Among optimists with high perceived outside options (low $h$), the treatment raises applications to the suggested occupations by roughly $45\%$ of the control mean, and increases applications to other occupations and total applications as well; among workers with narrow beliefs, the estimated effects are small and statistically insignificant throughout. Consistent with this pattern, the interaction coefficient $\beta_S$ is significant at the 5\% level for applications to suggested occupations, other occupations, and vacancies overall (Table~\ref{tab:ate_by_initial_beliefs}).

\medskip 

If the recommendation had provided genuinely new information about $p_k$, one would expect the low-prior group to respond most strongly. Instead, the response is concentrated among workers who already hold favorable priors. Among pessimists, the corresponding interaction coefficient is statistically insignificant ($\beta_S=-0.276$ for total applications, s.e.\ $0.863$), suggesting that limited attention is not their primary friction. Behavior seems to change where workers already perceive an opportunity but had not previously acted upon it.

\medskip 

The machine-learning-based heterogeneity analysis reinforces and refines this reading. Using the generic machine learning (GML) framework of \citet{chernozhukov2018generic}, we examine treatment-effect heterogeneity for the occupational arm on the full sample and separately for pessimists and non-pessimists (Tables~\ref{tab:mcomp_occ_all}--\ref{tab:clan_occ_opt_retour_12m}).\footnote{We use 5-fold cross-fitting, causal boosting as described in \citet{chernozhukov2018generic}, and the X-learner of \citet{kunzel2019metalearners}.}

\medskip 

A first key finding is the heterogeneity between pessimists and non-pessimists. The occupational recommendation has a significant positive effect on 12-month reemployment for pessimists (median ATE $= 2.1$ percentage points, $p = 0.035$), but essentially zero effect for non-pessimists (ATE $= -0.2$ pp, $p = 0.72$). This differential is the primary source of treatment-effect heterogeneity: on the full sample, the best-performing method (Elastic Net with causal boost) yields a heterogeneity loading $\hat{\beta}_2 = 0.224$ with a one-sided $p$-value of $0.079$, marginally significant at the 10\% level. By contrast, no method detects significant within-group heterogeneity when the GML is run separately on pessimists or non-pessimists alone. The treatment does not affect relative wages for any subgroup.

\medskip 

The CLAN analysis yields a consistent profile of treatment beneficiaries across all three samples. A particularly robust discriminator is labor-market tightness in the target occupation. Individuals who benefit most from the intervention face substantially lower labor-market tightness in their primary occupation, implying fewer vacancies relative to the number of job seekers (Tables~\ref{tab:clan_occ_opt_retour_12m} and \ref{tab:clan_occ_pess_retour_12m}). These workers search in labor-market segments where competition for available positions is particularly intense. For optimists, the occupational recommendation appears to benefit them by redirecting search toward occupations offering more abundant employment opportunities. For pessimists, this reallocation is not visible in their posted-vacancy applications, which do not increase; to the extent it occurs, it can only operate through the spontaneous, direct-to-firm applications whose occupational content we do not observe.

\medskip

A second notable characteristic among pessimists concerns beliefs about occupational prospects. Individuals who benefit most from the intervention place relatively greater weight on their primary occupation. This pattern is consistent with the occupational recommendation stimulating search among pessimists whose attention was narrowly concentrated on their primary occupation. Because we do not observe a corresponding broadening of their posted-vacancy applications, we read this as an activation of search rather than as a documented diversification of the occupational set: the recommendation appears most effective when it prompts workers facing weak local demand to search more, plausibly through the unobserved spontaneous channel, rather than when it provides fundamentally new information about hiring probabilities.

\paragraph{The motivational arm: the data support an aspiration and effort shift.} The reference-point mechanism is supported on each margin the model isolates. The motivational recommendation raises the log reservation wage by $1.9\%$ on average, while the occupational arm does not; the two effects differ at the $10\%$ level (Table~\ref{tab:ate_search_behavior}). The effect shows the threshold heterogeneity of Lemma~\ref{lem:rw}: it is concentrated among workers whose pre-intervention reservation wage lay below the peer benchmark ($2.4\%$), with no detectable effect above it. A belief-updating account does not predict this kink at the benchmark. It also accords with the wording analysis: the raise-awareness wording, which directly invites reconsideration of one's target, amplifies the reservation-wage effect, whereas the peers'-behavior wording produces an effect indistinguishable from zero.

\medskip 

The model offers two routes to a higher reservation wage, and the data discriminate between them. The wage could rise because the intervention corrects $\tilde F$-pessimism and lifts the continuation value, or because it shifts the perceived peer aspiration $r_w$ and triggers loss aversion below it (Lemma~\ref{lem:rw}). The elicited probability that an offer would exceed the median wage does not move (see Table~\ref{tab:ols_beliefs}), and, together with the broader finding that no beliefs about labor-market fundamentals shift under either arm, this points to the aspiration channel rather than belief correction as the operative one (prediction P.5). As above, we read an unchanged measured   belief as suggestive rather than dispositive, but it is the pattern the reference-point channel predicts and the $\tilde F$-correction channel does not.

\medskip 

On effort, the motivational recommendation adds about $0.9$ hours of weekly search on average, near-identical across wordings, while the occupational arm has no effect (Tables~\ref{tab:ate_search_behavior} and \ref{tab:ate_search_behavior_wording}). The gain is concentrated where effort was depressed: among workers below the $10$-hour benchmark stated in the message it is about $1.6$ hours, against essentially zero for those already above, with the difference significant ($p=0.043$), as the model predicts when the message shifts the peer effort norm $r_e$. The two responses combine into a change in the \emph{composition} of search rather than its volume: higher $w^*$ and higher effort leave total applications among pessimists ambiguous, and, as documented in Section~\ref{sec:motivational_impact} (Table~\ref{tab:ate_applications_pessimists_id_avg}), the data confirm a reallocation toward higher-wage vacancies rather than an expansion in count.

\medskip 

The GML heterogeneity analysis confirms that the motivational treatment has a significant average effect on 12-month reemployment among pessimists (median ATE $= 2.2$ percentage points, $p = 0.027$; Table~\ref{tab:mcomp_mot}), robust across all machine learning methods. While treatment-effect heterogeneity as measured by the BLP loading $\hat{\beta}_2$ does not reach conventional significance under honest median $p$-values (best method: $\hat{\beta}_2 = 0.102$, one-sided $p = 0.35$), the GATES analysis reveals a clear monotone gradient in the predicted treatment effects (Table~\ref{tab:gates_mot}). The top quintile (G5) gains $5.3$ percentage points in 12-month reemployment, with the 90\% confidence interval excluding zero $[0.2\text{pp},\; 10.3\text{pp}]$, while the bottom quintile (G1) shows a near-zero effect ($-0.9$ pp).

\medskip 

The CLAN analysis for 12-month reemployment (Table~\ref{tab:clan_mot_retour_12m}) shows that the most-affected pessimists are those with the lowest baseline search effort: $87.6\%$ of the top quintile have prior effort below 10 hours, compared with $33.3\%$ in the bottom quintile (difference $0.54$, $p < 0.001$). Average search hours are $7.7$ for the most-affected versus $15.1$ for the least-affected ($p < 0.001$). These low-effort pessimists also face substantially tighter labour markets and hold a less concentrated occupational focus, yet display higher subjective hiring probabilities across all three occupations.  This profile is more consistent with a motivational deficit than with an informational deficit. These individuals perceive abundant hiring opportunities but do not respond with commensurate search effort. The motivational intervention appears to overcome this inertia.

\medskip 

The beneficiary profile for relative wages is entirely different (Table~\ref{tab:clan_mot_rel_wage}). The pessimists who benefit most in terms of wage match quality are high-effort searchers ($19.8$ hours vs.\ $6.6$ hours, $p < 0.001$; only $10\%$ with effort $\leq 10$ vs.\ $95\%$) who hold a low wage prior: $90.8\%$ of the most-affected have a reservation wage below the recommended salary, compared with $3.7\%$ among the least-affected (difference $0.87$, $p < 0.001$). These are pessimists who search actively but whose low wage expectations lead them to target jobs below their potential. The motivational intervention helps them aim higher and secure positions that better match their skills, resulting in a higher ratio of hiring wage to reservation wage. They also face looser labor markets (tension $0.46$ vs.\ $0.02$, $p < 0.001$), where opportunities exist but go unexploited due to unrealistic wage expectations.
\medskip 

These two CLAN profiles reveal complementary channels through which motivational counselling operates among pessimists: it activates search among the low-effort discouraged, and it raises wage ambitions among the high-effort underconfident. Both groups are ``stuck'', but for different reasons that the same intervention addresses through distinct mechanisms.

\paragraph{Taking stock.} Across both treatment arms, the evidence is difficult to reconcile with a belief-updating mechanism and instead supports two distinct behavioral channels. For the occupational treatment, the evidence is consistent with an attention-activation mechanism, operating only among workers with dispersed beliefs and strongest where labor-market tightness is low in the recommended occupations. For the motivational treatment, the evidence is consistent with a reference-point and aspiration-shift mechanism, operating only among workers whose baseline reservation wages or search effort lie below the relevant benchmarks. The personalized design of the interventions therefore receives direct empirical support: each treatment affects behavior precisely where the underlying friction aligns with its intended mechanism. A subtler lesson concerns the occupational arm among pessimists. Both recommendations raise their twelve-month reemployment by a comparable amount (around two percentage points), but they do so very differently. The occupational recommendation acts through an untargeted margin: pessimists neither take up the suggested occupations nor search more, yet a rise in spontaneous applications coincides with faster, though less stable, job entry, as reflected in a lower probability of a permanent contract. The motivational recommendation, by contrast, moves exactly the margins it targets and improves the quality of accepted jobs. The case for need-based personalization is therefore not only that targeting can raise the reemployment response, but that it makes the response controllable: it determines whether reemployment is achieved with greater selectivity and better job quality, or through an untargeted channel with an ambiguous quality profile. One caveat carries over to the welfare discussion. Because the motivational treatment does not appear to correct pessimistic beliefs about $\tilde F$, and because we find no evidence of learning, its effects may be short-lived and could require repeated interventions to persist over time.

\paragraph{Alternatives considered.} Several alternative interpretations can account for parts of the empirical pattern, but none can explain the full set of results. A pure information channel operating through labor-market tightness would predict heterogeneity by perceived tightness rather than by belief narrowness $h$. The concentration of effects among low-$h$ workers, irrespective of their perceived tightness, is difficult to reconcile with such an explanation. Likewise, an experimenter-demand effect would be expected to raise reservation wages uniformly among workers exposed to the motivational framing, regardless of whether their baseline reservation wage lies above or below the benchmark. The observed threshold heterogeneity is therefore more consistent with a reference-point mechanism. A pure implementation-friction account, whereby the recommendation simply reduces the cost of applying, would predict effects regardless of prior beliefs, yet the effect disappears among workers with narrow beliefs. Finally, self-directed learning triggered by the recommendation would be expected to generate similar responses across treatment arms for comparable workers, whereas the observed responses are treatment-specific: diversification under the occupational treatment and increases in search effort and $w^*$ under the motivational treatment. Taken together, these findings leave the two-margin interpretation as a parsimonious explanation for the overall pattern of results.

\section*{Conclusion}

Support policies should not be designed as uniform information-delivery devices. The same recommendation produces different adjustments depending on a worker's binding constraint and can produce iatrogenic effects: where the constraint is under-exploration of alternative occupations, an occupational recommendation activates already-perceived opportunities and diversifies search; where it is low effort or low selectivity, the same message is misaligned and leaks onto untargeted margins. The data tools available to PESs can identify not only relevant opportunities but the constraints workers face, and those constraints can guide content: diversification for some, effort encouragement for others, support for selectivity or for targeting higher-quality jobs for others.

\medskip 

A rough welfare calculation underscores the potential gains from personalization. The motivational recommendation costs essentially nothing beyond the survey infrastructure, as it is a short automated message. Among the low-effort pessimists it targets, it raises 12-month reemployment by 2.9 percentage points (from a base of 40.7\%). With approximately 6,800 pessimists assigned to the motivational arm, this implies roughly 200 additional reemployments. Combining this with the wage-quality gains for low-wage pessimists (a 4\% increase in the normalized hiring wage) suggests that even modest personalization of automated messages can generate meaningful labor market improvements at near-zero marginal cost. 

\medskip 

More generally, effective recommendations are those that target the right behavioral margin, not merely those that convey the right information. For a substantial share of job seekers, the central challenge is translating perceived opportunities into concrete search.

\FloatBarrier
\bibliographystyle{myagsm}
\bibliography{ref.bib}
\newpage

\appendix

\renewcommand{\thesection}{\Alph{section}}

\counterwithin{equation}{section}
\counterwithin{figure}{section}
\counterwithin{table}{section}

\section{Model's Complementary results and Proofs}\label{sec:proof}

\begin{lem}[Wage aspiration and reservation wage]\label{lem:rw}
	Hold $\overline{U}$ fixed and suppose the worker is in the aspiration-gap region ($w^* < r_w$). Then $\partial w^* / \partial r_w |_{\overline{U}} > 0$, i.e., raising the perceived peer wage aspiration raises the worker's reservation wage. %For workers already above the aspiration ($w^* \geq r_w$), the comparative static does not apply.
\end{lem}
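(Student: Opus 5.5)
The plan is to apply the implicit function theorem to the reservation-wage condition \eqref{eq:reservation}, $U(w^*;\, r_w) = \overline{U}$, holding $\overline{U}$ fixed as the statement specifies. In the aspiration-gap region $w < r_w$, the utility \eqref{eq:utility} takes the explicit form
\begin{equation*}
U(w;\, r_w) = (1+\kappa_w)\, v(w) - \kappa_w\, v(r_w).
\end{equation*}
This expression is smooth in both arguments. It is strictly increasing in $w$, since $\partial_w U = (1+\kappa_w) v'(w) > 0$ because $v$ is increasing. It is strictly decreasing in $r_w$, since $\partial_{r_w} U = -\kappa_w v'(r_w) < 0$ because $\kappa_w > 0$.

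The first step is to make sure the comparative static is well posed. I will note that $U(\cdot\,;\, r_w)$ is continuous on all of $[0,\bar w]$: at $w = r_w$ the penalty term $\kappa_w(v(w)-v(r_w))$ vanishes, so the two branches agree. On each branch $U$ is strictly increasing in $w$. Hence, whenever $\overline{U}$ lies in the range of $U(\cdot\,;\, r_w)$, the indifference condition has a unique solution $w^*$. Second, I will observe that since $w^* < r_w$ holds strictly and $U$ is continuous, a small perturbation of $r_w$ keeps the solution strictly inside the gap region. The explicit branch above therefore governs the local analysis, and no kink at $w = r_w$ interferes.

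The second step is to solve the equation directly rather than invoke the implicit function theorem abstractly. The condition reads $(1+\kappa_w) v(w^*) = \overline{U} + \kappa_w v(r_w)$, so
\begin{equation*}
w^* = v^{-1}\!\left(\frac{\overline{U} + \kappa_w v(r_w)}{1+\kappa_w}\right).
\end{equation*}
Differentiating, or equivalently computing $-\partial_{r_w}U/\partial_w U$ at $w^*$, gives
\begin{equation*}
\left.\frac{\partial w^*}{\partial r_w}\right|_{\overline{U}} = \frac{\kappa_w\, v'(r_w)}{(1+\kappa_w)\, v'(w^*)} > 0,
\end{equation*}
which is strictly positive because $\kappa_w > 0$ and $v' > 0$. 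The intuition is that a higher aspiration lowers the utility of every gap-region wage, so a higher wage is needed to reach the fixed continuation value.

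The only delicate point is the well-posedness in the first step. The comparative static requires that the solution be interior to the gap region and unique, which in turn requires continuity of $U$ at $w = r_w$ and strict monotonicity on both branches; both follow from the functional form. I would also add a remark that the argument is silent on the equilibrium adjustment of $\overline{U}$, in line with the paper's convention that the direct effect determines the sign.
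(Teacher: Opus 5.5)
Your proposal is correct and takes the same route as the paper: implicit differentiation of $(1+\kappa_w)\,v(w^*)-\kappa_w\,v(r_w)=\overline{U}$ at fixed $\overline{U}$, which gives the same expression $\kappa_w v'(r_w)/\bigl((1+\kappa_w)v'(w^*)\bigr)>0$. Your extra well-posedness checks are sound and make explicit what the paper leaves implicit: continuity of $U$ at $w=r_w$, uniqueness of $w^*$, and the solution staying strictly in the gap region under small changes in $r_w$. These checks, like the paper's formula, rely on $v'>0$.
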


\textbf{Proof of Lemma \ref{lem:rw}.}
The reservation wage satisfies $U(w^*; r_w) = \overline{U}$. In the aspiration-gap region where $w^* < r_w$, this yields:
\[
v(w^*) + \kappa_w\bigl(v(w^*) - v(r_w)\bigr) = \overline{U},
\]
i.e., $(1 + \kappa_w)\,v(w^*) - \kappa_w\,v(r_w) = \overline{U}$. Differentiating implicitly at fixed $\overline{U}$ with respect to $r_w$:
\[
(1 + \kappa_w)\,v'(w^*)\,\frac{\partial w^*}{\partial r_w} - \kappa_w\,v'(r_w) = 0,
\]
so
\[
\frac{\partial w^*}{\partial r_w} = \frac{\kappa_w\,v'(r_w)}{(1 + \kappa_w)\,v'(w^*)} > 0.
\]
For $w^* \geq r_w$, $U(w^*; r_w) = v(w^*)$, which does not depend on $r_w$. \hfill $\square$

\bigskip

\begin{lem}[Effort aspiration and search effort]\label{lem:effort}
   Hold $\overline{U}$ fixed and suppose the effort aspiration increases from $r_e^0$ to $r_e^1 > r_e^0$.
    \begin{enumerate}
        \item[\emph{(i)}] Workers already below the old norm, with $e^* < r_e^0$, increase effort: the marginal cost curve shifts down by $\psi(r_e^1 - r_e^0)$ at every point, so the FOC requires higher $e^*$.
        \item[\emph{(ii)}] Workers previously at or above the old norm, now below the new one, with $e^* \in [r_e^0,\, r_e^1)$,  also increase effort: their marginal cost drops discretely from $c'(e^*)$ to $c'(e^*) - \psi(r_e^1 - e^*)$.
        \item[\emph{(iii)}] Workers above the new norm, with $e^* \geq r_e^1$  are unaffected.
    \end{enumerate}
\end{lem}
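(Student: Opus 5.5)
Since $\overline{U}$ and $w^*$ are held fixed, the left-hand side of the effort first-order condition \eqref{eq:foc_effort} depends on effort alone. I will write it as $G(e)\equiv \tilde\lambda'(e)\,B$, with $B=\int_{w^*}^{\bar w}[U(w;r_w)-\overline{U}]\,d\tilde F(w)\ge 0$. I will assume, as is standard, that $\tilde\lambda$ is increasing and concave, so $G$ is weakly decreasing in $e$. The optimal effort then solves $G(e)=C'(e;r_e)$ with $C'$ given by \eqref{eq:mc_effort}.

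The first step is to show that $C'(\cdot;r_e)$ is continuous and strictly increasing in $e$. On $e<r_e$ its slope is $c''(e)+\psi>0$. On $e\ge r_e$ its slope is $c''(e)>0$. The two pieces meet at $e=r_e$, which is exactly the continuous differentiability noted before \eqref{eq:foc_effort}. Hence $C'(e;r_e)-G(e)$ is strictly increasing, and the FOC has a unique interior solution $e^*(r_e)$ whenever $C'(0;r_e)<G(0)$, or a corner at zero otherwise. This monotone single crossing is the only structure the argument needs. Each comparative static then reduces to checking the sign of $C'(e^0;r_e^1)-C'(e^0;r_e^0)$ at the old optimum $e^0=e^*(r_e^0)$. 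If the marginal cost at $e^0$ falls, the crossing moves right, so $e^*(r_e^1)>e^0$. If it is unchanged and the curve is unchanged on $[e^0,\infty)$, the optimum is unchanged.

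Next I compute the difference pointwise as a function of $e$:
\[
C'(e;r_e^1)-C'(e;r_e^0)=
\begin{cases}
-\psi(r_e^1-r_e^0) & e<r_e^0,\\
-\psi(r_e^1-e) & r_e^0\le e<r_e^1,\\
0 & e\ge r_e^1.
\end{cases}
\]
The three cases of the lemma follow directly. For case (i), $e^0<r_e^0$ and the marginal cost drops by the constant $\psi(r_e^1-r_e^0)>0$. For case (ii), the drop is $\psi(r_e^1-e^0)>0$. In both cases strict monotonicity gives $e^*(r_e^1)>e^0$. A secondary point is that the new optimum may cross into a different regime, for instance moving above $r_e^0$. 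This does not matter, because the argument only uses the global monotone crossing and not which regime the new solution lies in. For case (iii), $C'(\cdot;r_e^1)$ and $C'(\cdot;r_e^0)$ coincide on $[r_e^1,\infty)$. Since the crossing is unique and already occurs at $e^0\ge r_e^1$, it remains at $e^0$.

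The main obstacle is the claim in (ii) that the marginal cost drops "discretely". Because $C'$ is continuous, the drop at $e^0$ is $\psi(r_e^1-e^0)$, which is positive but vanishes as $e^0\to r_e^1$. I will therefore state it as a strict but continuous downward shift. I will also make explicit the regularity assumptions that the paper leaves implicit: $\tilde\lambda''\le 0$, $B>0$ so that effort is interior, and $B$ frozen at fixed $\overline{U}$ and $w^*$. The last assumption matters because $B$ does not depend on $r_e$, so $r_e$ affects only the cost side of the FOC.
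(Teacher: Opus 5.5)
Your argument is correct and follows the paper's proof: a weakly decreasing marginal benefit $\tilde\lambda'(e)B$ crosses a continuous, strictly increasing $C'(\cdot;r_e)$, and you evaluate the piecewise shift $C'(e;r_e^1)-C'(e;r_e^0)$ at the old optimum in each of the three regions. Your additions tighten the paper's wording: you invoke uniqueness of the crossing to close case (iii), and you note that the drop in (ii) is finite but vanishes as $e^*\to r_e^1$. One correction: $B>0$ alone does not make effort interior; you need $\tilde\lambda'(0)B > C'(0;r_e^0)$, otherwise a worker at the corner $e^*=0$ in case (i) need not strictly increase effort.
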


\textbf{Proof of Lemma \ref{lem:effort}.}
Define $$\mathrm{MB}(e) := \tilde\lambda'(e) \int_{w^*}^{\bar w} [U(w; r_w) - \overline{U}]\,d\tilde F(w),$$ which is decreasing ($\lambda'' \leq 0$), and $\mathrm{MC}(e; r_e) :=  C'(e; r_e)$ from \eqref{eq:mc_effort}, which is increasing ($C'' = c'' + \psi > 0$ below the norm, $C'' = c''$ above). Then, we have:
\begin{enumerate}
    \item[(i)] For $e^* < r_e^0 < r_e^1$: $\mathrm{MC}(e;\, r_e^1) = c'(e) - \psi(r_e^1 - e) = \mathrm{MC}(e;\, r_e^0) - \psi(r_e^1 - r_e^0)$ for all $e < r_e^0$. The MC curve shifts down uniformly while MB is unchanged, so the intersection moves right. If the new optimum exceeds $r_e^1$, the penalty deactivates and effort solves $\mathrm{MB}(e) = c'(e)$; in either case, $e^{**} > e^*$.

 \item[(ii)] For $e^* \in [r_e^0, r_e^1)$: before the shift, $\mathrm{MB}(e^*) = c'(e^*)$ (no penalty). After, $\mathrm{MC}(e^*;\, r_e^1) = c'(e^*) - \psi(r_e^1 - e^*) < c'(e^*) = \mathrm{MB}(e^*)$, so MB exceeds MC and effort increases.

 \item[(iii)] For $e^* \geq r_e^1$: MC $= c'(e^*)$ before and 
after. \hfill $\square$

\end{enumerate}

\section{Administrative Data}\label{app:admin_data}\label{sec:appendix1}

Our study is conducted in partnership with France Travail, the French PES, which provides a rich administrative data environment covering job seekers’ characteristics, job search activity, and labor market outcomes.

\medskip 

Upon registration with France Travail, job seekers are required to report a set of job search parameters that define the characteristics of acceptable job offers. These parameters include the reservation wage, the maximum commuting time, the target occupation, the desired type of labor contract (temporary or permanent), and preferred working time (full-time or part-time). These parameters are recorded in administrative data and can be updated over the course of the unemployment spell.

\medskip 

Registered job seekers can apply to vacancies directly through the France Travail online platform. Although job seekers may also rely on other job boards, the France Travail platform plays a central role in the matching process and accounts for a substantial share of vacancies in France. 

\medskip 

For each application submitted through the platform, we observe detailed information on the vacancy, including the occupation, the offered wage, and the type of contract. Importantly, we also observe the origin of each application: whether it corresponds to a direct response to a posted vacancy, follows a caseworker’s suggestion, or takes the form of a spontaneous application to a firm without a posted vacancy. This information allows us to distinguish between different application streams and to study how job seekers adjust their search behavior along these margins.

\medskip 

We construct reemployment indicators at different horizons by combining two administrative data sources, with a primary focus on a three-month horizon. Specifically, we merge the monthly Indicateur de retour à l’emploi (IRE) produced by France Travail with the Déclarations préalables à l’embauche (DPAE), which record firms’ hiring declarations. This indicator closely matches the definition of reemployment used by France Travail. Because job seekers were sampled in September 2023, reemployment spells beginning in September are included in the three-month reemployment window.

\medskip 

Overall, administrative data provide precise and comprehensive information on observed job search behavior and labor market outcomes. However, they do not capture job seekers’ subjective beliefs, expectations, or perceived returns to search effort. Measuring these dimensions, which play a central role in job search decisions, requires complementary survey data, described in the next section.

\section{Alternative Recommendation Framing}\label{sec:app-wordings}

The main text pools all framing variants. This appendix documents the design of the framing experiment and shows that treatment effects are broadly similar across framings.

\medskip

Each recommendation was implemented using three alternative framings:
\begin{itemize}
    \item[(i)] a direct-advice framing,
    \item[(ii)] a peer-behavior framing,
    \item[(iii)] an awareness framing emphasizing potential behavioral biases during job search.
\end{itemize}

The objective was to assess whether job seekers respond differently depending on how recommendations are presented. The informational content of the recommendation remained unchanged across framings.

\medskip

The direct-advice framing closely resembles standard recommendations provided by the PES. The peer framing justifies the recommendation using the behavior of similar job seekers. The awareness framing additionally highlights potential cognitive biases that may affect job-search decisions. 

Examples of the three framings are reported below:
\begin{itemize}
    \item[(i)] Direct advice  example (example from occupational intervention):

\begin{quote}
``\textit{You could broaden your job search to other sectors of activity [...]}'' 
\end{quote}

\item[(ii)] Peer wording example (example from motivational intervention):

\begin{quote}
``\textit{Half of job seekers with your level of experience, in your region, and searching for a ``lib\_metier'' position apply to jobs offering a monthly gross wage of ``montant\_salaire'' euros or more. They still have good chances of finding a job.}'' 
\end{quote}

  \item[(iii)] \textit{Raise awareness wording} example (example from motivational intervention):

\begin{quote}
``\textit{It is sometimes possible to underestimate the wages offered in available job opportunities. This can lead to choosing jobs that pay less than those one could obtain. This may also affect the wage obtained in the new job.}'' 
\end{quote}

\end{itemize}

Treatment effects are very similar across the three wording variants. Tables \ref{tab:ate_by_wording} and \ref{tab:ate_search_behavior_wording} report the corresponding estimates for occupational and motivational recommendations. We do not find statistically significant differences across framings.

\medskip

Additional analyses examining subjective beliefs, perceived returns to search effort, reservation wages, and related survey outcomes lead to the same conclusion. Across specifications, estimated effects are broadly similar across wording variants, and we do not detect systematic evidence that peer-based or awareness framings outperform direct advice. These additional results are not reported to conserve space but are available from the authors upon request.

\medskip

Effects are highly similar across framings. The content of the recommendation matters much more than its presentation.

\begin{table}[!htbp]
    \caption{Occupational treatment effect on online applications by bias group - Treatment wording}
    \label{tab:ate_by_wording}
\begin{adjustbox}{width = \textwidth}
\begin{threeparttable}
\begin{tabular}{lllllll}
            \hline
            \hline
            & \multicolumn{6}{c}{\textit{Number of applications in the two months following intervention}} \\
            & \multicolumn{3}{c}{Across streams} & \multicolumn{3}{c}{Across occupations} \\  
            \cmidrule(lr){2-4}\cmidrule(lr){5-7} \\
            & On a posted vacancy & Spontaneous & Caseworker suggestion  & Preferred & Proposed & Others \\
            \hline
            & (1) & (2) & (3) & (4) & (5) & (6) \\
            \textbf{Pessimists} & & & & & & \\
            Occ. treatment x Direct advice & 0.122         & 0.229$^{*}$   & -0.076             & 0.079             & -0.051                 & 0.095\\   
                                              & (0.375)       & (0.132)       & (0.089)            & (0.154)           & (0.122)                & (0.280)\\      
            Occ. treatment x Peers' behaviors& 0.242         & 0.312$^{***}$ & -0.115             & -0.021            & 0.078                  & 0.185\\   
                                              & (0.399)       & (0.120)       & (0.083)            & (0.148)           & (0.140)                & (0.285)\\ 
            Occ. treatment x Raise awareness & 0.100         & 0.183         & 0.004              & 0.025             & -0.063                 & 0.138\\   
                                              & (0.397)       & (0.118)       & (0.098)            & (0.154)           & (0.123)                & (0.284)\\   
            \cline{2-7}
            Control mean  & 4.95  & 0.262 & 0.538   & 1.19     & 1.01 & 2.76\\   
            & & & & & & \\
            \hline
            \textbf{Optimists} & & & & & & \\
            Occ. treatment x Direct advice & 0.333         & 0.135$^{**}$  & 0.040              & 0.066             & 0.176$^{**}$           & 0.091\\   
                                              & (0.240)       & (0.061)       & (0.047)            & (0.089)           & (0.074)                & (0.164)\\    
            Occ. treatment x Peers' behaviors & 0.546$^{**}$  & 0.113$^{*}$   & 0.084$^{*}$        & 0.134             & 0.203$^{***}$          & 0.209\\   
                                              & (0.239)       & (0.062)       & (0.045)            & (0.092)           & (0.077)                & (0.165)\\ 
            Occ. treatment x Raise awareness & 0.507$^{**}$  & 0.073         & 0.118$^{**}$       & 0.106             & 0.144$^{*}$            & 0.257\\   
                                              & (0.239)       & (0.059)       & (0.046)            & (0.092)           & (0.076)                & (0.164)\\
            \cline{2-7} 
            Control mean ($\widehat{\alpha}_O$) & 4.31 & 0.238  & 0.455& 1.01    & 0.829  & 2.47\\ 
            & & & & & & \\
            \hline
            Number Obs. & 8,210        & 8,210         & 8,210         & 8,210         & 8,210              & 8,210\\ 
            \hline
        \end{tabular}
\begin{tablenotes}[flushleft]
    \small
    \item \textit{Notes:} This table reports estimated average treatment effects of the occupational recommendation, by wording, on the number of applications submitted in the two months following the intervention, separately by bias group. Wording variations are pooled together. Several application outcomes are considered: (1) the number of applications to posted vacancies, (2) the number of spontaneous applications (i.e., applications not linked to a posted vacancy), and (3) the number of applications submitted in response to vacancies suggested by caseworkers. Applications to posted vacancies are further disaggregated by the occupation of the vacancy: (4) preferred (target) occupation, (5) suggested occupations, and (6) other occupations. The sample is restricted to job seekers who submitted at least one application on the PES platform in the month prior to the intervention. All regressions control for perceived three-month reemployment probability and the number of applications made in the month before the intervention. Application counts are winsorized at the 98th percentile of strictly positive values. Significance levels: $^{*}$p$<$0.1; $^{**}$p$<$0.05; $^{***}$p$<$0.01.
\end{tablenotes}
\end{threeparttable}
\end{adjustbox}
\caption*{\scriptsize }
\end{table}

\begin{table}[!htbp]
\centering 
\caption{Intervention impact on pessimists search parameters 1 month after the intervention by wording}
    \label{tab:ate_search_behavior_wording}
\resizebox{\textwidth}{!}{\begin{tabular}{lcccccc}
\\
\hline 
\hline \\
& \multicolumn{6}{c}{\textit{Declared search parameters}} \\
\cline{2-4} \cline{5-7}\\
& \multicolumn{3}{c}{Search effort} & \multicolumn{3}{c}{Log reservation wage} \\ 
\cmidrule(lr){2-4} \cmidrule(lr){5-7}& \makecell{Full\\ sample} & \makecell{Low eff.\\ before} & \makecell{High eff.\\ before} & \makecell{Full\\ sample} & \makecell{Low\\ wage} & \makecell{High\\ wage} \\
\hline
\multicolumn{7}{l}{\textit{Occupational Treatment Framing}}\\

x Direct advice & -0.348  & 0.274         & -1.10         & 0.011         & 0.017         & 0.017\\   
                             & (0.404)       & (0.471)       & (0.693)       & (0.011)       & (0.013)       & (0.016)\\ 
x Peers' behavior & 0.462         & 0.850         & -0.035        & 0.007         & 0.0001        & 0.008\\   
                             & (0.422)       & (0.534)       & (0.667)       & (0.012)       & (0.013)       & (0.017)\\  
x Raise awareness & 0.294         & 0.472         & 0.098         & -0.002        & -0.024        & 0.009\\   
                             & (0.400)       & (0.488)       & (0.661)       & (0.011)       & (0.015)       & (0.015)\\ 
\hline
\multicolumn{7}{l}{\textit{Motivational Treatment Framing}}\\
x Direct advice & 0.882$^{**}$  & 1.55$^{***}$  & 0.185         & 0.013         & 0.015         & 0.013\\   
                             & (0.426)       & (0.535)       & (0.679)       & (0.011)       & (0.013)       & (0.017)\\   
x Peers' behavior & 0.948$^{**}$  & 1.59$^{***}$  & 0.279         & 0.011         & 0.024$^{*}$   & 0.003\\   
                             & (0.403)       & (0.503)       & (0.648)       & (0.012)       & (0.014)       & (0.018)\\   
x Raise awareness & 0.857$^{**}$  & 1.63$^{***}$  & -0.003        & 0.034$^{***}$ & 0.033$^{**}$  & 0.034$^{**}$\\   
                             & (0.423)       & (0.543)       & (0.663)       & (0.012)       & (0.015)       & (0.016)\\
\\[-1.8ex]\hline 
Control Mean & 12.1 & 8.2 & 17.2 & 2,034\euro & 1,853\euro & 2,172\euro\\
$\beta^o = \beta^m$ & 0.006 & 0.002 & 0.271 & 0.059 & 0.008 & 0.48 \\
No difference between wordings & 0.980 & & & 0.20 & 0.49 & 0.12\\
No difference between subgroups & & 0.043 & 0.043 & & 0.651 & 0.651 \\
Number Obs. & 3,485         & 1,882         & 1,603         & 2,366         & 937           & 1,429\\  
   R$^2$                     & 0.342       & 0.111       & 0.175      & 0.712       & 0.679       & 0.702\\ 
\hline 
\end{tabular}}
\caption*{\scriptsize \textit{Notes:} This table reports estimated average treatment effects of the occupational and motivational recommendations on search parameters as reported in the follow-up survey. We first present ATEs pooling all wording variations, and then report estimates separately by wording. We further examine heterogeneity by splitting the sample according to pre-intervention search intensity (below vs. above 10 hours per week) and according to pre-intervention reservation wage (below vs. above the median reservation wage among individuals sharing the same target occupation). The sample is restricted to job seekers who completed the follow-up survey. Heteroskedasticity-robust standard errors are reported. Significance levels: $^{*}$p$<$0.1; $^{**}$p$<$0.05; $^{***}$p$<$0.01.}
\end{table}

\section{Tables and Figures}\label{sec:app-tables}

\begin{figure}[htpb]
    \centering
    \caption{Average treatment effect (occupational recommendations) on applications across bias groups, platform users only.}
    \label{fig:ATE_main}

    \begin{subfigure}{0.48\linewidth}
        \centering
        \includegraphics[width=\linewidth]{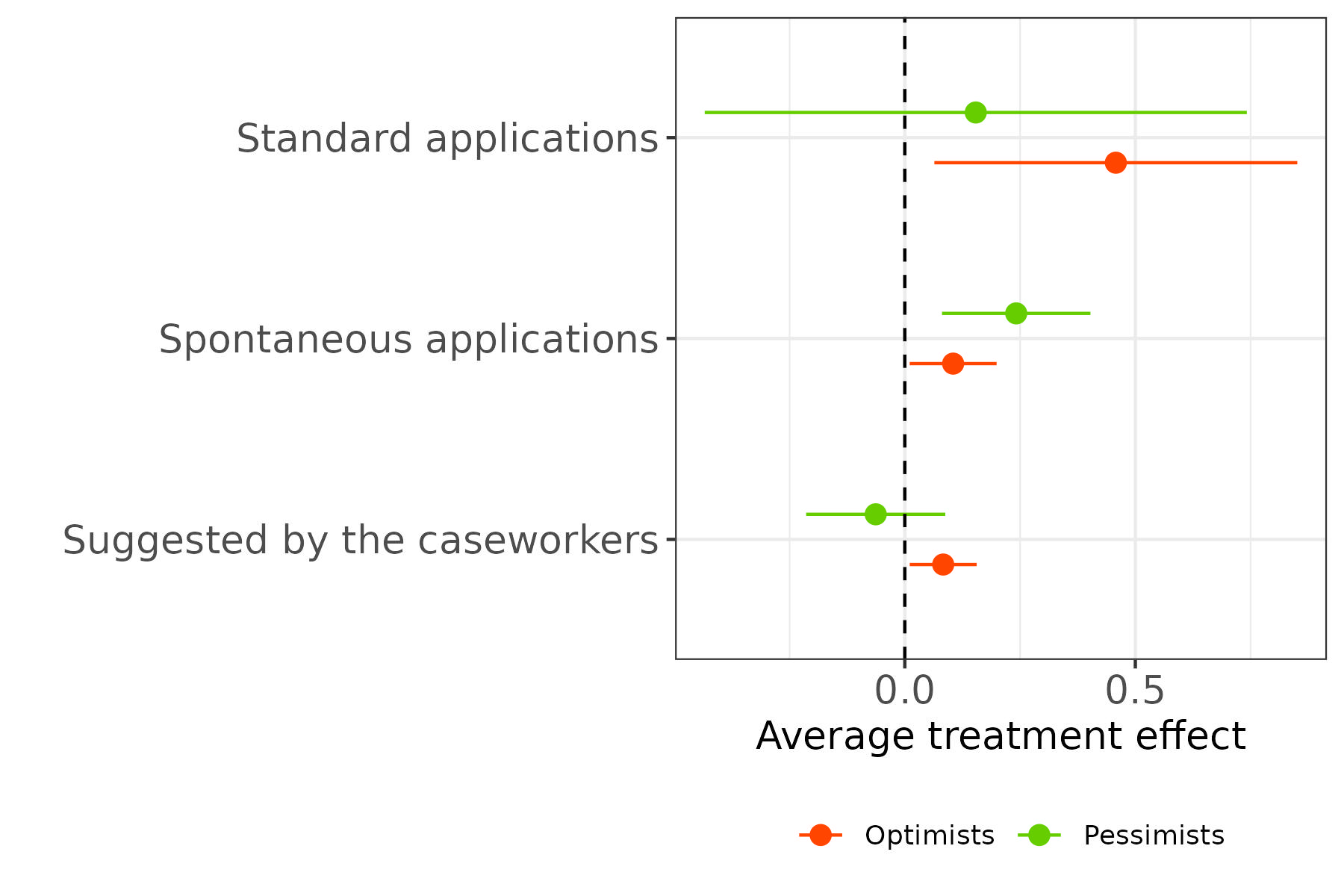}
        \caption{\scriptsize Number of applications}
        \label{fig:ATE_streams}
    \end{subfigure}
    \quad
    \begin{subfigure}{0.48\linewidth}
        \centering
        \includegraphics[width=\linewidth]{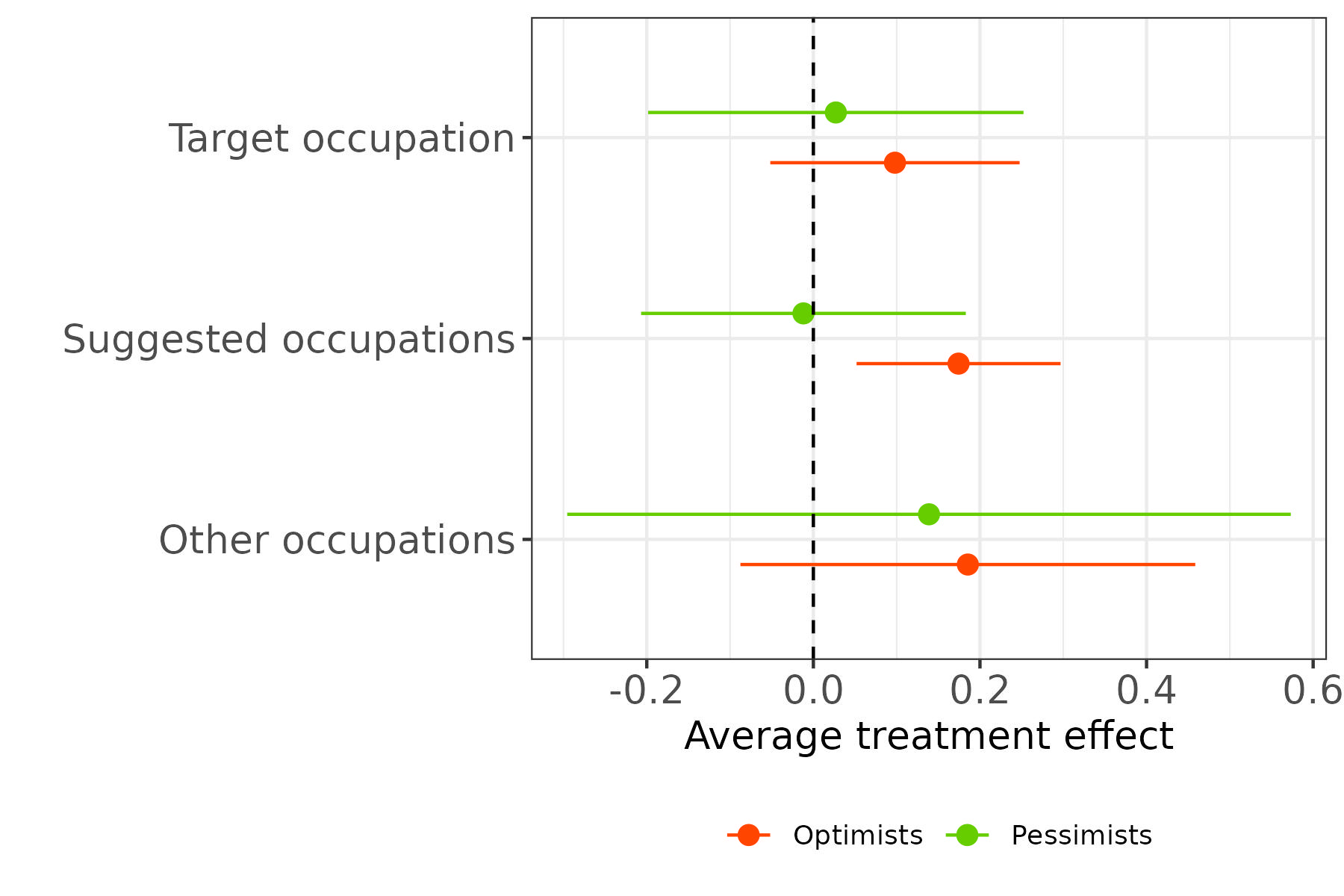}
        \caption{\scriptsize Number of applications by occupation}
        \label{fig:ATE_occupations}
    \end{subfigure}

    \caption*{\footnotesize
    Notes: This figure reports average treatment effects (ATEs) of the occupational recommendation on job applications in the two months following the intervention. Outcomes include the total number of applications, as well as a decomposition by occupation type (target, suggested, or other). ATEs are estimated separately by bias group and control for pre-intervention perceived three-month reemployment probability and the number of applications made in the month preceding the intervention. Heteroskedasticity-robust standard errors are reported. Application counts are winsorized at the 98th percentile of strictly positive values.}
\end{figure}

% ---- relocated from main text (length reduction) ----
\begin{figure}[ht!]
\centering
\includegraphics[width=0.9\textwidth]{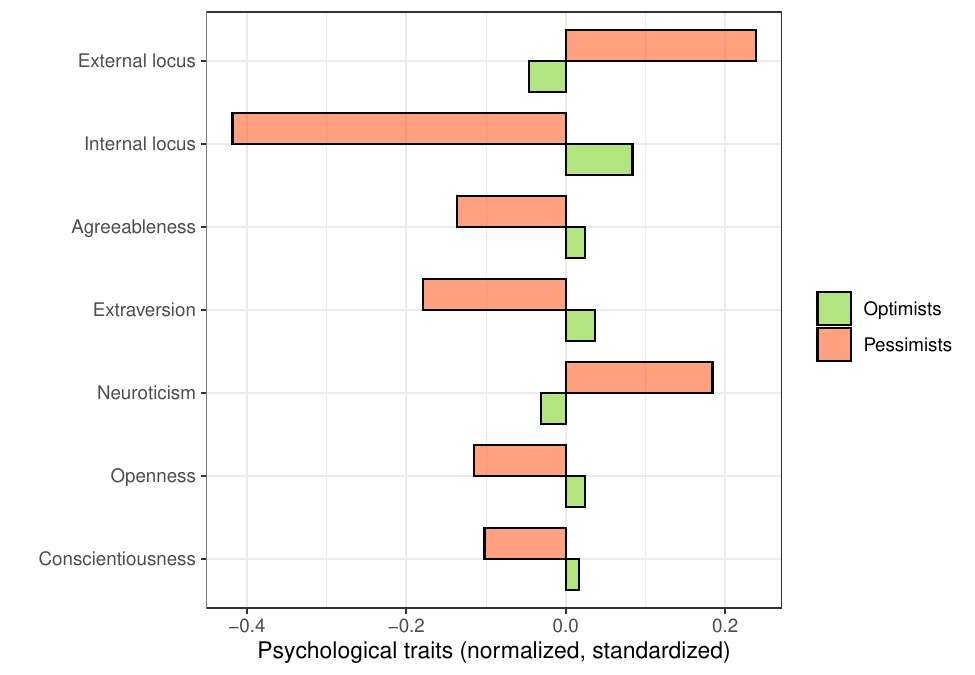}
\caption{Psychological characteristics of pessimistic job seekers}
\label{fig:psychological_profiles}
\begin{flushleft}
\footnotesize
\textit{Notes:} The figure reports standardized differences in psychological characteristics between pessimistic and non-pessimistic job seekers. All variables are standardized to have mean zero and unit variance in the estimation sample. Pessimistic job seekers are identified using the same prediction algorithm as in the main experiment. Since these psychological variables are not available in the current experimental sample, the figure is estimated using earlier survey waves of the panel for which these measures are available. To avoid overfitting, the prediction algorithm is trained on one half of the sample and predictions are computed on the other half.
\end{flushleft}
\end{figure}

\begin{figure}[htpb]
    \centering
    \caption{Initial beliefs for reemployment chances across occupations and \textit{belief narrowness}}
    \includegraphics[width=0.75\linewidth]{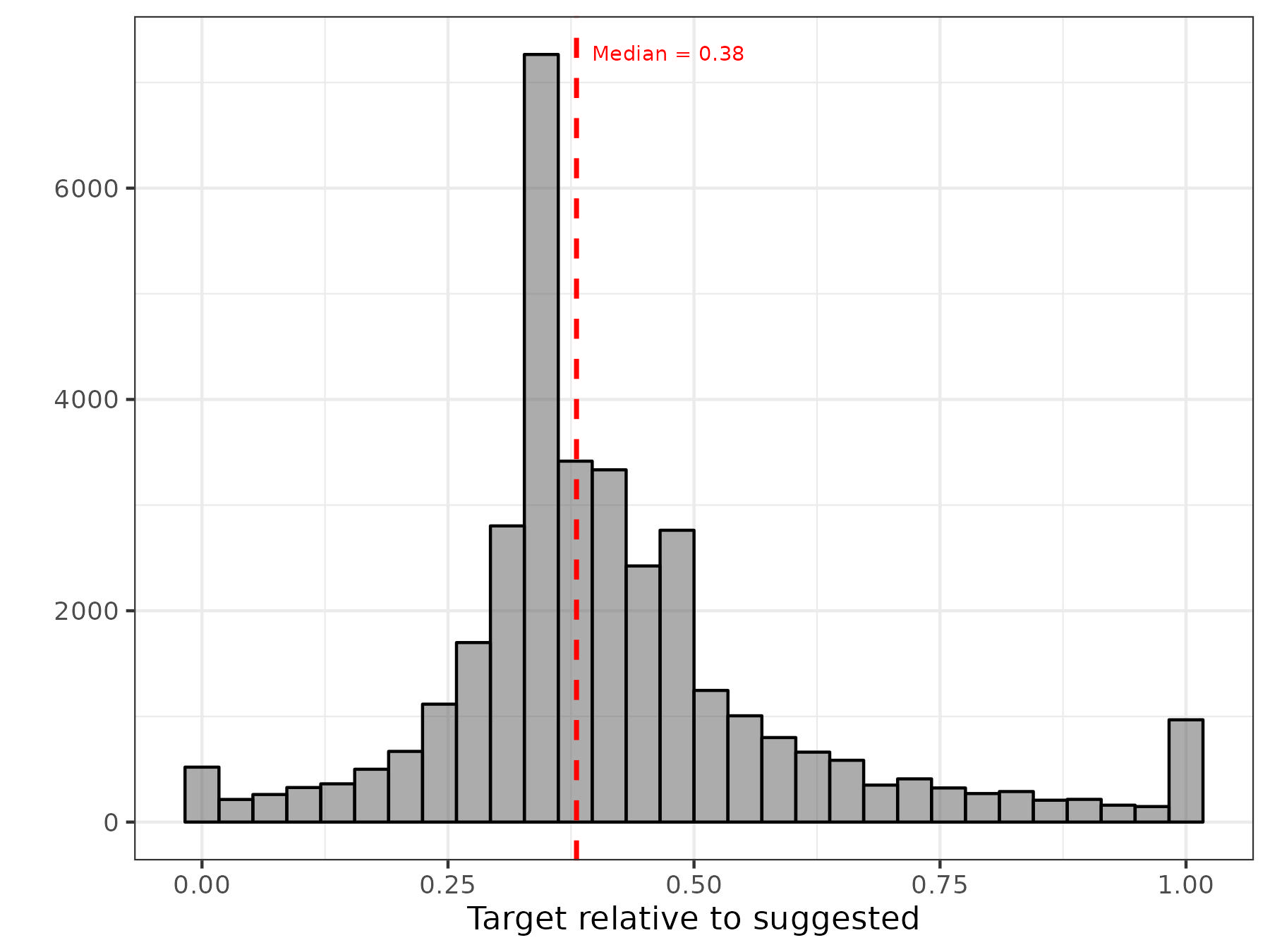}
    \label{fig:beliefs_relocc}
    \caption*{\scriptsize \textit{Notes}: The figure shows the empirical distribution of the \textit{belief narrowness}: ratio of the perceived success probability in the target occupation to the sum of perceived success probabilities across the three occupations.}
\end{figure}

\begin{table}[htbp]
\centering
\caption{Reemployment Outcomes of Optimists}
\label{tab:ate_reemployment_optimists}

\scalebox{0.75}{
\begin{tabular}{lcccccccc}
\toprule
& (1) & (2) & (3) & (4) & (5) & (6) & (7) & (8) \\
\cmidrule(lr){2-9}
Dependent Variable
& \multicolumn{2}{c}{Reemployment}
& Contract
& \multicolumn{5}{c}{Hiring Wage Relative to Reservation Wage} \\
\cmidrule(lr){2-3} \cmidrule(lr){4-4} \cmidrule(lr){5-9}
& 6 Months & 12 Months & Permanent
& Mean & $\geq 1$
& QTE 25 & QTE 50 & QTE 75 \\
\midrule

\multicolumn{9}{l}{\textbf{Panel A. Full Sample}} \\

Occ. treatment
& \begin{tabular}[c]{@{}c@{}}-0.001\\(0.006)\end{tabular}
& \begin{tabular}[c]{@{}c@{}}-0.001\\(0.006)\end{tabular}
& \begin{tabular}[c]{@{}c@{}}0.000\\(0.007)\end{tabular}
& \begin{tabular}[c]{@{}c@{}}0.007\\(0.005)\end{tabular}
& \begin{tabular}[c]{@{}c@{}}0.008\\(0.011)\end{tabular}
& \begin{tabular}[c]{@{}c@{}}0.001\\(0.005)\end{tabular}
& \begin{tabular}[c]{@{}c@{}}0.003\\(0.004)\end{tabular}
& \begin{tabular}[c]{@{}c@{}}0.005\\(0.007)\end{tabular}
\\

\midrule

\multicolumn{9}{l}{\textbf{Panel B. Heterogeneity by Initial Belief Dispersion}} \\

Occ. treatment $\times$ Wide beliefs
& \begin{tabular}[c]{@{}c@{}}0.005\\(0.011)\end{tabular}
& \begin{tabular}[c]{@{}c@{}}0.003\\(0.011)\end{tabular}
& \begin{tabular}[c]{@{}c@{}}-0.018\\(0.012)\end{tabular}
& \begin{tabular}[c]{@{}c@{}}0.004\\(0.007)\end{tabular}
& \begin{tabular}[c]{@{}c@{}}0.011\\(0.016)\end{tabular}
& \begin{tabular}[c]{@{}c@{}}0.004\\(0.009)\end{tabular}
& \begin{tabular}[c]{@{}c@{}}0.005\\(0.005)\end{tabular}
& \begin{tabular}[c]{@{}c@{}}-0.003\\(0.010)\end{tabular}
\\[1.5ex]

Occ. treatment $\times$ Narrow beliefs
& \begin{tabular}[c]{@{}c@{}}-0.010\\(0.011)\end{tabular}
& \begin{tabular}[c]{@{}c@{}}-0.007\\(0.011)\end{tabular}
& \begin{tabular}[c]{@{}c@{}}-0.007\\(0.013)\end{tabular}
& \begin{tabular}[c]{@{}c@{}}0.015**\\(0.007)\end{tabular}
& \begin{tabular}[c]{@{}c@{}}0.025\\(0.016)\end{tabular}
& \begin{tabular}[c]{@{}c@{}}0.006\\(0.008)\end{tabular}
& \begin{tabular}[c]{@{}c@{}}0.006\\(0.004)\end{tabular}
& \begin{tabular}[c]{@{}c@{}}0.017*\\(0.010)\end{tabular}
\\

\midrule

Target Occupation FE
& \checkmark & \checkmark & \checkmark &  &  &  &  &  \\

Initial Reemployment Beliefs
& \checkmark & \checkmark & \checkmark &  &  &  &  &  \\

Initial Wage Beliefs
&  &  &  & \checkmark & \checkmark & \checkmark & \checkmark & \checkmark \\

Applications Prior to Intervention
& \checkmark & \checkmark & \checkmark & \checkmark & \checkmark & \checkmark & \checkmark & \checkmark \\

\midrule

Control Mean
& 0.415 & 0.569 & 0.215 & 1.050 & 0.558 & 0.950 & 1.030 & 1.170 \\

Observations
& 35,388 & 35,388 & 20,038 & 14,167 & 14,167 & 14,167 & 14,167 & 14,167 \\

\bottomrule
\end{tabular}
}

\vspace{0.75em}

\parbox{\textwidth}{\footnotesize
\textit{Notes:} ``Occ.'' denotes occupation. This table reports treatment effects on reemployment outcomes for job seekers classified as optimists. Each column corresponds to a different outcome. Permanent contract status and hiring wages are observed conditional on reemployment within twelve months. Hiring wages are expressed relative to the reservation wage. Hourly wages are winsorized between \euro~11.52 and \euro~26, while reservation wages are winsorized between \euro~1,400 and \euro~5,000. Panel A reports treatment effects for the full sample. Panel B reports heterogeneity by initial belief dispersion, where wide and narrow beliefs are defined relative to the median perceived reemployment probability. Standard errors clustered at the job-seeker level are reported in parentheses. $^{*}p<0.10$, $^{**}p<0.05$, and $^{***}p<0.01$.
}

\end{table}

\begin{figure}[htpb]
    \centering
    \caption{Initial beliefs for reemployment chances across occupations and \textit{belief narrowness}}
    \includegraphics[width=0.75\linewidth]{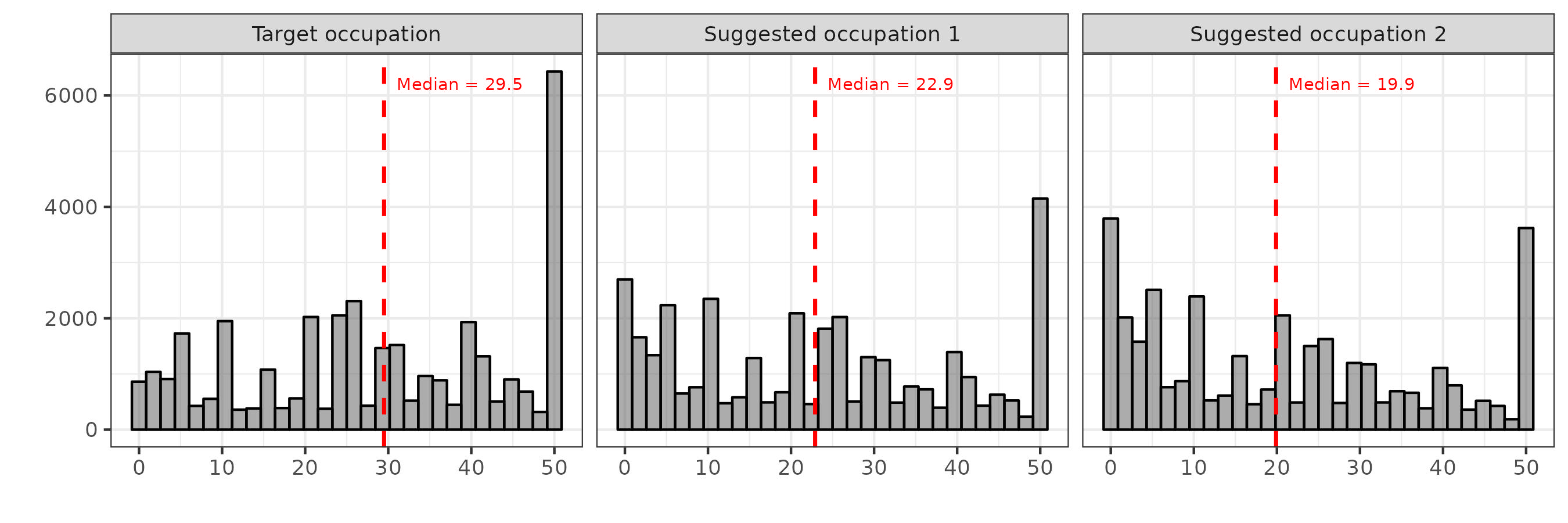}
    \label{fig:beliefs_raw}
    \caption*{\scriptsize \textit{Notes}: the three panels display the empirical distributions of job seekers’ perceived probabilities of application success across three occupations. The first panel corresponds to the target occupation, while the second and third panels refer to two alternative occupations selected for their proximity to the target occupation. }
\end{figure}

\begin{table}[!htbp]
    \caption{Occupational treatment effect on online applications by bias group - full sample}
    \label{tab:ate_full_sample}
\begin{adjustbox}{width = \textwidth}
\begin{threeparttable}
\begin{tabular}{lllllll}
            \hline
            \hline
            & \multicolumn{6}{c}{\textit{Number of applications in the two months following intervention}} \\
            & \multicolumn{3}{c}{Across streams} & \multicolumn{3}{c}{Across occupations} \\  
           \cmidrule(lr){2-4}\cmidrule(lr){5-7} \\
            & On a posted vacancy & Spontaneous & Caseworker suggestion  & Preferred & Proposed & Others \\
            \hline
            & (1) & (2) & (3) & (4) & (5) & (6) \\
            \textbf{Pessimists} & & & & & & \\
            Occupational treatment ($\widehat{\beta}_P$) & 0.019 & 0.067$^{***}$ & 0.006 & -0.0005 & 0.0002 & 0.019\\   
                                      & (0.069) & (0.020) & (0.021) & (0.027) & (0.022)                & (0.050)\\
            \cline{2-7}
            Control mean  ($\widehat{\alpha}_P$) & 1.39 & 0.088 & 0.299 & 0.339 & 0.271 & 0.781 \\
            & & & & & & \\
            \hline
            \textbf{Optimists} & & & & & & \\
            Occupational treatment ($\widehat{\beta}_O$) & 0.081$^{*}$   & 0.012 & 0.019 & 0.027$^{*}$       & 0.027$^{**}$           & 0.027\\   
                                      & (0.044) & (0.011) & (0.014) & (0.016)           & (0.014)                & (0.030)\\
            \cline{2-7} 
            Control mean ($\widehat{\alpha}_O$) & 1.22 & 0.086 & 0.272 & 0.285 & 0.232 & 0.699\\   
            & & & & & & \\
            \hline
            $\beta_P = \beta_O$ pvalue  & 0.45 & 0.02 & 0.59 & 0.37 & 0.30 & 0.90 \\
            Number Obs. & 45,612        & 45,612        & 45,612                 & 45,612        & 45,612             & 45,612\\
            \hline
        \end{tabular}
\begin{tablenotes}[flushleft]
    \small
    \item \textit{Notes:} This table reports estimated average treatment effects of the occupational recommendation on the number of applications submitted in the two months following the intervention, separately by bias group. Wording variations are pooled together. All individuals participating to the initial survey are included. Several application outcomes are considered: (1) the number of applications to posted vacancies, (2) the number of spontaneous applications (i.e., applications not linked to a posted vacancy), and (3) the number of applications submitted in response to vacancies suggested by caseworkers. Applications to posted vacancies are further disaggregated by the occupation of the vacancy: (4) preferred (target) occupation, (5) suggested occupations, and (6) other occupations. All regressions control for perceived three-month reemployment probability and the number of applications made in the month before the intervention. Application counts are winsorized at the 98th percentile of strictly positive values. P-values of Fisher tests for equal effect of the treatment are also reported. Significance levels: $^{*}$p$<$0.1; $^{**}$p$<$0.05; $^{***}$p$<$0.01.
\end{tablenotes}
\end{threeparttable}
\end{adjustbox}
\end{table}

\begin{table}[!htbp]
\caption{Applications and belief narrowness measures}
\label{tab:desc_model}
\centering
\begin{threeparttable}

\begin{tabular}{lllll}
            \hline
            \hline
            & \multicolumn{4}{c}{\textit{Nb. of app. in the month before the intervention}} \\
            \cline{2-5}
            & To a vacancy &  \multicolumn{3}{c}{By occupation } \\
            & & Target & Suggested & Others \\ 
            \hline
            
            Subjective beliefs narrowness & -0.439$^{***}$ & 0.009 & -0.157$^{***}$ & -0.288$^{***}$\\   
                                  & (0.134) & (0.057) & (0.033) & (0.087)\\   
            \hline
            Number obs. & 35,282         & 35,282         & 35,282             & 35,282\\  
            \hline 
            \hline
        \end{tabular}
\begin{tablenotes}[flushleft]
    \small
    \item \textit{Notes:} This table reports estimates of the correlation between the beliefs narrowness measure and the number of applications made in the month prior to the intervention. We first consider the total number of applications to posted vacancies, and then decompose this total by occupation type. All regressions control for gender, age, unemployment duration, education level and include fixed effects for the job seeker’s target occupation. The sample is restricted to individuals provided valid responses to the relevant items in the initial survey. Heteroskedasticity-robust standard errors are reported. Significance levels: $^{*}$p$<$0.1; $^{**}$p$<$0.05; $^{***}$p$<$0.01.
\end{tablenotes}
\end{threeparttable}
\end{table}

\begin{table}[htbp]
\centering
\caption{Occupational Treatment Effects on Beliefs: Heterogeneity by Initial Perceived Outside Options}
\label{tab:ate_by_candidature_avant_full}

\scalebox{0.75}{
\begin{tabular}{lcccc}
\toprule
& \multicolumn{4}{c}{Subjective Beliefs Regarding Application Success} \\
\cmidrule(lr){2-5}
& Target Occ. & Suggested Occ. 1 & Suggested Occ. 2 & Target vs.\ Suggested \\
& $p_{t,post}$ & $p_{1,post}$ & $p_{2,post}$ & $h_{post}$ \\
\midrule

\multicolumn{5}{l}{\textbf{Optimists}} \\

High perceived outside options
& \begin{tabular}[c]{@{}c@{}}0.613\\(0.763)\end{tabular}
& \begin{tabular}[c]{@{}c@{}}1.010\\(0.743)\end{tabular}
& \begin{tabular}[c]{@{}c@{}}2.170***\\(0.757)\end{tabular}
& \begin{tabular}[c]{@{}c@{}}-0.016*\\(0.009)\end{tabular}
\\[1.5ex]

Occ. treatment
& \begin{tabular}[c]{@{}c@{}}0.696\\(0.556)\end{tabular}
& \begin{tabular}[c]{@{}c@{}}0.167\\(0.531)\end{tabular}
& \begin{tabular}[c]{@{}c@{}}1.070**\\(0.534)\end{tabular}
& \begin{tabular}[c]{@{}c@{}}-0.007\\(0.007)\end{tabular}
\\[1.5ex]

High perceived outside options $\times$ Occ. treatment
& \begin{tabular}[c]{@{}c@{}}-0.934\\(0.763)\end{tabular}
& \begin{tabular}[c]{@{}c@{}}-0.801\\(0.752)\end{tabular}
& \begin{tabular}[c]{@{}c@{}}-1.980***\\(0.766)\end{tabular}
& \begin{tabular}[c]{@{}c@{}}0.008\\(0.009)\end{tabular}
\\

\midrule

Control Mean
& 27.7 & 22.7 & 19.9 & 0.424 \\

\midrule

\multicolumn{5}{l}{\textbf{Pessimists}} \\

High perceived outside options
& \begin{tabular}[c]{@{}c@{}}0.503\\(1.140)\end{tabular}
& \begin{tabular}[c]{@{}c@{}}1.800*\\(0.996)\end{tabular}
& \begin{tabular}[c]{@{}c@{}}-0.227\\(1.050)\end{tabular}
& \begin{tabular}[c]{@{}c@{}}-0.063***\\(0.017)\end{tabular}
\\[1.5ex]

Occ. treatment
& \begin{tabular}[c]{@{}c@{}}-0.157\\(0.614)\end{tabular}
& \begin{tabular}[c]{@{}c@{}}-0.016\\(0.561)\end{tabular}
& \begin{tabular}[c]{@{}c@{}}-0.373\\(0.577)\end{tabular}
& \begin{tabular}[c]{@{}c@{}}0.001\\(0.009)\end{tabular}
\\[1.5ex]

High perceived outside options $\times$ Occ. treatment
& \begin{tabular}[c]{@{}c@{}}-1.140\\(1.260)\end{tabular}
& \begin{tabular}[c]{@{}c@{}}-1.130\\(1.150)\end{tabular}
& \begin{tabular}[c]{@{}c@{}}0.864\\(1.200)\end{tabular}
& \begin{tabular}[c]{@{}c@{}}-0.008\\(0.018)\end{tabular}
\\

\midrule

Control Mean
& 17.9 & 13.6 & 13.7 & 0.444 \\

\midrule

Observations
& 8,573 & 8,386 & 8,154 & 7,786 \\

\bottomrule
\end{tabular}
}

\vspace{0.75em}

\parbox{\textwidth}{\footnotesize
\textit{Notes:} ``Occ.'' denotes occupation. This table reports treatment effects of the occupational recommendation on follow-up beliefs, allowing for heterogeneity by initial perceived outside options. Outcomes are the perceived probability of success in the target occupation ($p_{t,post}$), in the first suggested occupation ($p_{1,post}$), in the second suggested occupation ($p_{2,post}$), and the relative weight placed on the target occupation ($h_{post}$). All specifications control for baseline beliefs. The sample is restricted to individuals observed in the follow-up survey. Robust standard errors are reported in parentheses. $^{*}p<0.10$, $^{**}p<0.05$, and $^{***}p<0.01$.
}

\end{table}

\begin{table}[htbp]
\centering
\caption{Predictions, tests, and evidence}
\label{tab:pred_map}
\footnotesize
\renewcommand{\arraystretch}{1.25}
\setlength{\tabcolsep}{4pt}
\begin{adjustbox}{max width=\textwidth}
\begin{tabular}{@{}p{2.6cm}p{1.4cm}p{3.9cm}p{3.0cm}p{2.4cm}@{}}
\toprule
Prediction & Channel & Empirical test & Finding & Evidence \\
\midrule
P.1 No belief updating (occ.) & AA vs.\ BU & Do post-treatment beliefs about success across occupations move? & No detectable shift, including $h_{\text{post}}$ & Tables~\ref{tab:ate_by_candidature_avant_full}, \ref{tab:ols_beliefs_occ_all}, \ref{tab:dml_beliefs_selected} \\
P.2 Heterogeneity by belief narrowness (occ.) & AA vs.\ BU & Are effects concentrated among workers with dispersed priors (high outside options)? & Yes for optimists; absent for narrow-belief workers & Table~\ref{tab:ate_by_initial_beliefs} \\
P.3 No wording effect on beliefs (occ.) & AA vs.\ BU & Does the raise-awareness wording revise beliefs more than direct advice? & No & Appendix~\ref{sec:app-wordings} \\
P.4 Threshold heterogeneity in $w^*$ (mot.) & AS & Is the reservation-wage effect confined below the peer benchmark? & Yes: $+2.4\%$ below, $\approx\!0$ above & Table~\ref{tab:ate_search_behavior} \\
P.5 No shift in $\tilde F$ beliefs (mot.) & AS vs.\ BU & Does the perceived wage-offer distribution move? & No & Table~\ref{tab:ols_beliefs} \\
P.6 Effort gain among low-effort workers (mot.) & AS & Is the effort effect concentrated below the $10$h norm? & Yes: $\approx\!1.6$h below, $\approx\!0$ above & Table~\ref{tab:ate_search_behavior} \\
P.7 Upward reallocation of applications (mot.) & AS & Does application composition shift toward higher-wage vacancies? & Yes: count flat, quality up & Table~\ref{tab:ate_applications_pessimists_id_avg} \\
P.8 Ambiguous employment effects (mot.) & --- & What is the net reemployment effect? & $+2.2$pp (low-effort $+2.9$; job quality up) & Table~\ref{tab:ate_reemployment_pessimists} \\
\bottomrule
\end{tabular}
\end{adjustbox}

\vspace{0.5em}
\parbox{\textwidth}{\footnotesize \textit{Notes:} ``occ.'' and ``mot.'' denote the occupational and motivational arms. BU, AA, and AS denote the Bayesian-updating, attention-activation, and aspiration-shift channels of Section~\ref{subsec:mechanisms}. Each row states a prediction, the channel it discriminates, the test, the finding, and the evidence. Belief responses are read as suggestive rather than dispositive (see text).}
\end{table}

\clearpage

\section{Experimental Design: Additional Tables and Figures}\label{sec:app-design}

\begin{figure}[!htbp]
    \centering
    \includegraphics[width=\linewidth]{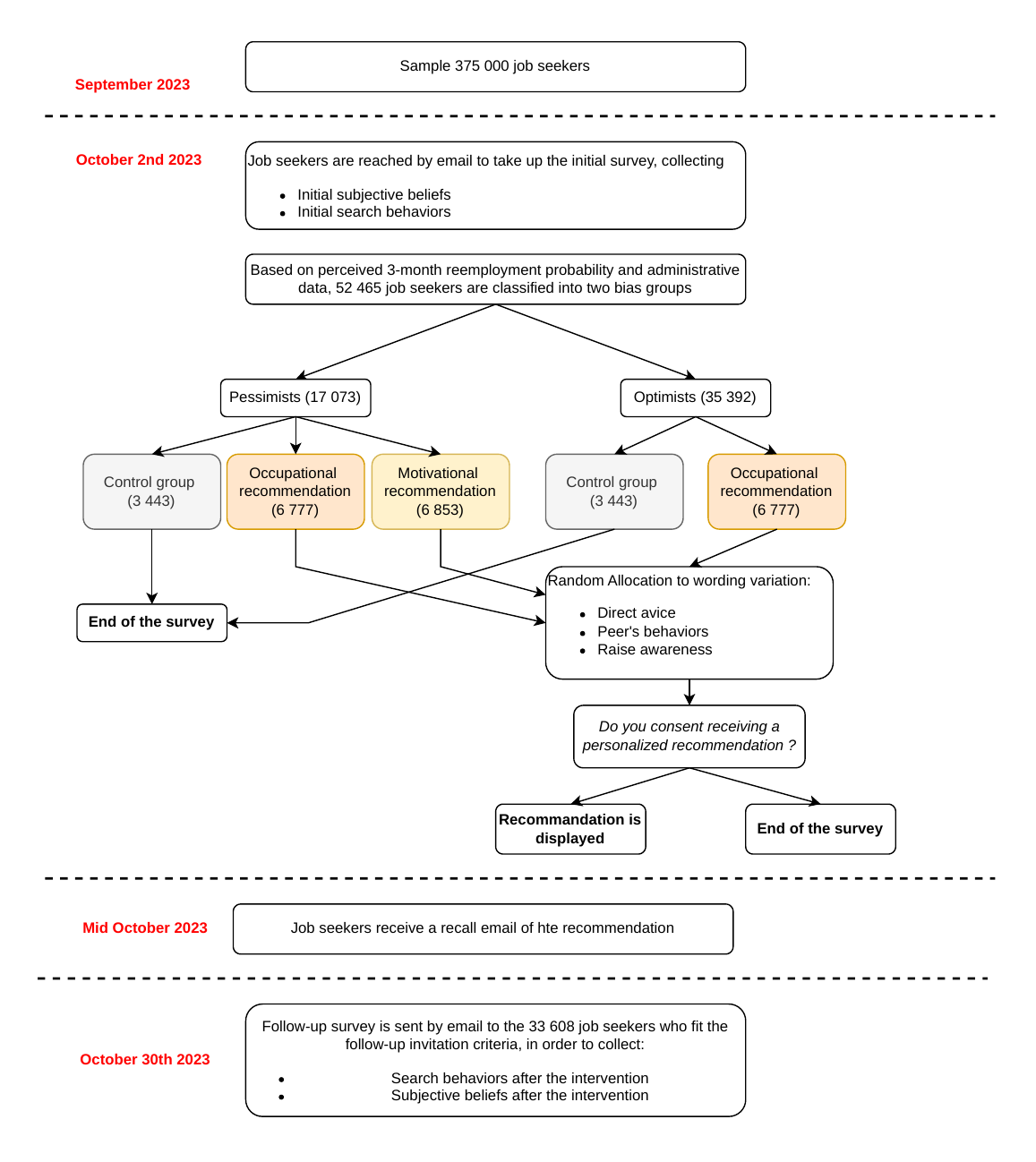}
    \caption{RCT procedure}
    \label{fig:rct_schema}
\end{figure}

\begin{table}[!htbp]
    \caption{List of variables and the databases used to obtain them}
    \label{tab:data_source}
    \centering
    \resizebox{\textwidth}{!}{\begin{tabular}{lll}
        Variable & Database used & Origin  \\
        \hline
        Demographics  & \textit{Fichiers historiques (FH)} & French PES \\
        Initial search parameters & \textit{Fichiers historiques (FH)} & French PES \\
        \multirow{2}*{3-month reemployment indicator}  & \textit{Indices de retour à l'emploi (IRE)} & \multirow{2}*{French PES} \\
          & \textit{Déclarations préalables à l'embauche (DPAE)} & \\
        Subjective beliefs before the intervention & Initial Survey & Self collected \\
        Declared search parameters before the intervention & Initial Survey & Self collected \\
        Applications made on the PES platform & \textit{Actes métiers d'intermédiation (AMI)} & French PES \\
        Hires after the intervention & \textit{Déclaration sociales nominatives (DSN)}& French administration \\
        Search behaviors/beliefs after the intervention & Follow-up survey & Self collected \\
        \hline
    \end{tabular}}
    
\end{table}

\begin{table}[!htbp]
\centering
\caption{Sample Size, Take-up, and Follow-up Sample}
\label{tab:design}
\scalebox{0.82}{
\begin{threeparttable}
\begin{tabular}{lccccc}
\toprule
& \multicolumn{2}{c}{Control} & \multicolumn{2}{c}{Treatment} & Total \\
\cmidrule(lr){2-3} \cmidrule(lr){4-5}
& Pessimists & Optimists & Pessimists & Optimists & Total \\
\midrule

\textit{Panel A. Survey invitations and enrollment} \\

Received main survey & \multicolumn{2}{c}{75,208} & \multicolumn{2}{c}{299,792} & 375,000 \\
Opened main survey & \multicolumn{2}{c}{14,784} & \multicolumn{2}{c}{58,747} & 73,531 \\
Answered first two questions & \multicolumn{2}{c}{10,491} & \multicolumn{2}{c}{41,974} & 52,465 \\
Proposed a recommendation & \multicolumn{2}{c}{--} & \multicolumn{2}{c}{34,391} & 34,391 \\

\addlinespace
\textit{Panel B. Experimental sample} \\

Enrolled in experiment & 3,443 & 7,048 & 13,630 & 28,344 & 52,465 \\
Motivational treatment & -- & -- & 6,853 & -- & 6,853 \\
Occupational treatment & -- & -- & 6,777 & 28,344 & 35,121 \\
Proposed a recommendation & -- & -- & 10,966 & 23,425 & 34,391 \\
Accepted recommendation & -- & -- & 8,643 & 18,695 & 27,338 \\
Take-up rate & -- & -- & 0.634 & 0.660 & -- \\

\addlinespace
\textit{Panel C. Follow-up sample} \\

Invited to follow-up & 1,948 & 4,322 & 8,643 (6,283) & 18,695 (14,262) & 33,608 \\
\quad Share among enrolled & 0.566 & 0.613 & 0.634 (0.461) & 0.659 (0.503) & 0.641 \\

Participated in follow-up & 915 & 2,012 & 3,797 (2,874) & 8,071 (6,418) & 14,795 \\
\quad Share among enrolled & 0.266 & 0.285 & 0.279 & 0.285 & 0.282 \\
\quad Share among follow-up invitees & 0.470 & 0.465 & 0.439 & 0.432 & 0.440 \\

Usable observations & 782 & 1,722 & 3,229 (2,470) & 6,915 (5,552) & 12,648 \\
\quad Share among enrolled & 0.227 & 0.244 & 0.237 (0.181) & 0.244 (0.196) & 0.241 \\
\quad Share among follow-up invitees & 0.401 & 0.398 & 0.374 & 0.370 & 0.376 \\

\bottomrule
\end{tabular}

\begin{tablenotes}[flushleft]
\footnotesize
\item \textit{Notes:} This table reports sample sizes and participation rates at each stage of the experimental design. Panel A reports the number of individuals who received and opened the main survey, answered the first two questions, and were proposed a recommendation. Panel B reports the enrolled experimental sample by treatment status and bias group, as well as recommendation take-up. Panel C reports the construction of the follow-up sample. Shares are computed relative to the enrolled sample or to the follow-up invitee sample, as indicated. Values in parentheses report the corresponding counts or shares among individuals who were proposed a recommendation. Dashes indicate cells that are not applicable by design.
\end{tablenotes}

\end{threeparttable}
}
\end{table}

\begin{table}[htbp]
\centering
\caption{Differences in Demographic Characteristics by Bias Group (Control Group)}
\label{tab:bias_grp_diff}
\begin{threeparttable}

\begin{tabular}{lcc}
\toprule
 & \textbf{Pessimists} & \textbf{Optimists} \\
\midrule

\textit{Panel A. Demographics} \\

Female & 0.67 & 0.61 \\
Age (mean) & 46.7 & 44.4 \\
Age $\leq$ 25 & 0.05 & 0.06 \\
Age 25--35 & 0.14 & 0.19 \\
Age 35--45 & 0.19 & 0.22 \\
Age 45--55 & 0.31 & 0.30 \\
Age $\geq$ 55 & 0.31 & 0.23 \\

\addlinespace
\textit{Panel B. Education} \\

High school & 0.22 & 0.23 \\
HS +2 years & 0.17 & 0.16 \\
Bachelor & 0.09 & 0.10 \\
Master & 0.08 & 0.08 \\
Vocational & 0.31 & 0.31 \\
Other & 0.13 & 0.13 \\

\addlinespace
\textit{Panel C. Unemployment Duration} \\

$\leq$ 6 months & 0.19 & 0.23 \\
6--12 months & 0.21 & 0.25 \\
12--24 months & 0.25 & 0.24 \\
$\geq$ 24 months & 0.35 & 0.28 \\

\addlinespace
\textit{Panel D. Household Characteristics} \\

Married & 0.47 & 0.48 \\
Has children & 0.46 & 0.49 \\

\bottomrule
\end{tabular}

\begin{tablenotes}[flushleft]
\footnotesize
\item \textit{Notes:} This table reports mean baseline characteristics by bias group. The sample is restricted to individuals in the control group. All variables are expressed as proportions, except for age, which is reported in years.
\end{tablenotes}

\end{threeparttable}
\end{table}

\begin{sidewaystable}[ht]
\caption{Balance Tests}
\label{tab:balance}
\centering
\scalebox{0.7}{
\begin{threeparttable}
\begin{tabular}{lccccccccccc}
\toprule
& \multicolumn{7}{c}{Pessimists} & \multicolumn{4}{c}{Optimists} \\
\cmidrule(lr){2-8} \cmidrule(lr){9-12}
& Ctrl. & \multicolumn{3}{c}{Motivational} & \multicolumn{3}{c}{Occupational}
& Ctrl. & \multicolumn{3}{c}{Occupational} \\
\cmidrule(lr){3-5} \cmidrule(lr){6-8} \cmidrule(lr){10-12}
& & Advice & Peers & Awareness & Advice & Peers & Awareness
& & Advice & Peers & Awareness \\
\midrule

\textit{Panel A. Demographics} \\

Female (\%) & 0.67 & 0.677 (0.46) & 0.652 (0.20) & 0.660 (0.49) & 0.648 (0.11) & 0.661 (0.59) & 0.661 (0.55) & 0.61 & 0.617 (0.66) & 0.620 (0.39) & 0.614 (0.96) \\

Age (mean) & 46.70 & 46.8 (0.82) & 47.0 (0.29) & 47.2 (0.08) & 46.9 (0.48) & 47.1 (0.21) & 47.0 (0.32) & 44.40 & 44.6 (0.42) & 44.6 (0.55) & 44.5 (0.69) \\

Age $\leq$ 25 & 0.05 & 0.047 (0.38) & 0.053 (0.46) & 0.047 (0.11) & 0.047 (0.49) & 0.038 (0.23) & 0.052 (0.43) & 0.06 & 0.061 (0.68) & 0.059 (0.67) & 0.059 (0.61) \\
Age 25--35 & 0.14 & 0.152 (0.38) & 0.130 (0.46) & 0.124 (0.11) & 0.140 (0.49) & 0.144 (0.23) & 0.129 (0.43) & 0.19 & 0.179 (0.68) & 0.180 (0.67) & 0.182 (0.61) \\
Age 35--45 & 0.19 & 0.175 (0.38) & 0.179 (0.46) & 0.189 (0.11) & 0.180 (0.49) & 0.192 (0.23) & 0.186 (0.43) & 0.23 & 0.232 (0.68) & 0.230 (0.67) & 0.236 (0.61) \\
Age 45--55 & 0.30 & 0.321 (0.38) & 0.321 (0.46) & 0.335 (0.11) & 0.327 (0.49) & 0.321 (0.23) & 0.327 (0.43) & 0.30 & 0.298 (0.68) & 0.305 (0.67) & 0.298 (0.61) \\
Age $\geq$ 55 & 0.31 & 0.305 (0.38) & 0.316 (0.46) & 0.305 (0.11) & 0.306 (0.49) & 0.305 (0.23) & 0.306 (0.43) & 0.23 & 0.230 (0.68) & 0.225 (0.67) & 0.226 (0.61) \\

\addlinespace
\textit{Panel B. Education} \\

High school & 0.22 & 0.236 (0.84) & 0.245 (0.63) & 0.237 (0.05) & 0.249 (0.05) & 0.236 (0.71) & 0.255 (0.05) & 0.23 & 0.234 (0.96) & 0.240 (0.21) & 0.232 (0.90) \\
HS +2 years & 0.17 & 0.165 (0.84) & 0.167 (0.63) & 0.148 (0.05) & 0.149 (0.05) & 0.169 (0.71) & 0.162 (0.05) & 0.16 & 0.160 (0.96) & 0.157 (0.21) & 0.163 (0.90) \\
Bachelor & 0.09 & 0.096 (0.84) & 0.094 (0.63) & 0.107 (0.05) & 0.081 (0.05) & 0.098 (0.71) & 0.074 (0.05) & 0.10 & 0.098 (0.96) & 0.088 (0.21) & 0.093 (0.90) \\
Master & 0.08 & 0.078 (0.84) & 0.074 (0.63) & 0.077 (0.05) & 0.086 (0.05) & 0.067 (0.71) & 0.068 (0.05) & 0.07 & 0.078 (0.96) & 0.084 (0.21) & 0.080 (0.90) \\
Vocational & 0.31 & 0.298 (0.84) & 0.292 (0.63) & 0.314 (0.05) & 0.297 (0.05) & 0.300 (0.71) & 0.312 (0.05) & 0.31 & 0.300 (0.96) & 0.297 (0.21) & 0.302 (0.90) \\
Other & 0.13 & 0.128 (0.84) & 0.128 (0.63) & 0.117 (0.05) & 0.138 (0.05) & 0.130 (0.71) & 0.129 (0.05) & 0.13 & 0.129 (0.96) & 0.134 (0.21) & 0.130 (0.90) \\

\addlinespace
\textit{Panel C. Labor Market Character.} \\

Children & 0.46 & 0.463 (0.67) & 0.450 (0.55) & 0.478 (0.14) & 0.466 (0.53) & 0.457 (0.94) & 0.457 (0.93) & 0.49 & 0.498 (0.24) & 0.499 (0.18) & 0.496 (0.34) \\
Married & 0.47 & 0.466 (0.72) & 0.478 (0.56) & 0.475 (0.75) & 0.466 (0.71) & 0.467 (0.76) & 0.464 (0.62) & 0.48 & 0.477 (0.58) & 0.483 (0.81) & 0.482 (0.90) \\

\addlinespace
Search effort & 11.40 & 11.7 (0.14) & 11.7 (0.22) & 11.6 (0.30) & 11.4 (0.86) & 11.8 (0.06) & 11.4 (0.98) & 13.80 & 13.7 (0.50) & 13.8 (0.99) & 13.8 (0.81) \\

\addlinespace
\textit{Panel D. Beliefs} \\

3-month reemployment prob. & 15.80 & 15.7 (0.85) & 16.2 (0.17) & 15.7 (0.77) & 16.0 (0.44) & 15.6 (0.55) & 15.6 (0.63) & 59.00 & 59.3 (0.35) & 59.1 (0.69) & 59.8 (0.02) \\
6-month reemployment prob. & 26.50 & 25.6 (0.08) & 26.5 (0.91) & 25.8 (0.17) & 26.3 (0.69) & 25.8 (0.18) & 25.3 (0.02) & 67.60 & 67.7 (0.67) & 67.8 (0.54) & 68.1 (0.15) \\

\midrule
Sample size & 3,443 & 2,352 & 2,242 & 2,259 & 2,289 & 2,265 & 2,223 & 70,748 & 9,425 & 9,408 & 9,511 \\
\bottomrule
\end{tabular}

\begin{tablenotes}
\footnotesize
\item \textit{Notes:} This table reports baseline characteristics by treatment arm and belief group. Values are means. P-values from tests of equality with the control group are reported in parentheses. For binary and continuous variables, p-values are based on two-sided t-tests; for categorical variables, they are based on chi-squared tests. “Motivational” and “Occupational” denote treatment types. “Advice” refers to direct recommendations, “Peers” to information on peers’ behavior, and “Awareness” to informational nudges. “Prob.” denotes probability.
\end{tablenotes}

\end{threeparttable}
}
\end{sidewaystable}

\begin{sidewaystable}[ht]
\caption{Balance Tests: Follow-up Subsample}
\label{tab:balance_follow}
\centering
\scalebox{0.7}{
\begin{threeparttable}
\begin{tabular}{lccccccccccc}
\toprule
& \multicolumn{7}{c}{Pessimists} & \multicolumn{4}{c}{Optimists} \\
\cmidrule(lr){2-8} \cmidrule(lr){9-12}
& Ctrl. & \multicolumn{3}{c}{Motivational} & \multicolumn{3}{c}{Occupational}
& Ctrl. & \multicolumn{3}{c}{Occupational} \\
\cmidrule(lr){3-5} \cmidrule(lr){6-8} \cmidrule(lr){10-12}
& & Advice & Peers & Awareness & Advice & Peers & Awareness
& & Advice & Peers & Awareness \\
\midrule

\textit{Panel A. Demographics} \\

Female (\%) & 0.62 & 0.631 (0.58) & 0.599 (0.50) & 0.599 (0.51) & 0.612 (0.88) & 0.647 (0.25) & 0.658 (0.12) & 0.57 & 0.577 (0.87) & 0.589 (0.35) & 0.580 (0.75) \\

Age (mean) & 49.00 & 49.1 (0.89) & 49.9 (0.14) & 50.0 (0.09) & 49.5 (0.39) & 49.8 (0.17) & 49.3 (0.70) & 47.40 & 47.0 (0.18) & 46.9 (0.08) & 47.1 (0.32) \\

Age $\leq$ 25 & 0.04 & 0.028 (0.58) & 0.026 (0.81) & 0.020 (0.54) & 0.025 (0.90) & 0.019 (0.57) & 0.036 (0.96) & 0.03 & 0.034 (0.80) & 0.038 (0.49) & 0.028 (0.91) \\
Age 25--35 & 0.114 & 0.076 (0.81) & 0.074 (0.54) & 0.088 (0.90) & 0.096 (0.57) & 0.093 (0.96) & 0.130 & 0.13 & 0.134 (0.49) & 0.131 (0.91) \\
Age 35--45 & 0.142 & 0.157 (0.81) & 0.160 (0.54) & 0.155 (0.90) & 0.153 (0.57) & 0.146 (0.96) & 0.210 & 0.21 & 0.218 (0.80) & 0.212 (0.49) & 0.221 (0.91) \\
Age 45--55 & 0.343 & 0.352 (0.81) & 0.357 (0.54) & 0.353 (0.90) & 0.345 (0.57) & 0.353 (0.96) & 0.340 & 0.34 & 0.333 (0.80) & 0.340 (0.49) & 0.333 (0.91) \\
Age $\geq$ 55 & 0.37 & 0.373 (0.58) & 0.389 (0.81) & 0.388 (0.54) & 0.378 (0.90) & 0.387 (0.57) & 0.372 (0.96) & 0.29 & 0.285 (0.80) & 0.275 (0.49) & 0.287 (0.91) \\

\addlinespace
\textit{Panel B. Education} \\

High school & 0.21 & 0.205 (0.97) & 0.220 (0.98) & 0.234 (0.09) & 0.255 (0.61) & 0.249 (0.34) & 0.254 (0.50) & 0.24 & 0.228 (0.55) & 0.234 (0.68) & 0.219 (0.46) \\
HS +2 years & 0.19 & 0.207 (0.97) & 0.192 (0.98) & 0.138 (0.09) & 0.177 (0.61) & 0.187 (0.34) & 0.173 (0.50) & 0.18 & 0.172 (0.55) & 0.170 (0.68) & 0.179 (0.46) \\
Bachelor & 0.09 & 0.094 (0.97) & 0.100 (0.98) & 0.125 (0.09) & 0.073 (0.61) & 0.077 (0.34) & 0.066 (0.50) & 0.10 & 0.097 (0.55) & 0.099 (0.68) & 0.105 (0.46) \\
Master & 0.10 & 0.089 (0.97) & 0.095 (0.98) & 0.095 (0.09) & 0.084 (0.61) & 0.074 (0.34) & 0.087 (0.50) & 0.09 & 0.086 (0.55) & 0.097 (0.68) & 0.089 (0.46) \\
Vocational & 0.29 & 0.292 (0.97) & 0.290 (0.98) & 0.299 (0.09) & 0.303 (0.61) & 0.277 (0.34) & 0.304 (0.50) & 0.28 & 0.291 (0.55) & 0.278 (0.68) & 0.283 (0.46) \\
Other & 0.11 & 0.114 (0.97) & 0.102 (0.98) & 0.110 (0.09) & 0.107 (0.61) & 0.136 (0.34) & 0.116 (0.50) & 0.11 & 0.126 (0.55) & 0.123 (0.68) & 0.126 (0.46) \\

\addlinespace
\textit{Panel C. Labor Market Characteristics} \\

Children & 0.42 & 0.474 (0.07) & 0.458 (0.22) & 0.413 (0.66) & 0.445 (0.45) & 0.419 (0.83) & 0.457 (0.24) & 0.48 & 0.468 (0.48) & 0.494 (0.33) & 0.487 (0.59) \\
Married & 0.48 & 0.467 (0.58) & 0.521 (0.15) & 0.442 (0.15) & 0.466 (0.57) & 0.464 (0.52) & 0.482 (0.99) & 0.51 & 0.471 (0.02) & 0.505 (0.88) & 0.513 (0.68) \\

\addlinespace
Search effort & 11.90 & 12.9 (0.04) & 12.6 (0.12) & 12.6 (0.13) & 11.7 (0.70) & 12.4 (0.24) & 12.0 (0.76) & 14.60 & 14.4 (0.56) & 14.4 (0.48) & 14.5 (0.62) \\

\addlinespace
\textit{Panel D. Beliefs} \\

3-month reemployment prob. & 16.40 & 16.7 (0.62) & 17.0 (0.32) & 15.9 (0.37) & 16.8 (0.55) & 16.7 (0.67) & 16.8 (0.57) & 57.60 & 57.3 (0.64) & 57.2 (0.59) & 58.3 (0.36) \\
6-month reemployment prob. & 27.30 & 26.6 (0.55) & 26.6 (0.55) & 25.8 (0.17) & 26.6 (0.53) & 27.4 (0.89) & 25.3 (0.07) & 66.70 & 65.9 (0.31) & 66.6 (0.87) & 66.7 (0.99) \\

\midrule
Sample size & 782 & 542 & 569 & 540 & 521 & 530 & 527 & 1,722 & 2,269 & 2,266 & 2,353 \\
\bottomrule
\end{tabular}

\begin{tablenotes}
\footnotesize
\item \textit{Notes:} This table reports baseline characteristics for the follow-up subsample (individuals who completed the follow-up survey and remained actively searching). Values are means. P-values from tests of equality with the control group are reported in parentheses. For binary and continuous variables, p-values are based on two-sided t-tests; for categorical variables, they are based on chi-squared tests. “Motivational” and “Occupational” denote treatment types. “Advice” refers to direct recommendations, “Peers” to information on peers’ behavior, and “Awareness” to informational messages. “Prob.” denotes probability.
\end{tablenotes}

\end{threeparttable}
}
\end{sidewaystable}

\clearpage

\section{Correction for invitation-stage selection: Theory}\label{sec:invitation_only_selection}\label{sec:selectedsurvey}

In this section, we consider estimators that correct only for selection at the invitation stage of the follow-up survey. The key idea is that treatment may affect invitation probabilities, while latent response behavior remains invariant across treatment arms. In that case, it is natural to target the average treatment effect among individuals who would respond if invited.

\subsection{Setup and target parameter}

Let $T_i \in \{0,1\}$ denote treatment assignment and $X_i$ baseline covariates. Selection into the observed follow-up sample occurs in two stages. Let $E_i$ denote invitation, $R_i$ response conditional on invitation, and
\[
S_i = E_i R_i
\]
the final observation indicator. We define potential variables
\[
Y_i(t), \quad E_i(t), \quad R_i(t), \qquad t \in \{0,1\},
\]
with consistency: $Y_i = Y_i(T_i)$, $E_i = E_i(T_i)$, and $R_i = R_i(T_i)$. Denote by
\[
e_t(X) = \Pr(E_i = 1 \mid X_i = x, T_i = t)
\]
the probability of being invited conditional on treatment and covariates. In our experiment, the invitation mechanism in the control group is deterministic: conditional on $X_i$, individuals are invited if and only if they answered the last two questions of the initial survey. Let
\[
\mathcal{X}_0 = \{x : e_0(x) = 1\}
\]
denote the subset of covariate values for which invitation under control occurs with probability one. We restrict attention to this subpopulation throughout. For notational simplicity, we do not explicitly condition on $X_i \in \mathcal{X}_0$ in what follows, and all expectations and probabilities are understood to be taken with respect to this restricted population.

We target the average treatment effect conditional on response (given invitation), defined as
\[
\theta_R = \mathbb{E}[Y_i(1) - Y_i(0) \mid R_i = 1].
\]

\subsection{Assumptions}

We impose the following assumptions.

\medskip

\begin{hyp}\label{ass:selection}
    Assume that the following conditions hold
    \begin{enumerate}
        \item[(A1)] Random assignment
        $$(Y_i(1),Y_i(0),E_i(1),E_i(0),R_i(1),R_i(0)) \perp T_i \mid X_i.$$
        \item[(A2)] Response invariance
        $$ R_i(1) = R_i(0) \equiv R_i \quad \text{a.s.}$$
         \item[(A3)] Invitation ignorability: for all $t\in\{0,1\}$,
         $$ (Y_i(t),R_i(t)) \perp E_i(t) \mid X_i.$$
         \item[(A4)] Overlap: for all $t\in\{0,1\}$ and all $x$ in the support of $X_i$,
    \begin{align*}
    & 0 < p_t = P(T_i=t) < 1 \\
     &  0 < P(R_i=1|X_i=x) \\
     &   0 < e_t(x) = P(E_i=1 \mid X_i=x, T_i=t).
    \end{align*}
    \end{enumerate}
\end{hyp}

We describe in the following two section the IPW and Neyman-orthogonal estimator for the average treatment effect conditional on the responding conditional on invitation $\theta_R$.

\subsection{Identification on the selected sample}

The key observation is that under response invariance, the conditional expectation of potential outcomes among responders is identified directly from the selected sample.

\begin{prop}\label{prop:selected_identification}
Under Assumptions (A1)--(A4), for $t \in \{0,1\}$:
\[
\mathbb{E}[Y_i \mid S_i = 1, X_i, T_i = t] = \mathbb{E}[Y_i(t) \mid R_i = 1, X_i].
\]
\end{prop}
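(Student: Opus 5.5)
The plan is to unpack the selection event $S_i=1$ into its two components, substitute potential variables through consistency, and then strip away the conditioning on $T_i$ and on invitation using (A1) and (A3). On the event $\{T_i=t\}$, consistency gives $S_i = E_i(t)R_i(t)$ and $Y_i=Y_i(t)$. By (A2), $R_i(t)=R_i$. Hence, for $t\in\{0,1\}$,
\[
\mathbb{E}[Y_i \mid S_i=1, X_i, T_i=t] = \mathbb{E}[Y_i(t) \mid E_i(t)=1, R_i=1, X_i, T_i=t].
\]
This conditional expectation is well defined because overlap (A4) ensures $P(T_i=t)>0$, $e_t(X_i)>0$ and $P(R_i=1\mid X_i)>0$. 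The remaining step needs $P(E_i(t)=1,R_i=1\mid X_i,T_i=t)>0$, which I obtain below as the product $e_t(X_i)P(R_i=1\mid X_i)$.

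\textbf{Step 1: drop the conditioning on $T_i=t$.} By (A1), the vector $(Y_i(t),E_i(t),R_i)$ is independent of $T_i$ given $X_i$. The conditional law of $Y_i(t)$ given $(E_i(t),R_i,X_i,T_i=t)$ therefore equals its law given $(E_i(t),R_i,X_i)$. This yields $\mathbb{E}[Y_i(t)\mid E_i(t)=1,R_i=1,X_i]$.

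\textbf{Step 2: drop the conditioning on invitation.} By (A3), $(Y_i(t),R_i)\perp E_i(t)\mid X_i$; this uses (A2) to write $R_i(t)=R_i$. It follows that for any bounded function $g$,
\[
\mathbb{E}[g(Y_i(t))\mathbf{1}\{R_i=1\}\mathbf{1}\{E_i(t)=1\}\mid X_i] = e_t(X_i)\,\mathbb{E}[g(Y_i(t))\mathbf{1}\{R_i=1\}\mid X_i].
\]
Dividing by $P(E_i(t)=1,R_i=1\mid X_i)=e_t(X_i)P(R_i=1\mid X_i)$ gives
\[
\mathbb{E}[Y_i(t)\mid E_i(t)=1,R_i=1,X_i]=\mathbb{E}[Y_i(t)\mid R_i=1,X_i],
\]
which is the claim. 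Within $\mathcal{X}_0$ the case $t=0$ is even simpler, since $e_0=1$ there and the invitation conditioning is trivial.

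\textbf{Main obstacle.} The delicate point is Step 2. There I must use the \emph{joint} independence in (A3), not merely marginal independence of $Y_i(t)$ and of $R_i(t)$ from $E_i(t)$. Only joint independence lets me condition on $R_i=1$ and still remove $E_i(t)$. I will write this out as a ratio of expectations with indicators to make it explicit. I also need to confirm that the division is legitimate via (A4), and that the substitution of $R_i$ for $R_i(t)$ is made before invoking (A3), so that the conditioning event is common to both arms.
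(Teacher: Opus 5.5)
Your proof is correct and follows the same chain as the paper's: rewrite $\{S_i=1\}$ as $\{E_i=1,R_i=1\}$, pass to potential variables by consistency and (A2), drop the conditioning on $T_i=t$ by (A1), and then drop $E_i(t)=1$ by the joint independence in (A3). Your indicator-ratio computation and the positivity check via (A4) only spell out what the paper's final step asserts.
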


\textbf{Proof of Proposition \ref{prop:selected_identification}.}
Since $S_i = E_i R_i$, the event $\{S_i = 1\}$ is equivalent to $\{E_i = 1, R_i = 1\}$. Thus:
\begin{align*}
\mathbb{E}[Y_i \mid S_i = 1, X_i, T_i = t] 
&= \mathbb{E}[Y_i \mid E_i = 1, R_i = 1, X_i, T_i = t] \\
&= \mathbb{E}[Y_i(t) \mid E_i(t) = 1, R_i(t) = 1, X_i, T_i = t] \\
&= \mathbb{E}[Y_i(t) \mid E_i(t) = 1, R_i = 1, X_i, T_i = t] && \text{(using (A2))} \\
&= \mathbb{E}[Y_i(t) \mid E_i(t) = 1, R_i = 1, X_i] && \text{(using (A1))} \\
&= \mathbb{E}[Y_i(t) \mid R_i = 1, X_i]. && \text{(using (A3))}
\end{align*}
The last step uses $(Y_i(t), R_i) \perp E_i(t) \mid X_i$, so if $P(R_i=1|X_i=x)>0$ for all $x \in \mathcal{X}_0$, conditioning on $E_i(t) = 1$ does not change the conditional expectation of $Y_i(t)$ given $R_i = 1$ and $X_i$.\hfill $\square$

\subsection{Partially linear regression on the selected sample}

\subsubsection{Model}

We adopt a partially linear model for the conditional expectation among responders:
\[
\mathbb{E}[Y_i(t) \mid R_i = 1, X_i] = \mu_t + h(X_i),
\]
where $\mu_t$ is a treatment-specific intercept and $h(\cdot)$ is a nonparametric function of covariates common to both treatment arms. This implies a constant conditional average treatment effect among responders:
\[
\mathbb{E}[Y_i(1) - Y_i(0) \mid R_i = 1, X_i] = \mu_1 - \mu_0 = \theta_R.
\]

By Proposition \ref{prop:selected_identification}, this model can be written in terms of observables on the selected sample:
\[
\mathbb{E}[Y_i \mid S_i = 1, X_i, T_i] = \theta_R \cdot T_i + h_0(X_i),
\]
where $h_0(X_i) = \mu_0 + h(X_i)$.

\subsubsection{Nuisance functions}\label{sec:nuisance}

Define the nuisance functions on the selected sample:
\begin{align*}
\ell(x) &= \mathbb{E}[Y_i \mid S_i = 1, X_i = x], \\
m(x) &= \mathbb{E}[T_i \mid S_i = 1, X_i = x].
\end{align*}

The function $\ell(x)$ is the conditional expectation of the outcome given covariates among selected individuals. The function $m(x)$ is the treatment propensity in the selected sample, which differs from the population propensity due to differential invitation:
\[
m(x) = \Pr(T_i = 1 \mid S_i = 1, X_i = x) = \frac{p_1 \cdot e_1(x)}{p_1 \cdot e_1(x) + p_0 \cdot e_0(x)}.
\]

In our setting where $e_0(x) = 1$, this simplifies to:
\[
m(x) = \frac{p_1 \cdot e_1(x)}{p_1 \cdot e_1(x) + p_0}.
\]

Even though the population treatment propensity $\Pr(T_i = 1 \mid X_i)$ is constant (equal to $p_1$ under randomization), the selected-sample propensity $m(x)$ varies with $x$ through the invitation probability $e_1(x)$. Regions where treated individuals are less likely to be invited have lower values of $m(x)$ in the selected sample.

\subsubsection{Estimation}

Under the partially linear model, $\ell(x) = \theta_R \cdot m(x) + h_0(x)$, which implies $Y_i - \ell(X_i) = \theta_R (T_i - m(X_i)) + \varepsilon_i$, 
where $\mathbb{E}[\varepsilon_i \mid X_i, S_i = 1] = 0$. The parameter $\theta_R$ is identified from the regression of $(Y_i - \ell(X_i))$ on $(T_i - m(X_i))$ among selected individuals.

\medskip

The partially linear regression (PLR) estimator is:
\[
\hat{\theta}_R = \frac{\sum_{i: S_i = 1} (T_i - \hat{m}(X_i))(Y_i - \hat{\ell}(X_i))}{\sum_{i: S_i = 1} (T_i - \hat{m}(X_i))^2},
\]
where $\hat{\ell}$ and $\hat{m}$ are nonparametric estimators of the nuisance functions. The estimator is identical in form to the standard Double/Debiased Machine Learning (DML) partially linear regression estimator of \citet{chernozhukov2018double}, applied to the selected sample. The key insight is that under response invariance and the partially linear model, applying DML to the selected sample identifies the treatment effect among responders $\theta_R$, not merely the treatment effect among the selected.

\subsubsection{Neyman-orthogonal score}

The PLR estimator solves the sample moment condition $\sum_{i: S_i = 1} \psi_i(\theta_R, \eta) = 0$, where the score function is:
\[
\psi_i(\theta, \eta) = (T_i - m(X_i))\left[Y_i - \ell(X_i) - \theta(T_i - m(X_i))\right],
\]
and $\eta = (\ell, m)$ collects the nuisance functions. The score $\psi_i(\theta_R, \eta)$ is Neyman-orthogonal with respect to $\eta$ at the true parameter values. Neyman orthogonality ensures that first-order errors in estimating $\ell$ and $m$ do not affect the asymptotic distribution of $\hat{\theta}_R$, allowing the use of machine learning methods for nuisance estimation.

\subsection{Implementation}

\subsubsection{Cross-fitting}

To avoid regularization bias when using machine learning estimators, we implement the estimator using cross-fitting:

\begin{enumerate}
\item Partition the selected sample $\{i : S_i = 1\}$ into $K$ folds.
\item For each fold $k$:
\begin{enumerate}
    \item Estimate $\hat{\ell}^{(-k)}$ and $\hat{m}^{(-k)}$ using observations outside fold $k$.
    \item Compute residuals for observations in fold $k$ using out-of-fold predictions.
\end{enumerate}
\item Pool across folds:
\[
\hat{\theta}_R = \frac{\sum_{k=1}^K \sum_{i \in I_k} (T_i - \hat{m}^{(-k)}(X_i))(Y_i - \hat{\ell}^{(-k)}(X_i))}{\sum_{k=1}^K \sum_{i \in I_k} (T_i - \hat{m}^{(-k)}(X_i))^2}.
\]
\end{enumerate}

We repeat this procedure across $J$ independent random partitions and aggregate using median aggregation following Definition~4.3 in \citet{chernozhukov2018generic}.

\subsubsection{Nuisance estimation}

The nuisance functions are estimated using machine learning on the selected sample:

\begin{itemize}
\item $\hat{\ell}(x)$: Estimated by regressing $Y_i$ on $X_i$ among selected individuals ($S_i = 1$).

\item $\hat{m}(x)$: Estimated by regressing $T_i$ on $X_i$ among selected individuals ($S_i = 1$), or equivalently by classification with probability predictions.
\end{itemize}

We use random forests, though other learners (gradient boosting, elastic net, or ensemble methods) may also be used.

\subsubsection{Variance estimation}

The variance is estimated using the influence function. Define:
\[
\hat{V} = \frac{1}{n_S} \sum_{i: S_i = 1} (T_i - \hat{m}(X_i))^2,
\]
where $n_S = \sum_i S_i$ is the selected sample size. The influence function for observation $i$ (with $S_i = 1$) is:
\[
\text{IF}_i = \frac{1}{\hat{V}} (T_i - \hat{m}(X_i))\left[Y_i - \hat{\ell}(X_i) - \hat{\theta}_R(T_i - \hat{m}(X_i))\right].
\]

The variance is estimated as:
\[
\widehat{\text{Var}}(\hat{\theta}_R) = \frac{1}{n_S^2} \sum_{i: S_i = 1} \text{IF}_i^2.
\]

\subsubsection{Aggregation across sample splits}

Following \citet{chernozhukov2018generic}, we aggregate across $J$ independent sample splits. For each split $j$, we obtain $\hat{\theta}_R^{(j)}$ and $\hat{\sigma}^{(j)}$, yielding confidence interval $[L^{(j)}, U^{(j)}]$. The final estimate and confidence interval are:
\[
\hat{\theta}_R = \text{median}\{\hat{\theta}_R^{(1)}, \ldots, \hat{\theta}_R^{(J)}\},
\]
\[
\text{CI} = \left[ \text{median}\{L^{(1)}, \ldots, L^{(J)}\}, \; \text{median}\{U^{(1)}, \ldots, U^{(J)}\} \right].
\]

\FloatBarrier

\section{Correction for invitation-stage: Results}\label{sec:results_correction}

\begin{table}[htbp]
\centering
\caption{Permutation Test Results for Survey Response Conditional on Invitation}
\label{tab:perm_test_results}

\scalebox{0.8}{
\begin{tabular}{l r r r r r r}
\toprule
Treatment
& Invited
& Treated
& Control
& Difference
& Permutation p-value
& Covariate-adjusted p-value \\
\midrule

Motivational
& 4,773 & 0.400 & 0.398 & 0.003 & 0.884 & 0.398 \\

Occupational
& 4,721 & 0.396 & 0.398 & -0.002 & 0.907 & 0.700 \\

M. Direct Advice
& 2,837 & 0.392 & 0.398 & -0.006 & 0.779 & 0.267 \\

M. Peers' Behaviors
& 2,796 & 0.412 & 0.398 & 0.015 & 0.470 & 0.581 \\

M. Raise Awareness
& 2,792 & 0.396 & 0.398 & -0.001 & 0.963 & 0.161 \\

O. Direct Advice
& 2,821 & 0.391 & 0.398 & -0.007 & 0.754 & 0.746 \\

O. Peers' Behaviors
& 2,784 & 0.403 & 0.398 & 0.005 & 0.812 & 0.883 \\

O. Raise Awareness
& 2,768 & 0.393 & 0.398 & -0.005 & 0.835 & 0.625 \\

\bottomrule
\end{tabular}
}

\vspace{0.75em}

\parbox{\textwidth}{\footnotesize
\textit{Notes:} This table reports permutation tests of Assumption~\ref{ass:selection}-(A2), comparing survey response rates between treatment and control groups conditional on invitation. The first p-value column reports the standard permutation test. The second reports the permutation test after residualizing outcomes with respect to baseline covariates. Response rates are reported as proportions.
}

\end{table}

\begin{table}[htbp]
\centering
\caption{Treatment Effects on Search Effort: DML on Selected Sample}
\label{tab:dml_eff_selected}

\scalebox{0.9}{
\begin{tabular}{l c c c c c c}
\toprule
& \multicolumn{6}{c}{Estimation Method} \\
\cmidrule(lr){2-7}
Treatment
& Naive
& Stacked
& Random Forest
& Elastic Net
& XGBoost
& Naive IPW \\
\midrule

Motivational
& \begin{tabular}[c]{@{}c@{}}1.030***\\(0.323)\end{tabular}
& \begin{tabular}[c]{@{}c@{}}1.080***\\(0.342)\end{tabular}
& \begin{tabular}[c]{@{}c@{}}1.110***\\(0.343)\end{tabular}
& \begin{tabular}[c]{@{}c@{}}0.980***\\(0.341)\end{tabular}
& \begin{tabular}[c]{@{}c@{}}1.060***\\(0.349)\end{tabular}
& \begin{tabular}[c]{@{}c@{}}1.310***\\(0.382)\end{tabular}
\\[1.5ex]

Occupational
& \begin{tabular}[c]{@{}c@{}}0.230\\(0.320)\end{tabular}
& \begin{tabular}[c]{@{}c@{}}0.220\\(0.340)\end{tabular}
& \begin{tabular}[c]{@{}c@{}}0.290\\(0.337)\end{tabular}
& \begin{tabular}[c]{@{}c@{}}0.220\\(0.339)\end{tabular}
& \begin{tabular}[c]{@{}c@{}}0.160\\(0.344)\end{tabular}
& \begin{tabular}[c]{@{}c@{}}0.600\\(0.382)\end{tabular}
\\

\bottomrule
\end{tabular}
}

\vspace{0.75em}

\parbox{\textwidth}{\footnotesize
\textit{Notes:} This table reports estimates of the average treatment effect on search effort among survey respondents, $\theta_R=\mathbb{E}[Y(1)-Y(0)\mid R=1]$. Under response invariance, this parameter is identified using Double Machine Learning (DML) on the selected sample consisting of follow-up survey respondents satisfying $e_0(X)=1$. The naive specification regresses search effort on treatment status while controlling for baseline search effort and arrival-rate beliefs. DML specifications use a rich set of administrative characteristics, pre-intervention behavior, and baseline survey responses to estimate nuisance functions. Random Forest, Elastic Net (GlmNet), XGBoost, and a stacked ensemble are considered. The Naive IPW specification reweights observations using the inverse predicted probability of survey participation. Robust standard errors are reported in parentheses. $^{*}p<0.10$, $^{**}p<0.05$, and $^{***}p<0.01$.
}

\end{table}

\begin{table}[htbp]
\centering
\caption{Treatment Effects on Reservation Wage (log): DML on Selected Sample}
\label{tab:dml_wage_selected}

\scalebox{0.9}{
\begin{tabular}{l c c c c c c}
\toprule
& \multicolumn{6}{c}{Estimation Method} \\
\cmidrule(lr){2-7}
Treatment
& Naive
& Stacked
& Random Forest
& Elastic Net
& XGBoost
& Naive IPW \\
\midrule

Motivational
& \begin{tabular}[c]{@{}c@{}}0.021**\\(0.009)\end{tabular}
& \begin{tabular}[c]{@{}c@{}}0.019**\\(0.009)\end{tabular}
& \begin{tabular}[c]{@{}c@{}}0.019**\\(0.009)\end{tabular}
& \begin{tabular}[c]{@{}c@{}}0.021**\\(0.009)\end{tabular}
& \begin{tabular}[c]{@{}c@{}}0.012\\(0.010)\end{tabular}
& \begin{tabular}[c]{@{}c@{}}0.021**\\(0.009)\end{tabular}
\\[1.5ex]

Occupational
& \begin{tabular}[c]{@{}c@{}}0.009\\(0.009)\end{tabular}
& \begin{tabular}[c]{@{}c@{}}0.009\\(0.009)\end{tabular}
& \begin{tabular}[c]{@{}c@{}}0.009\\(0.009)\end{tabular}
& \begin{tabular}[c]{@{}c@{}}0.008\\(0.009)\end{tabular}
& \begin{tabular}[c]{@{}c@{}}0.008\\(0.009)\end{tabular}
& \begin{tabular}[c]{@{}c@{}}0.009\\(0.008)\end{tabular}
\\

\bottomrule
\end{tabular}
}

\vspace{0.75em}

\parbox{\textwidth}{\footnotesize
\textit{Notes:} This table reports estimates of the average treatment effect on log reservation wages among survey respondents. Reservation wages are winsorized at the interval [1,400, 5,000] before taking logarithms. Under response invariance, Double Machine Learning (DML) on the selected sample identifies $\theta_R=\mathbb{E}[Y(1)-Y(0)\mid R=1]$. The naive specification controls for baseline log reservation wage, wage-offer beliefs, and occupation fixed effects. The stacked specification combines Elastic Net, Random Forest, and XGBoost learners for nuisance-function estimation. Robust standard errors are reported in parentheses. $^{*}p<0.10$, $^{**}p<0.05$, and $^{***}p<0.01$.
}

\end{table}

\begin{table}[htbp]
\centering
\caption{Treatment Effects on Beliefs: DML on Selected Sample}
\label{tab:dml_beliefs_selected}

\scalebox{0.8}{
\begin{tabular}{l cc}
\toprule
Outcome & Motivational (S1) & Occupational (S2) \\
\midrule

Arrival rate (10 applications)
& \begin{tabular}[c]{@{}c@{}}-0.560\\(0.974)\end{tabular}
& \begin{tabular}[c]{@{}c@{}}0.590\\(0.973)\end{tabular}
\\[1ex]

Arrival rate (20 applications)
& \begin{tabular}[c]{@{}c@{}}-0.980\\(1.114)\end{tabular}
& \begin{tabular}[c]{@{}c@{}}-0.980\\(1.107)\end{tabular}
\\[1ex]

Arrival rate (30 applications)
& \begin{tabular}[c]{@{}c@{}}-0.520\\(1.231)\end{tabular}
& \begin{tabular}[c]{@{}c@{}}0.250\\(1.247)\end{tabular}
\\[1ex]

Perceived return to effort
& \begin{tabular}[c]{@{}c@{}}-0.030\\(0.056)\end{tabular}
& \begin{tabular}[c]{@{}c@{}}-0.040\\(0.057)\end{tabular}
\\[1ex]

Wage offer belief (low)
& \begin{tabular}[c]{@{}c@{}}-0.930\\(1.356)\end{tabular}
& \begin{tabular}[c]{@{}c@{}}-0.350\\(1.380)\end{tabular}
\\[1ex]

Wage offer belief (mid)
& \begin{tabular}[c]{@{}c@{}}0.450\\(1.180)\end{tabular}
& \begin{tabular}[c]{@{}c@{}}0.770\\(1.177)\end{tabular}
\\[1ex]

Wage offer belief (high)
& \begin{tabular}[c]{@{}c@{}}0.690\\(1.154)\end{tabular}
& \begin{tabular}[c]{@{}c@{}}1.090\\(1.172)\end{tabular}
\\

\bottomrule
\end{tabular}
}

\vspace{0.75em}

\parbox{\textwidth}{\footnotesize
\textit{Notes:} This table reports Double Machine Learning (DML) estimates of treatment effects on beliefs among survey respondents. Under response invariance, DML on the selected sample identifies $\theta_R = \mathbb{E}[Y(1)-Y(0)\mid R=1]$. A stacked ensemble consisting of Elastic Net, Random Forest, and XGBoost learners is used for nuisance-function estimation. Robust standard errors are reported in parentheses. $^{*}p<0.10$, $^{**}p<0.05$, and $^{***}p<0.01$.
}

\end{table}

\FloatBarrier

\section{Complementary Results on Heterogeneity of the Treatment Effects}\label{sec:Heterogeneity_GML_employment}

\begin{table}[htbp]
\centering
\caption{Method Comparison: Occupational --- All}
\label{tab:mcomp_occ_all}

\textbf{Panel A: Standard Learners}
\vspace{1em}

\scalebox{0.8}{
\begin{tabular}{l cc cc cc}
\toprule
& \multicolumn{2}{c}{ENet+CB}
& \multicolumn{2}{c}{NNet+CB}
& \multicolumn{2}{c}{Ranger+CB} \\
\cmidrule(lr){2-3}\cmidrule(lr){4-5}\cmidrule(lr){6-7}
Outcome & ATE & HET & ATE & HET & ATE & HET \\
\midrule

12-month reemployment
& \begin{tabular}[c]{@{}c@{}}\textbf{0.004}\\{\footnotesize(-0.005,\,0.013)}\\{\footnotesize[0.475]}\end{tabular}
& \begin{tabular}[c]{@{}c@{}}\textbf{0.224}\\{\footnotesize(-0.042,\,0.516)}\\{\footnotesize[0.079]}\end{tabular}
& \begin{tabular}[c]{@{}c@{}}0.004\\{\footnotesize(-0.004,\,0.013)}\\{\footnotesize[0.404]}\end{tabular}
& \begin{tabular}[c]{@{}c@{}}0.023\\{\footnotesize(-0.048,\,0.095)}\\{\footnotesize[0.294]}\end{tabular}
& \begin{tabular}[c]{@{}c@{}}0.005\\{\footnotesize(-0.003,\,0.014)}\\{\footnotesize[0.316]}\end{tabular}
& \begin{tabular}[c]{@{}c@{}}0.027\\{\footnotesize(-0.052,\,0.105)}\\{\footnotesize[0.289]}\end{tabular}
\\[1.5ex]

Relative wage
& \begin{tabular}[c]{@{}c@{}}0.004\\{\footnotesize(-0.003,\,0.011)}\\{\footnotesize[0.391]}\end{tabular}
& \begin{tabular}[c]{@{}c@{}}-0.387\\{\footnotesize(-0.927,\,0.181)}\\{\footnotesize[0.869]}\end{tabular}
& \begin{tabular}[c]{@{}c@{}}\textbf{0.004}\\{\footnotesize(-0.003,\,0.011)}\\{\footnotesize[0.380]}\end{tabular}
& \begin{tabular}[c]{@{}c@{}}\textbf{-0.027}\\{\footnotesize(-0.187,\,0.128)}\\{\footnotesize[0.614]}\end{tabular}
& \begin{tabular}[c]{@{}c@{}}0.004\\{\footnotesize(-0.003,\,0.011)}\\{\footnotesize[0.361]}\end{tabular}
& \begin{tabular}[c]{@{}c@{}}-0.185\\{\footnotesize(-0.345,\,-0.025)}\\{\footnotesize[0.973]}\end{tabular}
\\

\bottomrule
\end{tabular}
}

\vspace{1em}

\textbf{Panel B: X-Learners}
\vspace{1em}

\scalebox{0.8}{
\begin{tabular}{l cc cc}
\toprule
& \multicolumn{2}{c}{X-Learn (Ranger)}
& \multicolumn{2}{c}{X-Learn (ENet)} \\
\cmidrule(lr){2-3}\cmidrule(lr){4-5}
Outcome & ATE & HET & ATE & HET \\
\midrule

12-month reemployment
& \begin{tabular}[c]{@{}c@{}}0.005\\{\footnotesize(-0.004,\,0.014)}\\{\footnotesize[0.333]}\end{tabular}
& \begin{tabular}[c]{@{}c@{}}0.062\\{\footnotesize(-0.057,\,0.181)}\\{\footnotesize[0.196]}\end{tabular}
& \begin{tabular}[c]{@{}c@{}}0.004\\{\footnotesize(-0.005,\,0.013)}\\{\footnotesize[0.434]}\end{tabular}
& \begin{tabular}[c]{@{}c@{}}0.069\\{\footnotesize(-0.040,\,0.179)}\\{\footnotesize[0.148]}\end{tabular}
\\[1.5ex]

Relative wage
& \begin{tabular}[c]{@{}c@{}}0.004\\{\footnotesize(-0.003,\,0.011)}\\{\footnotesize[0.344]}\end{tabular}
& \begin{tabular}[c]{@{}c@{}}-0.198\\{\footnotesize(-0.431,\,0.035)}\\{\footnotesize[0.919]}\end{tabular}
& \begin{tabular}[c]{@{}c@{}}0.004\\{\footnotesize(-0.003,\,0.011)}\\{\footnotesize[0.389]}\end{tabular}
& \begin{tabular}[c]{@{}c@{}}-0.188\\{\footnotesize(-0.466,\,0.100)}\\{\footnotesize[0.858]}\end{tabular}
\\

\bottomrule
\end{tabular}
}

\vspace{0.75em}

\parbox{\textwidth}{\footnotesize
\textit{Notes:} Cross-fitting with 50 sample splits. +CB denotes the causal boosting procedure (Algorithm~6.2 in \cite{chernozhukov2018generic}). X-Learn denotes the X-learner of \citeauthor{kunzel2019metalearners} (\citeyear{kunzel2019metalearners}). HET reports one-sided heterogeneity tests. Numbers in parentheses are 95\% confidence intervals; numbers in brackets are p-values. Boldface indicates the best-performing specification.
}

\end{table}
\begin{table}[htbp]
\centering
\caption{Method Comparison: Occupational --- Non-Pessimists}
\label{tab:mcomp_occ_opt}

\textbf{Panel A: Standard Learners}
\vspace{1em}

\scalebox{0.8}{
\begin{tabular}{l cc cc cc}
\toprule
& \multicolumn{2}{c}{ENet+CB}
& \multicolumn{2}{c}{NNet+CB}
& \multicolumn{2}{c}{Ranger+CB} \\
\cmidrule(lr){2-3}\cmidrule(lr){4-5}\cmidrule(lr){6-7}
Outcome & ATE & HET & ATE & HET & ATE & HET \\
\midrule

12-month reemployment
& \begin{tabular}[c]{@{}c@{}}\textbf{-0.002}\\{\footnotesize(-0.013,\,0.008)}\\{\footnotesize[0.716]}\end{tabular}
& \begin{tabular}[c]{@{}c@{}}\textbf{0.076}\\{\footnotesize(-0.285,\,0.449)}\\{\footnotesize[0.374]}\end{tabular}
& \begin{tabular}[c]{@{}c@{}}-0.003\\{\footnotesize(-0.013,\,0.007)}\\{\footnotesize[0.641]}\end{tabular}
& \begin{tabular}[c]{@{}c@{}}0.037\\{\footnotesize(-0.047,\,0.124)}\\{\footnotesize[0.235]}\end{tabular}
& \begin{tabular}[c]{@{}c@{}}-0.002\\{\footnotesize(-0.013,\,0.008)}\\{\footnotesize[0.705]}\end{tabular}
& \begin{tabular}[c]{@{}c@{}}0.015\\{\footnotesize(-0.079,\,0.110)}\\{\footnotesize[0.396]}\end{tabular}
\\[1.5ex]

Relative wage
& \begin{tabular}[c]{@{}c@{}}0.005\\{\footnotesize(-0.003,\,0.013)}\\{\footnotesize[0.342]}\end{tabular}
& \begin{tabular}[c]{@{}c@{}}-0.972\\{\footnotesize(-1.581,\,-0.364)}\\{\footnotesize[0.996]}\end{tabular}
& \begin{tabular}[c]{@{}c@{}}0.005\\{\footnotesize(-0.003,\,0.013)}\\{\footnotesize[0.334]}\end{tabular}
& \begin{tabular}[c]{@{}c@{}}-0.064\\{\footnotesize(-0.258,\,0.141)}\\{\footnotesize[0.701]}\end{tabular}
& \begin{tabular}[c]{@{}c@{}}0.004\\{\footnotesize(-0.004,\,0.012)}\\{\footnotesize[0.368]}\end{tabular}
& \begin{tabular}[c]{@{}c@{}}-0.076\\{\footnotesize(-0.271,\,0.129)}\\{\footnotesize[0.733]}\end{tabular}
\\

\bottomrule
\end{tabular}
}

\vspace{1em}

\textbf{Panel B: X-Learners}
\vspace{1em}

\scalebox{0.8}{
\begin{tabular}{l cc cc}
\toprule
& \multicolumn{2}{c}{X-Learn (Ranger)}
& \multicolumn{2}{c}{X-Learn (ENet)} \\
\cmidrule(lr){2-3}\cmidrule(lr){4-5}
Outcome & ATE & HET & ATE & HET \\
\midrule

12-month reemployment
& \begin{tabular}[c]{@{}c@{}}-0.003\\{\footnotesize(-0.013,\,0.008)}\\{\footnotesize[0.691]}\end{tabular}
& \begin{tabular}[c]{@{}c@{}}0.027\\{\footnotesize(-0.111,\,0.164)}\\{\footnotesize[0.376]}\end{tabular}
& \begin{tabular}[c]{@{}c@{}}-0.002\\{\footnotesize(-0.013,\,0.008)}\\{\footnotesize[0.730]}\end{tabular}
& \begin{tabular}[c]{@{}c@{}}0.069\\{\footnotesize(-0.052,\,0.189)}\\{\footnotesize[0.175]}\end{tabular}
\\[1.5ex]

Relative wage
& \begin{tabular}[c]{@{}c@{}}\textbf{0.004}\\{\footnotesize(-0.004,\,0.012)}\\{\footnotesize[0.380]}\end{tabular}
& \begin{tabular}[c]{@{}c@{}}\textbf{0.031}\\{\footnotesize(-0.237,\,0.294)}\\{\footnotesize[0.422]}\end{tabular}
& \begin{tabular}[c]{@{}c@{}}0.005\\{\footnotesize(-0.004,\,0.013)}\\{\footnotesize[0.354]}\end{tabular}
& \begin{tabular}[c]{@{}c@{}}-0.022\\{\footnotesize(-0.308,\,0.277)}\\{\footnotesize[0.547]}\end{tabular}
\\

\bottomrule
\end{tabular}
}

\vspace{0.75em}

\parbox{\textwidth}{\footnotesize
\textit{Notes:} Cross-fitting with 50 sample splits. +CB denotes the causal boosting procedure (Algorithm~6.2 in \cite{chernozhukov2018generic}). X-Learn denotes the X-learner of \citeauthor{kunzel2019metalearners} (\citeyear{kunzel2019metalearners}). HET reports one-sided heterogeneity tests. Numbers in parentheses are 95\% confidence intervals; numbers in brackets are p-values. Boldface indicates the best-performing specification.
}

\end{table}
\begin{table}[htbp]
\centering
\caption{Method Comparison: Occupational --- Pessimists}
\label{tab:mcomp_occ_pess}

\textbf{Panel A: Standard Learners}
\vspace{1em}

\scalebox{0.8}{
\begin{tabular}{l cc cc cc}
\toprule
& \multicolumn{2}{c}{ENet+CB}
& \multicolumn{2}{c}{NNet+CB}
& \multicolumn{2}{c}{Ranger+CB} \\
\cmidrule(lr){2-3}\cmidrule(lr){4-5}\cmidrule(lr){6-7}
Outcome & ATE & HET & ATE & HET & ATE & HET \\
\midrule

12-month reemployment
& \begin{tabular}[c]{@{}c@{}}0.021\\{\footnotesize(0.005,\,0.037)}\\{\footnotesize[0.035]}\end{tabular}
& \begin{tabular}[c]{@{}c@{}}-0.355\\{\footnotesize(-0.892,\,0.186)}\\{\footnotesize[0.863]}\end{tabular}
& \begin{tabular}[c]{@{}c@{}}0.018\\{\footnotesize(0.002,\,0.034)}\\{\footnotesize[0.064]}\end{tabular}
& \begin{tabular}[c]{@{}c@{}}0.059\\{\footnotesize(-0.043,\,0.160)}\\{\footnotesize[0.171]}\end{tabular}
& \begin{tabular}[c]{@{}c@{}}0.017\\{\footnotesize(0.001,\,0.033)}\\{\footnotesize[0.084]}\end{tabular}
& \begin{tabular}[c]{@{}c@{}}-0.029\\{\footnotesize(-0.169,\,0.110)}\\{\footnotesize[0.635]}\end{tabular}
\\[1.5ex]

Relative wage
& \begin{tabular}[c]{@{}c@{}}0.001\\{\footnotesize(-0.016,\,0.017)}\\{\footnotesize[0.957]}\end{tabular}
& \begin{tabular}[c]{@{}c@{}}-0.349\\{\footnotesize(-0.898,\,0.168)}\\{\footnotesize[0.863]}\end{tabular}
& \begin{tabular}[c]{@{}c@{}}0.000\\{\footnotesize(-0.016,\,0.017)}\\{\footnotesize[0.964]}\end{tabular}
& \begin{tabular}[c]{@{}c@{}}-0.132\\{\footnotesize(-0.441,\,0.169)}\\{\footnotesize[0.768]}\end{tabular}
& \begin{tabular}[c]{@{}c@{}}0.001\\{\footnotesize(-0.016,\,0.017)}\\{\footnotesize[0.961]}\end{tabular}
& \begin{tabular}[c]{@{}c@{}}-0.222\\{\footnotesize(-0.476,\,0.034)}\\{\footnotesize[0.923]}\end{tabular}
\\

\bottomrule
\end{tabular}
}

\vspace{1em}

\textbf{Panel B: X-Learners}
\vspace{1em}

\scalebox{0.8}{
\begin{tabular}{l cc cc}
\toprule
& \multicolumn{2}{c}{X-Learn (Ranger)}
& \multicolumn{2}{c}{X-Learn (ENet)} \\
\cmidrule(lr){2-3}\cmidrule(lr){4-5}
Outcome & ATE & HET & ATE & HET \\
\midrule

12-month reemployment
& \begin{tabular}[c]{@{}c@{}}\textbf{0.017}\\{\footnotesize(0.001,\,0.033)}\\{\footnotesize[0.089]}\end{tabular}
& \begin{tabular}[c]{@{}c@{}}\textbf{0.113}\\{\footnotesize(-0.097,\,0.322)}\\{\footnotesize[0.188]}\end{tabular}
& \begin{tabular}[c]{@{}c@{}}0.020\\{\footnotesize(0.004,\,0.036)}\\{\footnotesize[0.042]}\end{tabular}
& \begin{tabular}[c]{@{}c@{}}0.066\\{\footnotesize(-0.175,\,0.307)}\\{\footnotesize[0.327]}\end{tabular}
\\[1.5ex]

Relative wage
& \begin{tabular}[c]{@{}c@{}}\textbf{0.000}\\{\footnotesize(-0.016,\,0.017)}\\{\footnotesize[0.962]}\end{tabular}
& \begin{tabular}[c]{@{}c@{}}\textbf{-0.105}\\{\footnotesize(-0.454,\,0.238)}\\{\footnotesize[0.692]}\end{tabular}
& \begin{tabular}[c]{@{}c@{}}-0.000\\{\footnotesize(-0.017,\,0.016)}\\{\footnotesize[0.970]}\end{tabular}
& \begin{tabular}[c]{@{}c@{}}-0.243\\{\footnotesize(-0.663,\,0.163)}\\{\footnotesize[0.833]}\end{tabular}
\\

\bottomrule
\end{tabular}
}

\vspace{0.75em}

\parbox{\textwidth}{\footnotesize
\textit{Notes:} Cross-fitting with 50 sample splits. +CB denotes the causal boosting procedure (Algorithm~6.2 in \cite{chernozhukov2018generic}). X-Learn denotes the X-learner of \citeauthor{kunzel2019metalearners} (\citeyear{kunzel2019metalearners}). HET reports one-sided heterogeneity tests. Numbers in parentheses are 95\% confidence intervals; numbers in brackets are p-values. Boldface indicates the best-performing specification.
}

\end{table}
\begin{table}[htbp]
\centering
\caption{Method Comparison: Motivational --- Pessimists}
\label{tab:mcomp_mot}

\textbf{Panel A: Standard Learners}
\vspace{1em}

\scalebox{0.8}{
\begin{tabular}{l cc cc cc}
\toprule
& \multicolumn{2}{c}{ENet+CB}
& \multicolumn{2}{c}{NNet+CB}
& \multicolumn{2}{c}{Ranger+CB} \\
\cmidrule(lr){2-3}\cmidrule(lr){4-5}\cmidrule(lr){6-7}
Outcome & ATE & HET & ATE & HET & ATE & HET \\
\midrule

12-month reemployment
& \begin{tabular}[c]{@{}c@{}}\textbf{0.022}\\{\footnotesize(0.006,\,0.038)}\\{\footnotesize[0.027]}\end{tabular}
& \begin{tabular}[c]{@{}c@{}}\textbf{0.102}\\{\footnotesize(-0.350,\,0.548)}\\{\footnotesize[0.354]}\end{tabular}
& \begin{tabular}[c]{@{}c@{}}0.018\\{\footnotesize(0.002,\,0.034)}\\{\footnotesize[0.065]}\end{tabular}
& \begin{tabular}[c]{@{}c@{}}0.026\\{\footnotesize(-0.079,\,0.130)}\\{\footnotesize[0.340]}\end{tabular}
& \begin{tabular}[c]{@{}c@{}}0.020\\{\footnotesize(0.004,\,0.036)}\\{\footnotesize[0.037]}\end{tabular}
& \begin{tabular}[c]{@{}c@{}}-0.086\\{\footnotesize(-0.221,\,0.048)}\\{\footnotesize[0.854]}\end{tabular}
\\[1.5ex]

Relative wage
& \begin{tabular}[c]{@{}c@{}}\textbf{0.010}\\{\footnotesize(-0.007,\,0.027)}\\{\footnotesize[0.330]}\end{tabular}
& \begin{tabular}[c]{@{}c@{}}\textbf{0.114}\\{\footnotesize(-0.331,\,0.523)}\\{\footnotesize[0.331]}\end{tabular}
& \begin{tabular}[c]{@{}c@{}}0.009\\{\footnotesize(-0.008,\,0.026)}\\{\footnotesize[0.384]}\end{tabular}
& \begin{tabular}[c]{@{}c@{}}-0.007\\{\footnotesize(-0.261,\,0.239)}\\{\footnotesize[0.517]}\end{tabular}
& \begin{tabular}[c]{@{}c@{}}0.009\\{\footnotesize(-0.008,\,0.026)}\\{\footnotesize[0.406]}\end{tabular}
& \begin{tabular}[c]{@{}c@{}}-0.059\\{\footnotesize(-0.292,\,0.168)}\\{\footnotesize[0.660]}\end{tabular}
\\

\bottomrule
\end{tabular}
}

\vspace{1em}

\textbf{Panel B: X-Learners}
\vspace{1em}

\scalebox{0.8}{
\begin{tabular}{l cc cc}
\toprule
& \multicolumn{2}{c}{X-Learn (Ranger)}
& \multicolumn{2}{c}{X-Learn (ENet)} \\
\cmidrule(lr){2-3}\cmidrule(lr){4-5}
Outcome & ATE & HET & ATE & HET \\
\midrule

12-month reemployment
& \begin{tabular}[c]{@{}c@{}}0.020\\{\footnotesize(0.004,\,0.036)}\\{\footnotesize[0.042]}\end{tabular}
& \begin{tabular}[c]{@{}c@{}}-0.095\\{\footnotesize(-0.300,\,0.108)}\\{\footnotesize[0.778]}\end{tabular}
& \begin{tabular}[c]{@{}c@{}}0.021\\{\footnotesize(0.005,\,0.037)}\\{\footnotesize[0.035]}\end{tabular}
& \begin{tabular}[c]{@{}c@{}}-0.013\\{\footnotesize(-0.245,\,0.219)}\\{\footnotesize[0.538]}\end{tabular}
\\[1.5ex]

Relative wage
& \begin{tabular}[c]{@{}c@{}}0.009\\{\footnotesize(-0.008,\,0.026)}\\{\footnotesize[0.384]}\end{tabular}
& \begin{tabular}[c]{@{}c@{}}-0.119\\{\footnotesize(-0.444,\,0.219)}\\{\footnotesize[0.714]}\end{tabular}
& \begin{tabular}[c]{@{}c@{}}0.009\\{\footnotesize(-0.008,\,0.026)}\\{\footnotesize[0.373]}\end{tabular}
& \begin{tabular}[c]{@{}c@{}}-0.026\\{\footnotesize(-0.386,\,0.344)}\\{\footnotesize[0.547]}\end{tabular}
\\

\bottomrule
\end{tabular}
}

\vspace{0.75em}

\parbox{\textwidth}{\footnotesize
\textit{Notes:} Cross-fitting with 50 sample splits. +CB denotes the causal boosting procedure (Algorithm~6.2 in \cite{chernozhukov2018generic}). X-Learn denotes the X-learner of \citeauthor{kunzel2019metalearners} (\citeyear{kunzel2019metalearners}). HET reports one-sided heterogeneity tests. Numbers in parentheses are 95\% confidence intervals; numbers in brackets are p-values. Boldface indicates the best-performing specification.
}

\end{table}

\FloatBarrier
\subsection{On the Occupational Treatment - Optimists}

\begin{table}[htbp]
\centering
\caption{GATES (Sorted): Occupational --- Non-Pessimists}
\label{tab:gates_occ_opt}

\scalebox{0.8}{
\begin{tabular}{l ccccc cc}
\toprule
Outcome & G1 & G2 & G3 & G4 & G5 & G5$-$G1 & Equality p-value \\
\midrule

12-month reemployment
& \begin{tabular}[c]{@{}c@{}}-0.017\\{\footnotesize[-0.050,\;0.016]}\end{tabular}
& \begin{tabular}[c]{@{}c@{}}-0.008\\{\footnotesize[-0.041,\;0.025]}\end{tabular}
& \begin{tabular}[c]{@{}c@{}}-0.003\\{\footnotesize[-0.036,\;0.031]}\end{tabular}
& \begin{tabular}[c]{@{}c@{}}0.003\\{\footnotesize[-0.030,\;0.037]}\end{tabular}
& \begin{tabular}[c]{@{}c@{}}0.013\\{\footnotesize[-0.021,\;0.046]}\end{tabular}
& 0.031
& 0.439
\\[1.5ex]

Relative wage
& \begin{tabular}[c]{@{}c@{}}-0.008\\{\footnotesize[-0.033,\;0.018]}\end{tabular}
& \begin{tabular}[c]{@{}c@{}}-0.001\\{\footnotesize[-0.026,\;0.024]}\end{tabular}
& \begin{tabular}[c]{@{}c@{}}0.004\\{\footnotesize[-0.022,\;0.029]}\end{tabular}
& \begin{tabular}[c]{@{}c@{}}0.008\\{\footnotesize[-0.016,\;0.033]}\end{tabular}
& \begin{tabular}[c]{@{}c@{}}0.016\\{\footnotesize[-0.008,\;0.042]}\end{tabular}
& 0.026
& 0.553
\\

\bottomrule
\end{tabular}
}

\vspace{0.75em}

\parbox{\textwidth}{\footnotesize
\textit{Notes:} Estimates are obtained using 5-fold cross-fitting with 50 sample splits. GATES groups are sorted from the least affected group (G1) to the most affected group (G5). Numbers in brackets report 90\% confidence intervals. The final column reports the p-value for the equality test across groups.
}

\end{table}
\begin{table}[htbp]
\centering
\caption{CLAN: 12-Month Reemployment --- Non-Pessimists (ENet+CB)}
\label{tab:clan_occ_opt_retour_12m}

\scalebox{0.8}{
\begin{tabular}{l ccccc}
\toprule
Variable & Most Affected & Least Affected & Difference & 90\% CI & p-value \\
\midrule

Unemp. dur.
& 21.675
& 22.367
& -0.692
& [-1.412,\;0.028]
& 0.114 \\

Job belief
& 56.204
& 61.482
& -5.278***
& [-5.940,\;-4.616]
& $<$0.001 \\

Hiring occ.1
& 30.493
& 31.122
& -0.629***
& [-1.023,\;-0.234]
& 0.009 \\

Hiring occ.2
& 27.890
& 20.651
& 7.239***
& [6.838,\;7.641]
& $<$0.001 \\

Hiring occ.3
& 25.610
& 18.351
& 7.259***
& [6.863,\;7.654]
& $<$0.001 \\

Rel. occ.1
& 0.354
& 0.486
& -0.133***
& [-0.137,\;-0.128]
& $<$0.001 \\

Search effort
& 10.228
& 16.214
& -5.986***
& [-6.228,\;-5.744]
& $<$0.001 \\

Low effort
& 0.693
& 0.294
& 0.399***
& [0.386,\;0.412]
& $<$0.001 \\

Low wage prior
& 0.510
& 0.395
& 0.115***
& [0.101,\;0.129]
& $<$0.001 \\

Reserv. wage (k)
& 2.014
& 2.025
& -0.010
& [-0.028,\;0.007]
& 0.314 \\

Tension
& -0.001
& 0.492
& -0.493***
& [-0.509,\;-0.477]
& $<$0.001 \\

\bottomrule
\end{tabular}
}

\vspace{0.75em}

\parbox{\textwidth}{\footnotesize
\textit{Notes:} CLAN estimates compare the 20\% of individuals with the highest predicted treatment effects (Most Affected) to the 20\% with the lowest predicted treatment effects (Least Affected). Estimates are obtained using ENet+CB with cross-fitting. Numbers in brackets report 90\% confidence intervals. Significance levels are denoted by *** $p<0.01$, ** $p<0.05$, and * $p<0.10$.
}

\end{table}
\begin{table}[htbp]
\centering
\caption{Belief ATEs by 12-Month Reemployment GATES: Non-Pessimists}
\label{tab:bel_gates_occ_opt}

\scalebox{0.8}{
\begin{tabular}{l c c c c c c}
\toprule
Belief & G1 & G2 & G3 & G4 & G5 & Equality p-value \\
\midrule

$\Delta$ Perc. return
& \begin{tabular}[c]{@{}c@{}}0.110\\{\footnotesize[-0.1,\;0.3]}\end{tabular}
& \begin{tabular}[c]{@{}c@{}}0.120\\{\footnotesize[-0.1,\;0.3]}\end{tabular}
& \begin{tabular}[c]{@{}c@{}}0.090\\{\footnotesize[-0.1,\;0.3]}\end{tabular}
& \begin{tabular}[c]{@{}c@{}}0.070\\{\footnotesize[-0.1,\;0.2]}\end{tabular}
& \begin{tabular}[c]{@{}c@{}}0.040\\{\footnotesize[-0.1,\;0.2]}\end{tabular}
& 0.600
\\[1.5ex]

$\Delta$ Wage1
& \begin{tabular}[c]{@{}c@{}}-1.910\\{\footnotesize[-5.7,\;1.9]}\end{tabular}
& \begin{tabular}[c]{@{}c@{}}0.160\\{\footnotesize[-3.4,\;3.8]}\end{tabular}
& \begin{tabular}[c]{@{}c@{}}0.570\\{\footnotesize[-3.1,\;4.0]}\end{tabular}
& \begin{tabular}[c]{@{}c@{}}1.460\\{\footnotesize[-2.0,\;4.9]}\end{tabular}
& \begin{tabular}[c]{@{}c@{}}0.180\\{\footnotesize[-3.2,\;3.5]}\end{tabular}
& 0.462
\\[1.5ex]

$\Delta$ Wage2
& \begin{tabular}[c]{@{}c@{}}-1.210\\{\footnotesize[-4.5,\;2.0]}\end{tabular}
& \begin{tabular}[c]{@{}c@{}}-0.430\\{\footnotesize[-3.7,\;2.9]}\end{tabular}
& \begin{tabular}[c]{@{}c@{}}-0.080\\{\footnotesize[-3.2,\;3.0]}\end{tabular}
& \begin{tabular}[c]{@{}c@{}}1.020\\{\footnotesize[-2.0,\;4.0]}\end{tabular}
& \begin{tabular}[c]{@{}c@{}}0.090\\{\footnotesize[-2.7,\;2.9]}\end{tabular}
& 0.502
\\[1.5ex]

$\Delta$ Wage3
& \begin{tabular}[c]{@{}c@{}}0.150\\{\footnotesize[-3.6,\;3.7]}\end{tabular}
& \begin{tabular}[c]{@{}c@{}}0.110\\{\footnotesize[-3.5,\;3.6]}\end{tabular}
& \begin{tabular}[c]{@{}c@{}}-0.700\\{\footnotesize[-4.3,\;2.6]}\end{tabular}
& \begin{tabular}[c]{@{}c@{}}-1.300\\{\footnotesize[-4.7,\;2.1]}\end{tabular}
& \begin{tabular}[c]{@{}c@{}}-1.680\\{\footnotesize[-4.6,\;1.2]}\end{tabular}
& 0.621
\\[1.5ex]

$\Delta$ Marg. ret.
& \begin{tabular}[c]{@{}c@{}}-1.150\\{\footnotesize[-4.5,\;2.1]}\end{tabular}
& \begin{tabular}[c]{@{}c@{}}0.260\\{\footnotesize[-2.9,\;3.3]}\end{tabular}
& \begin{tabular}[c]{@{}c@{}}1.220\\{\footnotesize[-1.7,\;4.1]}\end{tabular}
& \begin{tabular}[c]{@{}c@{}}1.240\\{\footnotesize[-1.7,\;4.1]}\end{tabular}
& \begin{tabular}[c]{@{}c@{}}-0.290\\{\footnotesize[-3.1,\;2.6]}\end{tabular}
& 0.458
\\[1.5ex]

$\Delta$ Hiring1
& \begin{tabular}[c]{@{}c@{}}-0.590\\{\footnotesize[-2.1,\;1.0]}\end{tabular}
& \begin{tabular}[c]{@{}c@{}}-0.260\\{\footnotesize[-1.8,\;1.3]}\end{tabular}
& \begin{tabular}[c]{@{}c@{}}0.330\\{\footnotesize[-1.2,\;1.8]}\end{tabular}
& \begin{tabular}[c]{@{}c@{}}0.460\\{\footnotesize[-1.0,\;1.9]}\end{tabular}
& \begin{tabular}[c]{@{}c@{}}0.150\\{\footnotesize[-1.3,\;1.6]}\end{tabular}
& 0.500
\\[1.5ex]

$\Delta$ Hiring2
& \begin{tabular}[c]{@{}c@{}}0.320\\{\footnotesize[-1.3,\;2.0]}\end{tabular}
& \begin{tabular}[c]{@{}c@{}}-0.430\\{\footnotesize[-2.0,\;1.2]}\end{tabular}
& \begin{tabular}[c]{@{}c@{}}-0.240\\{\footnotesize[-1.8,\;1.3]}\end{tabular}
& \begin{tabular}[c]{@{}c@{}}0.410\\{\footnotesize[-1.1,\;1.9]}\end{tabular}
& \begin{tabular}[c]{@{}c@{}}0.520\\{\footnotesize[-0.9,\;1.9]}\end{tabular}
& 0.405
\\[1.5ex]

$\Delta$ Rel. occ.
& \begin{tabular}[c]{@{}c@{}}-0.020\\{\footnotesize[-0.0,\;0.0]}\end{tabular}
& \begin{tabular}[c]{@{}c@{}}-0.010\\{\footnotesize[-0.0,\;0.0]}\end{tabular}
& \begin{tabular}[c]{@{}c@{}}-0.000\\{\footnotesize[-0.0,\;0.0]}\end{tabular}
& \begin{tabular}[c]{@{}c@{}}-0.010\\{\footnotesize[-0.0,\;0.0]}\end{tabular}
& \begin{tabular}[c]{@{}c@{}}-0.000\\{\footnotesize[-0.0,\;0.0]}\end{tabular}
& 0.312
\\

\bottomrule
\end{tabular}
}

\vspace{0.75em}

\parbox{\textwidth}{\footnotesize
\textit{Notes:} Average treatment effects on belief outcomes by GATES group, where groups are formed using predicted treatment effects on 12-month reemployment. Estimates are obtained using cross-fitting with 50 sample splits. G1 denotes the least affected group and G5 the most affected group. Numbers in brackets report confidence intervals. The final column reports the median p-value from the interaction test between treatment status and GATES group.
}

\end{table}

\FloatBarrier

\subsection{On the Occupational Treatment - Pessimists}

\begin{table}[htbp]
\centering
\caption{GATES (Sorted): Occupational --- Pessimists}
\label{tab:gates_occ_pess}

\scalebox{0.8}{
\begin{tabular}{l ccccc cc}
\toprule
Outcome & G1 & G2 & G3 & G4 & G5 & G5$-$G1 & Equality p-value \\
\midrule

12-month reemployment
& \begin{tabular}[c]{@{}c@{}}-0.010\\{\footnotesize[-0.061,\;0.042]}\end{tabular}
& \begin{tabular}[c]{@{}c@{}}0.004\\{\footnotesize[-0.046,\;0.055]}\end{tabular}
& \begin{tabular}[c]{@{}c@{}}0.017\\{\footnotesize[-0.034,\;0.068]}\end{tabular}
& \begin{tabular}[c]{@{}c@{}}0.029\\{\footnotesize[-0.022,\;0.079]}\end{tabular}
& \begin{tabular}[c]{@{}c@{}}0.042\\{\footnotesize[-0.009,\;0.093]}\end{tabular}
& 0.050
& 0.298
\\[1.5ex]

Relative wage
& \begin{tabular}[c]{@{}c@{}}-0.030\\{\footnotesize[-0.085,\;0.024]}\end{tabular}
& \begin{tabular}[c]{@{}c@{}}-0.012\\{\footnotesize[-0.064,\;0.038]}\end{tabular}
& \begin{tabular}[c]{@{}c@{}}0.003\\{\footnotesize[-0.050,\;0.056]}\end{tabular}
& \begin{tabular}[c]{@{}c@{}}0.017\\{\footnotesize[-0.035,\;0.066]}\end{tabular}
& \begin{tabular}[c]{@{}c@{}}0.030\\{\footnotesize[-0.020,\;0.080]}\end{tabular}
& 0.060
& 0.608
\\

\bottomrule
\end{tabular}
}

\vspace{0.75em}

\parbox{\textwidth}{\footnotesize
\textit{Notes:} Estimates are obtained using 5-fold cross-fitting with 50 sample splits. GATES groups are sorted from the least affected group (G1) to the most affected group (G5). Numbers in brackets report 90\% confidence intervals. The final column reports the p-value for the equality test across groups.
}

\end{table}
\begin{table}[htbp]
\centering
\caption{CLAN: 12-Month Reemployment --- Pessimists (X-Learn (Ranger))}
\label{tab:clan_occ_pess_retour_12m}

\scalebox{0.8}{
\begin{tabular}{l c c c c c}
\toprule
Variable & Most Affected & Least Affected & Difference & 90\% CI & p-value \\
\midrule

Unemp. dur.
& 26.659 & 26.262 & 0.397 & [-1.144,\;1.938] & 0.672 \\

Job belief
& 15.302 & 16.135 & -0.833** & [-1.420,\;-0.247] & 0.019 \\

Hiring occ.1
& 21.797 & 19.658 & 2.139*** & [1.404,\;2.875] & $<0.001$ \\

Hiring occ.2
& 15.437 & 13.989 & 1.448*** & [0.757,\;2.139] & $<0.001$ \\

Hiring occ.3
& 13.396 & 12.772 & 0.624 & [-0.029,\;1.277] & 0.116 \\

Rel. occ.1
& 0.427 & 0.388 & 0.040*** & [0.031,\;0.048] & $<0.001$ \\

Search effort
& 11.825 & 11.454 & 0.371 & [-0.096,\;0.838] & 0.191 \\

Low effort
& 0.569 & 0.593 & -0.024 & [-0.049,\;0.001] & 0.120 \\

Low wage prior
& 0.448 & 0.479 & -0.031** & [-0.056,\;-0.005] & 0.048 \\

Reserv. wage (k)
& 1.942 & 1.925 & 0.018 & [-0.014,\;0.049] & 0.359 \\

Tension
& 0.025 & 0.356 & -0.331*** & [-0.362,\;-0.301] & $<0.001$ \\

\bottomrule
\end{tabular}
}

\vspace{0.75em}

\parbox{\textwidth}{\footnotesize
\textit{Notes:} Characteristics of individuals in the top and bottom 20\% of the predicted treatment-effect distribution. “Most Affected” refers to individuals with the highest predicted treatment effects and “Least Affected” to those with the lowest predicted treatment effects. Differences are computed as Most Affected minus Least Affected. Confidence intervals are 90\%. Estimates are obtained using X-Learn (Ranger) with cross-fitting.
}

\end{table}

\begin{table}[htbp]
\centering
\caption{Belief ATEs by 12-Month Reemployment GATES: Pessimists}
\label{tab:bel_gates_occ_pess}

\scalebox{0.8}{
\begin{tabular}{l c c c c c c}
\toprule
Belief & G1 & G2 & G3 & G4 & G5 & Equality p-value \\
\midrule

$\Delta$ Perc. return
& \begin{tabular}[c]{@{}c@{}}-0.040\\{\footnotesize[-0.3,\;0.2]}\end{tabular}
& \begin{tabular}[c]{@{}c@{}}0.010\\{\footnotesize[-0.2,\;0.3]}\end{tabular}
& \begin{tabular}[c]{@{}c@{}}-0.110\\{\footnotesize[-0.4,\;0.1]}\end{tabular}
& \begin{tabular}[c]{@{}c@{}}-0.190\\{\footnotesize[-0.4,\;0.0]}\end{tabular}
& \begin{tabular}[c]{@{}c@{}}-0.290*\\{\footnotesize[-0.5,\;-0.0]}\end{tabular}
& 0.324
\\[1.5ex]

$\Delta$ Wage1
& \begin{tabular}[c]{@{}c@{}}3.690\\{\footnotesize[-3.0,\;10.5]}\end{tabular}
& \begin{tabular}[c]{@{}c@{}}1.410\\{\footnotesize[-4.3,\;7.1]}\end{tabular}
& \begin{tabular}[c]{@{}c@{}}-0.840\\{\footnotesize[-6.3,\;4.5]}\end{tabular}
& \begin{tabular}[c]{@{}c@{}}-2.770\\{\footnotesize[-8.1,\;2.6]}\end{tabular}
& \begin{tabular}[c]{@{}c@{}}-4.110\\{\footnotesize[-9.9,\;1.6]}\end{tabular}
& 0.319
\\[1.5ex]

$\Delta$ Wage2
& \begin{tabular}[c]{@{}c@{}}0.910\\{\footnotesize[-4.9,\;6.8]}\end{tabular}
& \begin{tabular}[c]{@{}c@{}}1.140\\{\footnotesize[-4.3,\;6.5]}\end{tabular}
& \begin{tabular}[c]{@{}c@{}}0.630\\{\footnotesize[-4.2,\;5.4]}\end{tabular}
& \begin{tabular}[c]{@{}c@{}}-0.290\\{\footnotesize[-5.1,\;4.3]}\end{tabular}
& \begin{tabular}[c]{@{}c@{}}-2.150\\{\footnotesize[-7.0,\;2.8]}\end{tabular}
& 0.592
\\[1.5ex]

$\Delta$ Wage3
& \begin{tabular}[c]{@{}c@{}}2.370\\{\footnotesize[-3.6,\;8.1]}\end{tabular}
& \begin{tabular}[c]{@{}c@{}}1.510\\{\footnotesize[-4.1,\;7.2]}\end{tabular}
& \begin{tabular}[c]{@{}c@{}}1.410\\{\footnotesize[-3.8,\;6.3]}\end{tabular}
& \begin{tabular}[c]{@{}c@{}}0.640\\{\footnotesize[-4.0,\;5.4]}\end{tabular}
& \begin{tabular}[c]{@{}c@{}}0.800\\{\footnotesize[-4.5,\;6.1]}\end{tabular}
& 0.762
\\[1.5ex]

$\Delta$ Marg. ret.
& \begin{tabular}[c]{@{}c@{}}2.350\\{\footnotesize[-3.1,\;7.8]}\end{tabular}
& \begin{tabular}[c]{@{}c@{}}-0.140\\{\footnotesize[-5.2,\;4.8]}\end{tabular}
& \begin{tabular}[c]{@{}c@{}}-0.810\\{\footnotesize[-5.7,\;4.1]}\end{tabular}
& \begin{tabular}[c]{@{}c@{}}-0.500\\{\footnotesize[-5.1,\;3.5]}\end{tabular}
& \begin{tabular}[c]{@{}c@{}}-3.810\\{\footnotesize[-8.2,\;0.8]}\end{tabular}
& 0.345
\\[1.5ex]

$\Delta$ Hiring1
& \begin{tabular}[c]{@{}c@{}}1.000\\{\footnotesize[-1.6,\;3.6]}\end{tabular}
& \begin{tabular}[c]{@{}c@{}}-1.180\\{\footnotesize[-3.7,\;1.3]}\end{tabular}
& \begin{tabular}[c]{@{}c@{}}-0.820\\{\footnotesize[-3.3,\;1.6]}\end{tabular}
& \begin{tabular}[c]{@{}c@{}}0.000\\{\footnotesize[-2.4,\;2.4]}\end{tabular}
& \begin{tabular}[c]{@{}c@{}}-0.620\\{\footnotesize[-3.3,\;2.0]}\end{tabular}
& 0.542
\\[1.5ex]

$\Delta$ Hiring2
& \begin{tabular}[c]{@{}c@{}}1.010\\{\footnotesize[-1.8,\;3.7]}\end{tabular}
& \begin{tabular}[c]{@{}c@{}}-0.560\\{\footnotesize[-3.0,\;1.8]}\end{tabular}
& \begin{tabular}[c]{@{}c@{}}-0.510\\{\footnotesize[-2.8,\;1.7]}\end{tabular}
& \begin{tabular}[c]{@{}c@{}}-0.340\\{\footnotesize[-2.6,\;1.9]}\end{tabular}
& \begin{tabular}[c]{@{}c@{}}-1.630\\{\footnotesize[-4.0,\;0.7]}\end{tabular}
& 0.501
\\[1.5ex]

$\Delta$ Rel. occ.
& \begin{tabular}[c]{@{}c@{}}-0.020\\{\footnotesize[-0.1,\;0.0]}\end{tabular}
& \begin{tabular}[c]{@{}c@{}}-0.030\\{\footnotesize[-0.1,\;0.0]}\end{tabular}
& \begin{tabular}[c]{@{}c@{}}-0.020\\{\footnotesize[-0.1,\;0.0]}\end{tabular}
& \begin{tabular}[c]{@{}c@{}}-0.010\\{\footnotesize[-0.0,\;0.0]}\end{tabular}
& \begin{tabular}[c]{@{}c@{}}0.020\\{\footnotesize[-0.0,\;0.1]}\end{tabular}
& 0.296
\\

\bottomrule
\end{tabular}
}

\vspace{0.75em}

\parbox{\textwidth}{\footnotesize
\textit{Notes:} Average treatment effects on belief outcomes by GATES group, where groups are formed using predicted treatment effects on 12-month reemployment. Estimates are obtained using cross-fitting with 50 sample splits. G1 denotes the least affected group and G5 the most affected group. Numbers in brackets report confidence intervals. The final column reports the median p-value from the interaction test between treatment status and GATES group.
}

\end{table}

\FloatBarrier

\subsection{On the Motivational Treatment }

\begin{table}[htbp]
\centering
\caption{GATES (Sorted): Motivational --- Pessimists}
\label{tab:gates_mot}

\scalebox{0.8}{
\begin{tabular}{l c c c c c c c}
\toprule
Outcome & G1 & G2 & G3 & G4 & G5 & G5$-$G1 & p-value \\
\midrule

12-month reemployment
& \begin{tabular}[c]{@{}c@{}}-0.009\\{\footnotesize[-0.060,\;0.042]}\end{tabular}
& \begin{tabular}[c]{@{}c@{}}0.009\\{\footnotesize[-0.042,\;0.060]}\end{tabular}
& \begin{tabular}[c]{@{}c@{}}0.020\\{\footnotesize[-0.031,\;0.071]}\end{tabular}
& \begin{tabular}[c]{@{}c@{}}0.035\\{\footnotesize[-0.015,\;0.086]}\end{tabular}
& \begin{tabular}[c]{@{}c@{}}\textbf{0.053*}\\{\footnotesize[0.002,\;0.103]}\end{tabular}
& 0.062
& 0.441
\\[1.5ex]

Relative wage
& \begin{tabular}[c]{@{}c@{}}-0.013\\{\footnotesize[-0.070,\;0.042]}\end{tabular}
& \begin{tabular}[c]{@{}c@{}}-0.001\\{\footnotesize[-0.055,\;0.052]}\end{tabular}
& \begin{tabular}[c]{@{}c@{}}0.012\\{\footnotesize[-0.041,\;0.065]}\end{tabular}
& \begin{tabular}[c]{@{}c@{}}0.019\\{\footnotesize[-0.036,\;0.073]}\end{tabular}
& \begin{tabular}[c]{@{}c@{}}0.032\\{\footnotesize[-0.018,\;0.084]}\end{tabular}
& 0.046
& 0.645
\\

\bottomrule
\end{tabular}
}

\vspace{0.75em}

\parbox{\textwidth}{\footnotesize
\textit{Notes:} Group Average Treatment Effects (GATES) based on treatment-effect rankings from the preferred learner. Individuals are sorted into five equally sized groups (G1--G5) according to predicted treatment effects, with G5 containing the highest predicted gains. Estimates are averaged over 50 sample splits using 5-fold cross-fitting. Brackets report 90\% confidence intervals. The final two columns report the difference between the top and bottom groups (G5$-$G1) and the associated test p-value.
}

\end{table}
\begin{table}[htbp]
\centering
\caption{CLAN: 12-Month Reemployment --- Pessimists (ENet+CB)}
\label{tab:clan_mot_retour_12m}

\scalebox{0.8}{
\begin{tabular}{l c c c c c}
\toprule
Variable & Most Affected & Least Affected & Difference & 90\% CI & p-value \\
\midrule

Unemp. dur.
& 24.732 & 26.124 & -1.392 & [-2.868,\;0.084] & 0.121 \\

Job belief
& 17.712 & 14.578 & 3.134*** & [2.543,\;3.725] & $<0.001$ \\

Hiring occ.1
& 27.353 & 15.387 & 11.965*** & [11.255,\;12.676] & $<0.001$ \\

Hiring occ.2
& 21.148 & 9.594 & 11.553*** & [10.872,\;12.235] & $<0.001$ \\

Hiring occ.3
& 19.195 & 9.048 & 10.147*** & [9.493,\;10.802] & $<0.001$ \\

Rel. occ.1
& 0.379 & 0.474 & -0.095*** & [-0.105,\;-0.086] & $<0.001$ \\

Search effort
& 7.653 & 15.101 & -7.448*** & [-7.869,\;-7.028] & $<0.001$ \\

Low effort
& 0.876 & 0.333 & 0.543*** & [0.522,\;0.564] & $<0.001$ \\

Low wage prior
& 0.491 & 0.462 & 0.029* & [0.004,\;0.055] & 0.061 \\

Reserv. wage (k)
& 1.887 & 1.979 & -0.092*** & [-0.123,\;-0.060] & $<0.001$ \\

Tension
& -0.128 & 0.579 & -0.707*** & [-0.737,\;-0.678] & $<0.001$ \\

\bottomrule
\end{tabular}
}

\vspace{0.75em}

\parbox{\textwidth}{\footnotesize
\textit{Notes:} Characteristics of individuals in the top and bottom 20\% of the predicted treatment-effect distribution. “Most Affected” refers to individuals with the highest predicted treatment effects and “Least Affected” to those with the lowest predicted treatment effects. Differences are computed as Most Affected minus Least Affected. Confidence intervals are 90\%. Estimates are obtained using ENet+CB with cross-fitting.
}

\end{table}
\begin{table}[htbp]
\centering
\caption{CLAN: Relative Wage --- Pessimists (ENet+CB)}
\label{tab:clan_mot_rel_wage}

\scalebox{0.8}{
\begin{tabular}{l c c c c c}
\toprule
Variable & Most Affected & Least Affected & Difference & 90\% CI & p-value \\
\midrule

Unemp. dur.
& 19.447 & 26.912 & -7.465 & [-10.153,\;-4.778] & $<0.001$ \\

Job belief
& 19.208 & 17.941 & 1.268 & [0.069,\;2.466] & 0.082 \\

Hiring occ.1
& 22.524 & 21.456 & 1.068 & [-0.482,\;2.619] & 0.257 \\

Hiring occ.2
& 17.557 & 17.316 & 0.241 & [-1.243,\;1.725] & 0.789 \\

Hiring occ.3
& 16.098 & 15.718 & 0.380 & [-1.047,\;1.807] & 0.661 \\

Rel. occ.1
& 0.411 & 0.410 & 0.001 & [-0.017,\;0.019] & 0.936 \\

Search effort
& 19.758 & 6.554 & 13.205 & [12.502,\;13.907] & $<0.001$ \\

Low effort
& 0.101 & 0.950 & -0.849 & [-0.875,\;-0.822] & $<0.001$ \\

Low wage prior
& 0.908 & 0.037 & 0.871 & [0.846,\;0.895] & $<0.001$ \\

Reserv. wage (k)
& 1.878 & 1.956 & -0.078 & [-0.126,\;-0.031] & 0.007 \\

Tension
& 0.459 & 0.020 & 0.439 & [0.380,\;0.498] & $<0.001$ \\

\bottomrule
\end{tabular}
}

\vspace{0.75em}

\parbox{\textwidth}{\footnotesize
\textit{Notes:} Characteristics of individuals in the top and bottom 20\% of the predicted treatment-effect distribution. “Most Affected” refers to individuals with the highest predicted treatment effects and “Least Affected” to those with the lowest predicted treatment effects. Differences are computed as Most Affected minus Least Affected. Confidence intervals are 90\%. Estimates are obtained using cross-fitting.
}

\end{table}
\begin{table}[htbp]
\centering
\caption{Belief ATEs by 12-Month Reemployment GATES: All Pessimists}
\label{tab:bel_gates_mot}

\scalebox{0.8}{
\begin{tabular}{l c c c c c c}
\toprule
Belief & G1 & G2 & G3 & G4 & G5 & Equality p-value \\
\midrule

$\Delta$ Perc. return
& \begin{tabular}[c]{@{}c@{}}-0.060\\{\footnotesize[-0.300,\;0.200]}\end{tabular}
& \begin{tabular}[c]{@{}c@{}}-0.100\\{\footnotesize[-0.300,\;0.100]}\end{tabular}
& \begin{tabular}[c]{@{}c@{}}0.020\\{\footnotesize[-0.200,\;0.200]}\end{tabular}
& \begin{tabular}[c]{@{}c@{}}-0.020\\{\footnotesize[-0.300,\;0.200]}\end{tabular}
& \begin{tabular}[c]{@{}c@{}}-0.260\\{\footnotesize[-0.500,\;0.000]}\end{tabular}
& 0.421
\\[1.5ex]

$\Delta$ Wage1
& \begin{tabular}[c]{@{}c@{}}-2.110\\{\footnotesize[-8.000,\;3.700]}\end{tabular}
& \begin{tabular}[c]{@{}c@{}}-0.670\\{\footnotesize[-6.500,\;5.000]}\end{tabular}
& \begin{tabular}[c]{@{}c@{}}-1.150\\{\footnotesize[-6.900,\;4.700]}\end{tabular}
& \begin{tabular}[c]{@{}c@{}}-0.760\\{\footnotesize[-6.700,\;5.400]}\end{tabular}
& \begin{tabular}[c]{@{}c@{}}-2.700\\{\footnotesize[-8.400,\;3.000]}\end{tabular}
& 0.807
\\[1.5ex]

$\Delta$ Wage2
& \begin{tabular}[c]{@{}c@{}}-1.870\\{\footnotesize[-7.200,\;3.400]}\end{tabular}
& \begin{tabular}[c]{@{}c@{}}-1.760\\{\footnotesize[-6.800,\;3.200]}\end{tabular}
& \begin{tabular}[c]{@{}c@{}}0.620\\{\footnotesize[-4.400,\;5.600]}\end{tabular}
& \begin{tabular}[c]{@{}c@{}}1.610\\{\footnotesize[-3.300,\;6.600]}\end{tabular}
& \begin{tabular}[c]{@{}c@{}}0.760\\{\footnotesize[-4.300,\;5.800]}\end{tabular}
& 0.579
\\[1.5ex]

$\Delta$ Wage3
& \begin{tabular}[c]{@{}c@{}}2.900\\{\footnotesize[-2.900,\;8.600]}\end{tabular}
& \begin{tabular}[c]{@{}c@{}}1.210\\{\footnotesize[-3.900,\;6.300]}\end{tabular}
& \begin{tabular}[c]{@{}c@{}}1.700\\{\footnotesize[-3.400,\;6.800]}\end{tabular}
& \begin{tabular}[c]{@{}c@{}}2.060\\{\footnotesize[-3.100,\;7.300]}\end{tabular}
& \begin{tabular}[c]{@{}c@{}}0.730\\{\footnotesize[-4.500,\;5.700]}\end{tabular}
& 0.741
\\[1.5ex]

$\Delta$ Marg. ret.
& \begin{tabular}[c]{@{}c@{}}2.500\\{\footnotesize[-2.400,\;7.300]}\end{tabular}
& \begin{tabular}[c]{@{}c@{}}0.030\\{\footnotesize[-4.400,\;4.300]}\end{tabular}
& \begin{tabular}[c]{@{}c@{}}-0.830\\{\footnotesize[-5.400,\;3.900]}\end{tabular}
& \begin{tabular}[c]{@{}c@{}}-0.700\\{\footnotesize[-5.400,\;4.200]}\end{tabular}
& \begin{tabular}[c]{@{}c@{}}-4.480\\{\footnotesize[-10.000,\;1.000]}\end{tabular}
& 0.300
\\[1.5ex]

$\Delta$ Hiring1
& \begin{tabular}[c]{@{}c@{}}2.940\\{\footnotesize[0.400,\;5.400]}\end{tabular}
& \begin{tabular}[c]{@{}c@{}}-0.010\\{\footnotesize[-2.500,\;2.400]}\end{tabular}
& \begin{tabular}[c]{@{}c@{}}-0.660\\{\footnotesize[-3.100,\;1.800]}\end{tabular}
& \begin{tabular}[c]{@{}c@{}}0.220\\{\footnotesize[-2.400,\;2.700]}\end{tabular}
& \begin{tabular}[c]{@{}c@{}}-0.520\\{\footnotesize[-3.300,\;2.200]}\end{tabular}
& 0.276
\\[1.5ex]

$\Delta$ Hiring2
& \begin{tabular}[c]{@{}c@{}}0.210\\{\footnotesize[-1.700,\;2.300]}\end{tabular}
& \begin{tabular}[c]{@{}c@{}}-0.320\\{\footnotesize[-2.600,\;2.100]}\end{tabular}
& \begin{tabular}[c]{@{}c@{}}0.150\\{\footnotesize[-2.100,\;2.400]}\end{tabular}
& \begin{tabular}[c]{@{}c@{}}0.160\\{\footnotesize[-2.200,\;2.700]}\end{tabular}
& \begin{tabular}[c]{@{}c@{}}-0.990\\{\footnotesize[-3.500,\;1.500]}\end{tabular}
& 0.496
\\[1.5ex]

%$\Delta$ Rel. occ.
%& \begin{tabular}[c]{@{}c@{}}0.030\\{\footnotesize[0.000,\;0.100]}\end{tabular}
%& \begin{tabular}[c]{@{}c@{}}0.010\\{\footnotesize[0.000,\;0.000]}\end{tabular}
%& \begin{tabular}[c]{@{}c@{}}0.000\\{\footnotesize[0.000,\;0.000]}\end{tabular}
%& \begin{tabular}[c]{@{}c@{}}0.010\\{\footnotesize[0.000,\;0.000]}\end{tabular}
%& \begin{tabular}[c]{@{}c@{}}0.010\\{\footnotesize[0.000,\;0.000]}\end{tabular}
%& 0.544
%\\

\bottomrule
\end{tabular}
}

\vspace{0.75em}

\parbox{\textwidth}{\footnotesize
\textit{Notes:} Average treatment effects on beliefs by GATES group based on predicted treatment effects for 12-month reemployment. Individuals are sorted into five equally sized groups (G1--G5), with G5 containing those with the highest predicted treatment effects. Brackets report confidence intervals. Estimates are averaged over 50 sample splits using cross-fitting. The final column reports the median p-value from the ANOVA test of equality across groups.
}

\end{table}

\FloatBarrier

\section{Complementary Results on Beliefs updating}\label{sec:Beliefs_updating_GML}

\subsection{Average effects}\label{sec:average_beliefs}

\begin{table}[htbp]
\centering
\caption{OLS: Occupational Treatment (S2) on Belief Updates --- All Population}
\label{tab:ols_beliefs_occ_all}

\scalebox{0.8}{
\begin{tabular}{l c c c c}
\toprule
Outcome & Coefficient & Robust SE & N & Control Mean \\
\midrule

$\Delta$ Wage belief (q1)
& -0.137
& (0.663)
& 9,161
& 7.161 \\

$\Delta$ Wage belief (median)
& 0.186
& (0.615)
& 8,830
& 8.395 \\

$\Delta$ Wage belief (q3)
& -0.118
& (0.672)
& 7,924
& 7.345 \\

$\Delta$ Hiring probability (occ.~1)
& 0.545*
& (0.325)
& 8,556
& -3.005 \\

$\Delta$ Hiring probability (occ.~2)
& 0.277
& (0.313)
& 8,368
& -2.399 \\

$\Delta$ Relative occupation belief
& -0.003
& (0.004)
& 7,718
& 0.005 \\

$\Delta$ Marginal return
& 0.669
& (0.617)
& 7,471
& 0.108 \\

$\Delta$ Perceived return
& 0.065*
& (0.033)
& 6,895
& -0.054 \\

\bottomrule
\end{tabular}
}

\vspace{0.75em}

\parbox{\textwidth}{\footnotesize
\textit{Notes:} Ordinary least squares estimates of the effect of the occupational treatment (S2) relative to the control group on belief updating. Belief updates are defined as follow-up minus baseline beliefs and are observed only for survey respondents. All specifications control for the corresponding baseline belief and occupation fixed effects. Robust HC1 standard errors are reported in parentheses. $^{*}p<0.10$, $^{**}p<0.05$, and $^{***}p<0.01$.
}

\end{table}

\begin{table}[htbp]
\centering
\caption{OLS Treatment Effects on Belief Updates --- Pessimists}
\label{tab:ols_beliefs}

\scalebox{0.8}{
\begin{tabular}{l cc cc}
\toprule
& \multicolumn{2}{c}{S1 (Motivational)}
& \multicolumn{2}{c}{S2 (Occupational)} \\
\cmidrule(lr){2-3}\cmidrule(lr){4-5}
Outcome & Coefficient & Robust SE & Coefficient & Robust SE \\
\midrule

$\Delta$ Wage belief (q1)
& -0.785 & (1.354)
& -0.037 & (1.357) \\

$\Delta$ Wage belief (median)
& 0.168 & (1.250)
& 0.702 & (1.238) \\

$\Delta$ Wage belief (q3)
& 1.639 & (1.243)
& 1.960 & (1.256) \\

$\Delta$ Hiring probability (occ.~1)
& 0.675 & (0.643)
& -0.069 & (0.643) \\

$\Delta$ Hiring probability (occ.~2)
& 0.498 & (0.575)
& 0.078 & (0.584) \\

$\Delta$ Relative occupation belief
& 0.012 & (0.009)
& -0.003 & (0.009) \\

$\Delta$ Marginal return
& -0.758 & (1.186)
& -1.040 & (1.213) \\

$\Delta$ Perceived return
& -0.046 & (0.061)
& -0.062 & (0.061) \\

\midrule
N (respondents) & \multicolumn{4}{c}{3,029} \\
\bottomrule
\end{tabular}
}

\vspace{0.75em}

\parbox{\textwidth}{\footnotesize
\textit{Notes:} Ordinary least squares estimates of the effects of the motivational treatment (S1) and occupational treatment (S2) relative to the control group on belief updating among pessimists. Belief updates are defined as follow-up minus baseline beliefs and are observed only for survey respondents. All specifications control for the corresponding baseline belief and occupation fixed effects. Robust HC1 standard errors are reported in parentheses. $^{*}p<0.10$, $^{**}p<0.05$, and $^{***}p<0.01$.
}

\end{table}

\subsection{Heterogeneous effects using Generic Machine Learning}\label{sec:het_beliefs}

\begin{table}[htbp]
\centering
\caption{GML: Treatment Effects on Belief Updates}
\label{tab:gml_beliefs}

\scalebox{0.8}{
\begin{tabular}{l cccc cccc}
\toprule
& \multicolumn{4}{c}{Occupational (S2 vs Control, All)}
& \multicolumn{4}{c}{Motivational (S1 vs Control, Pessimists)} \\
\cmidrule(lr){2-5}\cmidrule(lr){6-9}
Outcome
& ATE & p-value & HET & p-value
& ATE & p-value & HET & p-value \\
\midrule

$\Delta$ Perceived return
& 0.040 & 0.915 & 0.278 & $<0.001$
& -0.082 & 0.316 & 0.495 & $<0.001$ \\

$\Delta$ Wage belief (occ.~1)
& -0.768 & 0.577 & 0.024 & 0.031
& -1.585 & 0.377 & 0.032 & 0.007 \\

$\Delta$ Wage belief (occ.~2)
& -0.284 & 0.990 & -0.007$^{\dagger}$ & 0.029
& -0.226 & 0.913 & 0.098 & 0.106 \\

$\Delta$ Wage belief (occ.~3)
& -0.485 & 0.910 & 0.001 & 0.021
& 1.367 & 0.311 & 0.046 & 0.296 \\

$\Delta$ Hiring probability (occ.~1)
& 0.095 & 0.797 & 0.010 & $<0.001$
& 0.867 & 0.536 & 0.122 & 0.073 \\

$\Delta$ Hiring probability (occ.~2)
& -0.019 & 0.557 & 0.054 & 0.074
& 0.015 & 0.966 & 0.006 & 0.002 \\

$\Delta$ Relative occupation belief
& -0.002** & 0.043 & 0.177 & 0.005
& 0.015* & 0.085 & -0.013$^{\dagger}$ & 0.091 \\

\bottomrule
\end{tabular}
}

\vspace{0.75em}

\parbox{\textwidth}{\footnotesize
\textit{Notes:} Best Linear Predictor (BLP) estimates from the generalized machine learning (GML) procedure. For each outcome, the preferred learner is selected as the specification with the highest median heterogeneity coefficient across sample splits. ATE denotes the median average treatment effect across 100 sample splits. HET denotes the heterogeneity coefficient ($\beta_2$) from the best-performing split. Belief updates are defined as follow-up minus baseline beliefs and are observed only for survey respondents. Statistical significance of ATE estimates is denoted by $^{*}p<0.10$, $^{**}p<0.05$, and $^{***}p<0.01$. The symbol $^{\dagger}$ indicates a negative heterogeneity estimate.
}

\end{table}

\end{document}